\setlist[enumerate,1]{label=(\roman*)}
\newcommand{\cA}{\mathcal{A}}
\newcommand{\cP}{\mathcal{P}}
\title{Bubble Necessity Theorem\thanks{We thank Gadi Barlevy, Gaetano Bloise, 
\'Emilien Gouin-Bonenfant, Nobuhiro Kiyotaki, Keiichiro Kobayashi, Narayana Kocherlakota, Erzo Luttmer, 
Jianjun Miao, Herakles Polemarchakis, Guillaume Rocheteau, Jos\'e Scheinkman, Joseph Stiglitz, the editor (Andrew Atkeson), and three anonymous referees for valuable comments and feedback.}}
\author{Tomohiro Hirano\thanks{Department of Economics, Royal Holloway, University of London. 
\href{mailto:tomohiro.hirano@rhul.ac.uk}{tomohih@gmail.com}.}
\and Alexis Akira Toda\thanks{Department of Economics, Emory University. Email: \href{mailto:alexis.akira.toda@gmail.com}{alexis.akira.toda@gmail.com}.}}
\date{Accepted at \emph{Journal of Political Economy}}
\numberwithin{equation}{section}
\numberwithin{lem}{section}
\numberwithin{prop}{section}
\numberwithin{asmp}{section}
\begin{document}

\maketitle

\begin{abstract}

Asset price bubbles are situations where asset prices exceed the fundamental values defined by the present value of dividends. This paper presents a conceptually new perspective: the \emph{necessity} of bubbles. We establish the Bubble Necessity Theorem in a plausible general class of economic models: with faster long-run economic growth ($G$) than dividend growth ($G_d$) and counterfactual long-run autarky interest rate ($R$) below dividend growth, \emph{all} equilibria are bubbly with non-negligible bubble sizes relative to the economy. This bubble necessity condition naturally arises in economies with sufficiently strong savings motives and multiple factors or sectors with uneven productivity growth.

\medskip

\textbf{Keywords:} bubble, fundamental value, possibility versus necessity.

\medskip

\textbf{JEL codes:} D53, G12.
\end{abstract}

\section{Introduction}

A rational asset price bubble is a situation in which the asset price ($P$) exceeds its fundamental value ($V$) defined by the present value of dividends ($D$). This paper asks whether asset price bubbles \emph{must} arise, that is, the necessity of bubbles. This question is of fundamental importance. Economists have long held the view that bubbles are either not possible in rational equilibrium models or even if they are, a situation in which asset price bubbles occur is a special circumstance and hence fragile.\footnote{This view is summarized well by the abstract of \citet{SantosWoodford1997}: ``Our main results are concerned with nonexistence of asset pricing bubbles in those economies. These results imply that the conditions under which bubbles are possible---including some well-known examples of monetary equilibria---are relatively fragile.''} If bubbles are inevitable, it would challenge the conventional wisdom and economic modeling. In this paper, we establish a theorem showing that there is a general plausible class of economic models with dividend-paying assets in which asset price bubbles arise in \emph{all} equilibria, implying the necessity of bubbles.

Our question of whether asset price bubbles \emph{must} arise (the necessity of bubbles) is conceptually different from whether asset price bubbles \emph{can} arise (the possibility of bubbles). As is well known from the literature on rational bubbles that starts with \citet{Samuelson1958}, the answer to the latter question is affirmative.\footnote{The development in the literature was mainly theoretical until the 2008 financial crisis. Since then, there has been significant progress in analyzing the relationship between financial frictions and bubbles, as well as in policy and quantitative analyses in light of those developments. For reviews of the literature of rational bubbles, see \citet{Miao2014JME}, \citet{MartinVentura2018}, and \citet{HiranoToda2024JME}.} To date, this literature has almost exclusively focused on pure bubbles, namely assets that pay no dividends ($D=0$) and hence are intrinsically worthless ($V=0$). In these models, there always exists an equilibrium in which the asset price equals its fundamental value (\emph{fundamental equilibrium}), which is zero, as well as a bubbly steady state. In many models, there also exist a continuum of bubbly equilibria converging to the fundamental steady state (asymptotically bubbleless equilibria). Therefore bubbles are possible but not inevitable.

We present a conceptually new perspective on thinking about asset price bubbles: their \emph{necessity}. The main result of our paper can be summarized as follows. Let $G$ be the long-run economic growth rate, $G_d$ the long-run dividend growth rate, and $R$ the counterfactual long-run autarky interest rate. When
\begin{equation}
    R<G_d<G \label{eq:necessity}
\end{equation}
holds, we prove that all equilibria feature asset price bubbles with non-negligible bubble sizes relative to the economy (\emph{asymptotically bubbly equilibria}). The intuition for this result is straightforward. If a fundamental equilibrium exists, in the long run the asset price (the present value of dividends) must grow at the same rate of $G_d$. Then the asset price becomes negligible relative to endowments because $G_d<G$ and the equilibrium consumption allocation approaches autarky. With an autarky interest rate $R<G_d$, the present value of dividends (the fundamental value of the asset) becomes infinite, which is of course impossible in equilibrium. Therefore, there exist no fundamental equilibria nor bubbly equilibria that become asymptotically bubbleless, and all equilibria must be asymptotically bubbly.

We emphasize that the bubble necessity condition \eqref{eq:necessity} naturally arises in plausible economic models. Regarding the inequality $R<G_d$, note that $R$ is the \emph{counterfactual autarky} interest rate, not the \emph{actual equilibrium} interest rate. In models with sufficiently strong savings motives (such as stagnant life cycle income profiles, incomplete markets, or financial frictions), it is not difficult to create a low interest rate environment in the absence of trade. To clearly and convincingly show that the other inequality $G_d<G$ is also natural, in \S\ref{sec:example} we present two example economies in closed-form in which land and stock price bubbles occur as the unique equilibrium outcome. In one example, the economy features two sectors with uneven productivity growth. In one sector, labor (human capital) is the primary input for production such as the labor- or knowledge-intensive sectors. In the other, labor and land are inputs such as the land-intensive agricultural or real estate sectors. We show that when the productivity growth rate is lower in the land-intensive sector, then the land price necessarily exhibits a bubble. In another example, competitive firms produce the consumption good using capital and labor as inputs. We show that when the elasticity of substitution between capital and labor is less than one and capital grows fast, for instance, due to firm creation and/or capital-augmenting technological progress, then the price of capital (the stock price) necessarily exhibits a bubble.

Motivated by the examples in \S\ref{sec:example}, we establish the Bubble Necessity Theorem using workhorse models in macro-finance. In \S\ref{sec:necessity}, we provide it in a classical two-period overlapping generations (OLG) endowment economy under minimal assumptions on preferences, endowments, and dividends. In \S\ref{sec:robustness}, we establish the Theorem in the \citet{Diamond1965} OLG model with capital accumulation and infinite-horizon heterogeneous-agent models in the tradition of \citet{Bewley1977}, one with idiosyncratic investment shocks and another with idiosyncratic preference shocks. We show that under some conditions on technological innovations that enhance the overall productivity, the conditions $R<G_d$ and $G_d<G$ are simultaneously satisfied, necessarily generating asset price bubbles. We emphasize that our result of the necessity of bubbles is not merely a theoretical curiosity but economically relevant and can naturally arise in modern macro-finance models.

\paragraph{Related literature}

The theoretical possibility of asset price bubbles in overlapping generations (OLG) models is well known since \citet{Samuelson1958}. In an infinite-horizon two-agent model with fluctuating income, \citet{Bewley1980} shows that an intrinsically useless asset like fiat money can have a positive value if agents are subject to shortsales constraints. \citet{ScheinkmanWeiss1986} extend \citet{Bewley1980}'s model to continuous time with endogenous labor supply and prove the existence of a recursive monetary equilibrium. Following these seminal work, the theoretical literature has studied both necessary and sufficient conditions for the existence of bubbles.

Regarding sufficient conditions, \citet{OkunoZilcha1983} study an OLG model with time-invariant endowments and preferences and show the existence of a Pareto efficient steady state, which is a competitive equilibrium with or without valued fiat money. An immediate corollary is that if the autarky allocation is Pareto inefficient, then there exists a bubbly (monetary) equilibrium. \citet{AiyagariPeled1991} extend this result to a Markov setting (with potentially a linear storage technology) and show that a stationary allocation is Pareto efficient if and only if the matrix of Arrow prices has spectral radius at most 1, as well as its existence. See \citet{BarbieHillebrand2018} and \citet{BloiseCitanna2019} for recent extensions. This literature shows that bubbles tend to arise when there are gains from trade (the autarky allocation is inefficient).

Regarding necessary conditions, \citet{Kocherlakota1992} considers a deterministic economy with infinitely-lived agents subject to shortsales constraints and proves in Proposition 4 that in any bubbly equilibrium, the present value of the aggregate endowment must be infinite. \citet[Theorem 3.3, Corollary 3.4]{SantosWoodford1997} significantly extend this result to an abstract general equilibrium model with incomplete markets and prove the impossibility of bubbles when dividends are non-negligible relative to the aggregate endowment.\footnote{See \citet[\S3.4]{HiranoToda2024JME} for a simple illustration of this result.} These results imply that there is a fundamental difficulty in generating bubbles attached to dividend-paying assets: bubbles can exist in infinite-horizon models only under sufficient financial frictions that prevent agents from capitalizing infinite wealth and some conditions on dividends. In particular,
\begin{enumerate*}
    \item bubbles are nearly impossible in representative-agent models because market clearing forces the agent to hold the entire asset in equilibrium \citep{Kamihigashi1998,MontrucchioPrivileggi2001}, and
    \item bubbles attached to dividend-paying assets are impossible in models with a steady state.
\end{enumerate*}

A crucial difference of our paper from this literature on the possibility of bubbles is that we prove their necessity. In this respect, our paper is related to \citet{Wilson1981} and \citet{Tirole1985}. \citet{Wilson1981} studies an abstract general equilibrium model with infinitely many agents and commodities. Under standard assumptions, he proves in Theorem 1 that an equilibrium with transfer payments exists and that the transfers can be set to zero (so budgets balance exactly) for agents endowed with only finitely many commodities. However, he shows in \S7 through an example that an equilibrium without transfer payments may not exist and notes that an equilibrium with positive transfers could be interpreted as an equilibrium with money. To our knowledge, this is the first example of the nonexistence of fundamental equilibria. Proposition 1(c) of \citet{Tirole1985} recognizes the possibility that bubbles are necessary for equilibrium existence if the interest rate without bubbles is negative in an OLG model with positive population growth and constant rents, which corresponds to $R<G_d=1<G$. Although he provides some intuition, he did not necessarily provide a formal proof.\footnote{See the discussion in \S\ref{subsec:diamond} and \citet[\S5.2]{HiranoToda2024JME} for details.} Relative to these works, our contribution is that we provide a general nonexistence theorem in workhorse models, and the nonexistence applies not only to fundamental equilibria but also to bubbly equilibria that are asymptotically bubbleless.

Pure bubble models ($D=0$) can naturally be interpreted as models of money. In this context, bubbly and fundamental equilibria are synonymous to monetary and non-monetary equilibria. As discussed in the introduction, these models often admit a continuum of equilibria \citep[Theorem 4]{Gale1973}. \citet{Wallace1980} refers to the fact that the monetary steady state is only one point among a continuum as \emph{tenuous}. \citet{Scheinkman1978} introduces arbitrarily small dividends to the pure bubble asset to eliminate the non-monetary steady state with $P=0$, which is sometimes called commodity money refinement; see also \citet[\S2.1]{Brock1990}. However, commodity money refinement only rules out equilibria converging to the non-monetary steady state,\footnote{\citet*[p.~411]{LagosRocheteauWright2017} state ``While [commodity money refinement] rules out equilibria [converging to the non-monetary steady state], there can still exist cyclic, chaotic and stochastic equilibria''.} and it does not consider whether the equilibrium with positive dividends is bubbly. \citet[Theorem 4]{BrockScheinkman1980} and \citet{Scheinkman1980} provide sufficient conditions on preferences and endowments (for instance, $\lim_{x\to 0}xu'(x)>0$ and the old have no endowment) for the nonexistence of monetary equilibria converging to the non-monetary steady state, which \citet{Santos1990} further generalizes. \citet{BalaskoShell1980,BalaskoShell1981_2} study the existence of equilibria in OLG models with and without money. With Cobb-Douglas utility functions, \citet[Proposition 3.3]{BalaskoShell1981_3} show that the equilibrium is unique without money and is one-dimensional (in particular, a continuum) with money. With the gross substitute property, \citet*[Theorem G]{KehoeLevineMas-CollelWoodford1991} show the uniqueness of monetary and non-monetary steady states. Our results are different from these monetary models because we prove the nonexistence of any fundamental equilibria and any bubbly equilibria that are asymptotically bubbleless in economies with dividend-paying assets under minimal assumptions on preferences and endowments.

\section{Definition and characterization of bubbles}\label{sec:prelim}

We define bubbles following the literature. (See \citet[Ch.~5]{BlanchardFischer1989} and \citet[\S 13.6]{Miao2014Dynamics} for textbook treatment.) We consider an infinite-horizon, deterministic economy with a homogeneous good and time indexed by $t=0,1,\dotsc$. Consider an asset with infinite maturity that pays dividend $D_t\ge 0$ and trades at ex-dividend price $P_t$, both in units of the time-$t$ good. In the background, we assume the presence of rational, perfectly competitive investors. Free disposal of the asset implies $P_t\ge 0$.\footnote{\label{fn:P>0}If $P_t<0$, by purchasing one additional share of the asset at time $t$ and immediately disposing of it, an investor can increase consumption at time $t$ by $-P_t>0$ with no cost, which violates individual optimality.} Let $q_t>0$ be the Arrow-Debreu price, \ie, the date-0 price of the consumption good delivered at time $t$, with the normalization $q_0=1$. The absence of arbitrage implies
\begin{equation}
    q_tP_t = q_{t+1}(P_{t+1}+D_{t+1}). \label{eq:noarbitrage}
\end{equation}
Iterating the no-arbitrage condition \eqref{eq:noarbitrage} forward and using $q_0=1$, we obtain
\begin{equation}
    P_0=\sum_{t=1}^T q_tD_t+q_TP_T. \label{eq:P_iter}
\end{equation}
Noting that $P_t\ge 0$, $D_t\ge 0$, and $q_t>0$, the infinite sum of the present value of dividends $V_0\coloneqq \sum_{t=1}^\infty q_tD_t$ exists, which is called the \emph{fundamental value} of the asset. Letting $T\to\infty$ in \eqref{eq:P_iter}, we obtain
\begin{equation}
    P_0=\sum_{t=1}^\infty q_tD_t+\lim_{T\to\infty}q_TP_T=V_0+\lim_{T\to\infty}q_TP_T. \label{eq:P0}
\end{equation}
We say that the \emph{transversality condition} for asset pricing holds if
\begin{equation}
    \lim_{T\to\infty}q_TP_T = 0, \label{eq:TVC}
\end{equation}
in which case the identity \eqref{eq:P0} implies that $P_0=V_0$ and the asset price equals its fundamental value. If 
the transversality condition \eqref{eq:TVC} is violated, or $\lim_{T\to\infty}q_TP_T>0$, then $P_0>V_0$, and we say that the asset contains a \emph{bubble}.

Note that in deterministic economies, for all $t$ we have
\begin{equation}
    P_t=\underbrace{\frac{1}{q_t}\sum_{s=1}^\infty q_{t+s}D_{t+s}}_\text{fundamental value $V_t$}+\underbrace{\frac{1}{q_t}\lim_{T\to\infty} q_TP_T}_\text{bubble component}. \label{eq:Pt}
\end{equation}
Therefore either $P_t=V_t$ for all $t$ or $P_t>V_t$ for all $t$, so the economy is permanently in either the bubbly or the fundamental regime. Thus, a bubble is a permanent overvaluation of an asset, which is a feature of rational expectations.

In general, checking the transversality condition \eqref{eq:TVC} directly could be difficult because it involves $q_T$. The following lemma, which is Proposition 7 of \citet{Montrucchio2004}, provides an equivalent characterization. This lemma, although quite simple, may be of independent interest because it significantly facilitates checking the presence or absence of bubbles. We provide a proof in the main text as it is short and the lemma is useful for the subsequent proofs.

\begin{lem}[Bubble Characterization, \citealp{Montrucchio2004}]\label{lem:bubble}
If $P_t>0$ for all $t$, the asset price exhibits a bubble if and only if $\sum_{t=1}^\infty D_t/P_t<\infty$.
\end{lem}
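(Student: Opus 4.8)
The key identity to exploit is the no-arbitrage relation \eqref{eq:noarbitrage}, which I would rewrite in the form
\begin{equation*}
    \frac{q_{t+1}}{q_t} = \frac{P_t}{P_{t+1}+D_{t+1}}.
\end{equation*}
Since all prices and dividends are positive, taking the product from $t=0$ to $T-1$ telescopes to give an explicit formula for $q_T$ in terms of the $P_t$ and $D_t$, namely $q_T = \prod_{t=1}^{T} P_{t-1}/(P_t+D_t)$ (using $q_0=1$). The plan is to show that the limit $\lim_{T\to\infty} q_T P_T$, which equals $\lim_T \prod_{t=1}^T P_{t-1}/(P_t+D_t) \cdot P_T = P_0 \prod_{t=1}^\infty P_t/(P_t+D_t)$, is strictly positive if and only if the infinite product $\prod_{t=1}^\infty P_t/(P_t+D_t)$ converges to a positive number.

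The core of the argument is then the classical fact that an infinite product $\prod (1+a_t)^{-1}$ with $a_t = D_t/P_t \ge 0$ converges to a strictly positive limit if and only if $\sum_t a_t < \infty$. First I would write $q_T P_T = P_0 \prod_{t=1}^T (1 + D_t/P_t)^{-1}$ and observe that the partial products are positive and nonincreasing in $T$ (each factor lies in $(0,1]$), so the limit exists in $[0, P_0]$. For the ``if'' direction: if $\sum D_t/P_t < \infty$, then using $\log(1+x) \le x$ we get $\sum_{t=1}^T \log(1+D_t/P_t) \le \sum_{t=1}^\infty D_t/P_t < \infty$, so the partial products are bounded below by $P_0 \exp(-\sum_t D_t/P_t) > 0$, hence $\lim q_T P_T > 0$ and a bubble is present. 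For the ``only if'' direction: if $\sum D_t/P_t = \infty$, then since the terms $D_t/P_t$ need not be small I would split into cases — if $D_t/P_t \ge 1$ for infinitely many $t$ the product clearly goes to $0$; otherwise $D_t/P_t \to 0$ along the tail and the elementary bound $\log(1+x) \ge x/2$ for $x \in [0,1]$ gives $\sum \log(1+D_t/P_t) = \infty$, so the product diverges to $0$, meaning $\lim q_T P_T = 0$, i.e.\ the transversality condition \eqref{eq:TVC} holds and there is no bubble.

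I would package the two cases of the ``only if'' direction cleanly by just noting that a nonincreasing sequence of positive partial products converges to a positive limit iff $\sum \log(1+D_t/P_t) < \infty$, and that $\sum \log(1+D_t/P_t) < \infty \iff \sum D_t/P_t < \infty$ because for nonnegative terms the two series converge together (the ratio $\log(1+x)/x$ is bounded between, say, $\log 2$ and $1$ on any bounded interval, and if the terms don't go to zero both series diverge anyway). Combined with \eqref{eq:P0}, $\lim_{T\to\infty} q_T P_T > 0$ is exactly the statement $P_0 > V_0$, i.e.\ the asset exhibits a bubble, which completes the equivalence.

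The only mildly delicate point — the ``main obstacle,'' such as it is — is handling the ``only if'' direction without assuming $D_t/P_t$ is small, since the convenient estimate $\log(1+x)\asymp x$ is only valid near $0$; the case split above (or equivalently the standard infinite-product criterion for nonnegative terms) resolves it in a line. Everything else is the telescoping computation for $q_T$ and a direct appeal to \eqref{eq:P0}, so the proof should be only a few lines, consistent with the paper's remark that it is short.
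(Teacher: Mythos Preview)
Your proposal is correct and follows the same telescoping-product strategy as the paper: both reduce the transversality condition to the behavior of $\prod_{t=1}^T(1+D_t/P_t)$. The paper's treatment of what you call the ``main obstacle'' (the only-if direction) is, however, shorter: rather than splitting into cases and invoking $\log(1+x)\ge x/2$ on $[0,1]$, it simply expands the product and uses the elementary bound $\prod_{t=1}^T(1+D_t/P_t)\ge 1+\sum_{t=1}^T D_t/P_t$, valid for all nonnegative terms regardless of size. Sandwiching the product between $1+\sum D_t/P_t$ and $\exp(\sum D_t/P_t)$ then gives both directions in one line, with no need for a case split. (A minor wording slip: the negation of ``$D_t/P_t\ge 1$ infinitely often'' is ``$D_t/P_t<1$ eventually,'' not ``$D_t/P_t\to 0$''; your bound only needs the former, so the argument is unaffected.)
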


\begin{proof}
Changing $t$ to $t-1$ in the no-arbitrage condition \eqref{eq:noarbitrage} and dividing both sides by $q_tP_t>0$, we obtain $q_{t-1}P_{t-1}/q_tP_t=1+D_t/P_t$. Multiplying from $t=1$ to $t=T$, expanding terms, and using $1+x\le \e^x$, we obtain
\begin{equation*}
    1+\sum_{t=1}^T\frac{D_t}{P_t}\le \frac{q_0P_0}{q_TP_T}=\prod_{t=1}^T\left(1+\frac{D_t}{P_t}\right)\le \exp\left(\sum_{t=1}^T\frac{D_t}{P_t}\right).
\end{equation*}
Letting $T\to \infty$, we have $\lim_{T\to\infty}q_TP_T>0$ if and only if $\sum_{t=1}^\infty D_t/P_t<\infty$.
\end{proof}

If $D_t>0$ infinitely often, then $P_t>0$ by \eqref{eq:Pt} and the assumption is satisfied. Lemma \ref{lem:bubble} implies that there is an asset price bubble if and only if the infinite sum of dividend yields $D_t/P_t$ is finite. Because $\sum_{t=1}^\infty 1/t=\infty$ but $\sum_{t=1}^\infty 1/t^\alpha<\infty$ for any $\alpha>1$, roughly speaking, there is an asset price bubble if the price-dividend ratio $P_t/D_t$ grows faster than linearly.

\section{Bubble necessity in example economies}\label{sec:example}

In this section, to convince the reader of the necessity of asset price bubbles in plausible economic models, we present example economies with unique equilibria in which asset price bubbles necessarily arise. Throughout this section, for tractability we consider two-period overlapping generations (OLG) economies with Cobb-Douglas utility function $U(y,z)=(1-\beta)\log y+\beta\log z$, where $y,z$ denote the consumption when young and old and $\beta\in (0,1)$. The population size is constant at 1 unless otherwise stated.

\subsection{Textbook endowment economy}\label{subsec:example_textbook}

We first consider a textbook endowment economy. The initial old are endowed with a unit supply of an asset with infinite maturity. At time $t$, the young are endowed with $a_t>0$ units of the consumption good, the old none, and the asset pays dividend $D_t>0$. A competitive equilibrium with sequential trading is defined by a sequence $\set{(P_t,x_t,y_t,z_t)}_{t=0}^\infty$ of asset price $P_t$, asset holdings of young $x_t$, and consumption of young and old $(y_t,z_t)$ such that
\begin{enumerate*}
    \item the young maximize utility subject to the budget constraints $y_t+P_tx_t=a_t$ and $z_{t+1}=(P_{t+1}+D_{t+1})x_t$,
    \item commodity market clears: $y_t+z_t=a_t+D_t$, and
    \item asset market clears: $x_t=1$.
\end{enumerate*}
The following proposition provides a necessary and sufficient condition for bubbles.

\begin{prop}\label{prop:textbook}
There exists a unique equilibrium, and the asset price exhibits a bubble if and only if $\sum_{t=1}^\infty D_t/a_t<\infty$.
\end{prop}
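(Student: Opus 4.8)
The strategy is to solve the model explicitly using the Cobb-Douglas structure, derive a closed-form recursion for the asset price, and then apply Lemma \ref{lem:bubble}.

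First, I would solve the young agent's problem. With log utility $U(y,z)=(1-\beta)\log y+\beta\log z$ and the budget constraints $y_t+P_tx_t=a_t$, $z_{t+1}=(P_{t+1}+D_{t+1})x_t$, the agent saves a constant fraction $\beta$ of income, so $P_tx_t=\beta a_t$. Imposing asset market clearing $x_t=1$ gives $P_t=\beta a_t$ for all $t\ge 1$; at $t=0$ the initial old simply consume $P_0+D_0$, and the time-$0$ young satisfy $P_0 = \beta a_0$ as well. This already pins down the entire equilibrium path, establishing existence and uniqueness: $P_t=\beta a_t$, $y_t=(1-\beta)a_t$, $z_t=(P_t+D_t)\cdot 1=\beta a_t+D_t$ (one checks commodity market clearing $y_t+z_t = (1-\beta)a_t+\beta a_t + D_t = a_t+D_t$). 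I should be slightly careful about the initial period and about whether $P_0=\beta a_0$ is forced or free — the budget constraint of the initial young plus asset market clearing at $t=0$ forces it, so there is genuinely a unique equilibrium.

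Second, with $P_t=\beta a_t>0$ for all $t$, the hypothesis of Lemma \ref{lem:bubble} is satisfied, so the asset exhibits a bubble if and only if $\sum_{t=1}^\infty D_t/P_t<\infty$. Since $D_t/P_t = D_t/(\beta a_t)$ and $\beta\in(0,1)$ is a fixed constant, $\sum_{t=1}^\infty D_t/P_t<\infty$ if and only if $\sum_{t=1}^\infty D_t/a_t<\infty$. This gives the stated characterization.

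I do not anticipate a serious obstacle here; the only delicate points are bookkeeping ones. One is making sure the equilibrium is well-defined and unique at $t=0$ (handling the initial old correctly and confirming $P_0$ is determined, not free). Another is checking that $P_t>0$ holds for \emph{all} $t$ including $t=0$ so that Lemma \ref{lem:bubble} applies as stated — this follows from $a_t>0$ and $\beta>0$. A third is verifying that the proposed sequence actually satisfies the no-arbitrage / Euler equation implied by optimality, i.e., that the candidate $\set{(P_t,x_t,y_t,z_t)}$ is an equilibrium and not merely the unique candidate; this is immediate from the first-order condition of the log-utility problem. The heart of the argument is simply that log utility makes the savings rate constant, which trivializes the asset-pricing recursion.
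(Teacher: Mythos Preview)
Your proposal is correct and follows essentially the same approach as the paper: use log utility to get the constant savings rate $\beta$, combine with asset market clearing to obtain $P_t=\beta a_t$, and then invoke Lemma \ref{lem:bubble} via $D_t/P_t=D_t/(\beta a_t)$. The paper's proof is just a terser version of what you wrote, omitting the bookkeeping checks you flagged.
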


\begin{proof}
Due to log utility, the optimal consumption of the young is $y_t=(1-\beta)a_t$. Asset market clearing and the budget constraint of the young imply $P_t=P_tx_t=a_t-y_t=\beta a_t$. Clearly, the equilibrium is unique. Since the dividend yield is $D_t/P_t=D_t/(\beta a_t)$, the claim follows from Lemma \ref{lem:bubble}.
\end{proof}

\subsection{Two-sector growth economy with land}\label{subsec:example_twosector}

The endowment economy in \S\ref{subsec:example_textbook} is arguably stylized. We next provide a micro-foundation by considering a production economy with labor and land.

The initial old are endowed with a unit supply of land, which is durable and non-reproducible. Each period, the young are endowed with one unit of labor and the old none. There are two production technologies with time $t$ production functions given by
\begin{subequations}
    \begin{align}
        F_{1t}(H,X)&=G_1^tH, \label{eq:F1}\\
        F_{2t}(H,X)&=G_2^tH^\alpha X^{1-\alpha}, \label{eq:F2}
    \end{align}
\end{subequations}
where $H,X$ denote the inputs of labor and land and $\alpha\in (0,1)$. Sector 1 uses labor as the primary input and can be interpreted as a labor-intensive industry with total productivity growth rate $G_1$ such as service or knowledge-intensive sectors. Sector 2 uses both labor and land as inputs and can be interpreted as a land-intensive sector such as agriculture or real estate with a Cobb-Douglas production function and productivity growth rate $G_2$.

A competitive equilibrium with sequential trading is defined by a sequence
\begin{equation*}
    \set{(P_t,r_t,w_t,x_t,y_t,z_t,H_{1t},H_{2t})}_{t=0}^\infty
\end{equation*}
of land price $P_t$, land rent $r_t$, wage $w_t$, land holdings of young $x_t$, consumption of young and old $(y_t,z_t)$, and labor allocation $(H_{1t},H_{2t})$ such that,
\begin{enumerate*}
    \item the young maximize utility subject to the budget constraints $y_t+P_tx_t=w_t$ and $z_{t+1}=(P_{t+1}+r_{t+1})x_t$,
    \item firms maximize profits,
    \item commodity market clears: $y_t+z_t=G_1^tH_{1t}+G_2^tH_{2t}^\alpha$,
    \item labor market clears: $H_{1t}+H_{2t}=1$, and
    \item land market clears: $x_t=1$.
\end{enumerate*}
In this model, we can prove the necessity of land bubbles under some conditions on productivity growth rates in the two sectors.

\begin{prop}\label{prop:twosector}
If $G_1>G_2$, then the unique equilibrium land price is $P_t=\beta G_1^t$, and there is a bubble.
\end{prop}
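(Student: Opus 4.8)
The plan is to pin the wage down from the linear sector~1 technology, pin saving down from log utility (which yields the land price), then read off the land rent from sector~2's first-order conditions and apply Lemma~\ref{lem:bubble}.

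\emph{Step 1 (the wage).} In any equilibrium $w_t=G_1^t$. Since $F_{1t}$ is linear in labor with slope $G_1^t$, a wage $w_t<G_1^t$ makes sector-1 labor demand unbounded, so the labor market cannot clear; hence $w_t\ge G_1^t$. If instead $w_t>G_1^t$, sector-1 firms hire no labor, so labor market clearing gives $H_{2t}=1$, and with all land used ($X_{2t}=1$) sector~2's labor first-order condition gives $w_t=\alpha G_2^t$; but $\alpha<1\le (G_1/G_2)^t$ for every $t\ge 0$ forces $\alpha G_2^t<G_1^t$, a contradiction. Therefore $w_t=G_1^t$.

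\emph{Step 2 (the land price and uniqueness).} The young's wealth is $w_t=G_1^t$, and log utility makes them save the constant fraction $\beta$, that is, $P_tx_t=\beta G_1^t$; as land is the only asset, $x_t=1$ gives $P_t=\beta G_1^t$, after which $(y_t,z_t)$ and the labor allocation are determined, so the equilibrium is unique. (That this candidate is indeed an equilibrium follows since the labor split computed next lies in $(0,1)$ and commodity market clearing is implied by constant returns to scale and zero profits.)

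\emph{Step 3 (the rent and the bubble).} With $X_{2t}=1$, sector~2's labor first-order condition $\alpha G_2^tH_{2t}^{\alpha-1}=w_t=G_1^t$ gives $H_{2t}=\alpha^{1/(1-\alpha)}(G_2/G_1)^{t/(1-\alpha)}$, and its land first-order condition gives the dividend $D_t=r_t=(1-\alpha)G_2^tH_{2t}^{\alpha}$. Substituting and collecting exponents, $D_t/P_t=r_t/(\beta G_1^t)$ is a positive constant times $(G_2/G_1)^{t/(1-\alpha)}$; since $G_1>G_2$ this is summable, so $\sum_{t=1}^\infty D_t/P_t<\infty$, and because $P_t=\beta G_1^t>0$ for all $t$, Lemma~\ref{lem:bubble} yields a bubble.

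The only real obstacle I anticipate is the corner argument in Step~1: ruling out equilibria with sector~1 shut down, and confirming that sector~2 is active and absorbs all the land, given that the Cobb-Douglas technology is ill-behaved at $H_{2t}=0$. Once $w_t=G_1^t$ is in hand, the remainder is routine algebra plus the already-proved Lemma~\ref{lem:bubble}.
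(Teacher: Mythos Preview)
Your proposal is correct and follows essentially the same route as the paper's proof: pin down $w_t=G_1^t$ by ruling out the corner $H_{2t}=1$ (you frame it as ``$w_t>G_1^t$ forces $H_{2t}=1$, hence $w_t=\alpha G_2^t<G_1^t$''; the paper frames it as ``$H_{2t}=1$ would make sector-1 profits infinite''), then use log-utility saving to get $P_t=\beta G_1^t$, compute $r_t$, and apply Lemma~\ref{lem:bubble} to the geometrically decaying dividend yield. The only point you flag as a potential gap---that sector~2 is active and absorbs all land---is exactly what the paper dispatches with one clause (``$H_{2t}>0$ by the Inada condition''), so your anticipated obstacle is not a real one.
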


\begin{proof}
In equilibrium, it must be $H_{2t}>0$ by the Inada condition. Therefore profit maximization of Sector 2 implies $w_t=\alpha G_2^tH_{2t}^{\alpha-1}$. If $H_{2t}=1$, then $w_t=\alpha G_2^t<G_1^t$ because $\alpha<1$ and $G_2<G_1$. Then firms in Sector 1 make infinite profits, which is a contradiction. Therefore in equilibrium it must be $H_{2t}\in (0,1)$, and profit maximization of Sector 1 implies $w_t=G_1^t$. Therefore labor in Sector 2 satisfies
\begin{equation*}
    \alpha G_2^tH_{2t}^{\alpha-1}=w_t=G_1^t\iff H_{2t}=\alpha^\frac{1}{1-\alpha}(G_2/G_1)^\frac{t}{1-\alpha}.
\end{equation*}
By profit maximization in Sector 2 and $X=1$, land rent satisfies
\begin{equation*}
    r_t=(1-\alpha)G_2^tH_{2t}^\alpha=(1-\alpha)\alpha^\frac{\alpha}{1-\alpha}G_2^t(G_2/G_1)^\frac{\alpha t}{1-\alpha}.
\end{equation*}
As in Proposition \ref{prop:textbook}, the young consume $y_t=(1-\beta)w_t$, the land price equals $P_t=\beta w_t=\beta G_1^t$, and hence the equilibrium is unique. The dividend yield on land is then
\begin{equation*}
    \frac{r_t}{P_t}=\frac{(1-\alpha)\alpha^\frac{\alpha}{1-\alpha}G_2^t(G_2/G_1)^\frac{\alpha t}{1-\alpha}}{\beta G_1^t}=\frac{(1-\alpha)\alpha^\frac{\alpha}{1-\alpha}}{\beta}(G_2/G_1)^\frac{t}{1-\alpha},
\end{equation*}
which decays geometrically and is summable because $G_2<G_1$ by assumption. Lemma \ref{lem:bubble} implies a land bubble.
\end{proof}

In this example, both the labor income and land price grow at rate $G\coloneqq G_1$, while the land rent grows at rate $G_d \coloneqq G_2(G_2/G_1)^\frac{\alpha}{1-\alpha}$. The condition $G_1>G_2$ in Proposition \ref{prop:twosector} is equivalent to the bubble necessity condition $G>G_d$ discussed in the introduction.\footnote{There is empirical support for $G_1>G_2$. According to \citet[p.~698, Figure 20.1]{Acemoglu2009}, the employment share of agriculture in U.S. has declined from about 80\% to below 5\% over the past two centuries, so the technological growth rate in the ``land'' sector has been lower than the whole economy.}

\subsection{Innovation and stock market bubble}\label{subsec:example_innovation}

In the model of \S\ref{subsec:example_twosector}, the bubble was attached to the sector with slower growth (land-intensive sector). However, bubbles can also arise in a sector or a production factor with faster growth such as the stock market. To illustrate this point, we consider a production economy with capital and labor.

Each period, firms produce the consumption good using the neoclassical production function $F(K,L)$, where $K,L$ denote capital and labor inputs. To simplify the analysis, we exogenously specify aggregate capital and labor at time $t$ as $K_t,L_t$.\footnote{\label{fn:endogenous growth}Because the equilibrium conditions are the same regardless of whether $K_t,L_t$ are exogenous or not, the subsequent argument applies to endogenous growth models as well. This is similar to the fact that Euler equations hold in consumption-based asset pricing models regardless of whether consumption is exogenous or not.} Growth in these variables can be interpreted in various ways: it could be growth in quantity (firm creation or population growth), quality (factor-augmenting technological progress), or a combination thereof. The initial old are endowed with the stock market index (claims to aggregate capital rents) with outstanding shares normalized to 1. Only the young are endowed with labor.

Letting $r_t,w_t$ be the rental and wage rates, profit maximization implies $r_t=F_K(K_t,L_t)$ and $w_t=F_L(K_t,L_t)$. As before, the asset price (stock market index $P_t$) equals aggregate savings, which equals fraction $\beta$ of aggregate labor income: $P_t=\beta w_tL_t$. Dividend equals aggregate rents: $D_t=r_tK_t$. Putting all the pieces together, the dividend yield is given by
\begin{equation}
    \frac{D_t}{P_t}=\frac{1}{\beta}\frac{F_K(K_t,L_t)K_t}{F_L(K_t,L_t)L_t}. \label{eq:divyield}
\end{equation}

For concreteness, suppose capital and labor grow at rates $G_K,G_L$ and that the production function takes the constant elasticity of substitution (CES) form
\begin{equation*}
    F(K,L)=\left(\alpha K^{1-1/\sigma}+(1-\alpha) L^{1-1/\sigma}\right)^\frac{1}{1-1/\sigma},
\end{equation*}
where $\sigma>0$ is the elasticity of substitution and $\alpha\in (0,1)$. Therefore we obtain the following proposition.

\begin{prop}\label{prop:CES}
There is a stock market bubble if and only if
\begin{equation}
    (\sigma-1)(G_K-G_L)<0. \label{eq:CES_bubble}
\end{equation}
\end{prop}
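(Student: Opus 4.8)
The plan is to apply Lemma~\ref{lem:bubble}, which tells us that a bubble is present if and only if the dividend yield sequence $D_t/P_t$ is summable. The first step is to compute the dividend yield explicitly for the CES production function using the formula \eqref{eq:divyield}. For the CES form $F(K,L)=\left(\alpha K^{1-1/\sigma}+(1-\alpha)L^{1-1/\sigma}\right)^{1/(1-1/\sigma)}$, the marginal products are $F_K=\alpha K^{-1/\sigma}F^{1/\sigma}$ and $F_L=(1-\alpha)L^{-1/\sigma}F^{1/\sigma}$, so the capital-to-labor income ratio is
\begin{equation*}
    \frac{F_K(K,L)K}{F_L(K,L)L}=\frac{\alpha K^{1-1/\sigma}}{(1-\alpha)L^{1-1/\sigma}}=\frac{\alpha}{1-\alpha}\left(\frac{K}{L}\right)^{1-1/\sigma}.
\end{equation*}
Substituting $K_t=G_K^t K_0$ and $L_t=G_L^t L_0$ into \eqref{eq:divyield} then gives $D_t/P_t = C\left(G_K/G_L\right)^{(1-1/\sigma)t}$ for a positive constant $C$, so the dividend yield is a geometric sequence with ratio $\left(G_K/G_L\right)^{1-1/\sigma}$.

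The second step is to determine when this geometric series is summable, i.e., when the ratio is strictly less than $1$. Since $G_K/G_L>0$, we have $\left(G_K/G_L\right)^{1-1/\sigma}<1$ if and only if $(1-1/\sigma)\log(G_K/G_L)<0$. The sign of $1-1/\sigma$ is the sign of $\sigma-1$ (as $\sigma>0$), and the sign of $\log(G_K/G_L)$ is the sign of $G_K-G_L$, so the summability condition $(1-1/\sigma)\log(G_K/G_L)<0$ is equivalent to $(\sigma-1)(G_K-G_L)<0$, which is exactly \eqref{eq:CES_bubble}. Applying Lemma~\ref{lem:bubble} (noting $P_t=\beta w_tL_t>0$ for all $t$, and $D_t=r_tK_t>0$ so the price is strictly positive) completes the argument.

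One point worth handling carefully is the boundary/degenerate cases. If $\sigma=1$ (Cobb-Douglas), the exponent $1-1/\sigma$ is zero, the dividend yield is the constant $\alpha/((1-\alpha)\beta)$, the series diverges, and \eqref{eq:CES_bubble} fails — consistent. If $G_K=G_L$, the ratio is again $1$ regardless of $\sigma$, the series diverges, and \eqref{eq:CES_bubble} again fails — consistent. The main (very mild) obstacle is simply being careful with the algebra of the CES marginal products and the sign bookkeeping relating $\sigma-1$, $1/\sigma-1$, and $\log(G_K/G_L)$; there is no conceptual difficulty once Lemma~\ref{lem:bubble} is invoked. I would also briefly note, as the paper does in footnote~\ref{fn:endogenous growth}, that the argument does not depend on whether $K_t,L_t$ arise from an exogenous specification or from an equilibrium of an endogenous growth model, since only the equilibrium pricing relations $P_t=\beta w_tL_t$ and $D_t=r_tK_t$ are used.
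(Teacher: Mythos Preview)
Your proof is correct and follows essentially the same approach as the paper: compute the dividend yield from \eqref{eq:divyield} for the CES production function, recognize it as a geometric sequence with ratio $(G_K/G_L)^{1-1/\sigma}$, and apply Lemma~\ref{lem:bubble}. The paper's version is slightly terser (it simply lists the two cases $\sigma<1,\,G_K>G_L$ and $\sigma>1,\,G_K<G_L$ rather than going through the logarithm), but your sign analysis and treatment of the boundary cases $\sigma=1$ and $G_K=G_L$ are equivalent.
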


\begin{proof}
Under the maintained assumptions, the dividend yield \eqref{eq:divyield} becomes
\begin{equation*}
    \frac{D_t}{P_t}=\frac{\alpha}{\beta(1-\alpha)}\left((G_K/G_L)^t(K_0/L_0)\right)^{1-1/\sigma},
\end{equation*}
which decays geometrically if and only if either
\begin{enumerate*}
    \item $\sigma<1$ and $G_K>G_L$ or
    \item $\sigma>1$ and $G_K<G_L$.
\end{enumerate*}
These conditions are equivalent to \eqref{eq:CES_bubble}. Otherwise, $D_t/P_t$ is bounded below by a positive constant. Therefore the claim is immediate from Lemma \ref{lem:bubble}.
\end{proof}

Empirical evidence suggests $\sigma<1$.\footnote{See \citet{OberfieldRaval2021} for a study using micro data and \citet*{GechertHavranekIrsovaKolcunova2022} for a literature review and metaanalysis.} Therefore if $G_K>G_L$, so capital grows at a faster rate than labor, for instance, due to innovation or firm creation, the dividend yield geometrically decays and a stock market bubble emerges as the unique equilibrium outcome. If the reverse inequality of \eqref{eq:CES_bubble} strictly holds, the stock market index return diverges to $\infty$ and the price-dividend ratio converges to 0, which is counterfactual. If $\sigma=1$ or $G_K=G_L$, the dividend yield is constant but this is obviously a knife-edge case.

\section{Bubble necessity in OLG endowment economies}\label{sec:necessity}

In \S\ref{sec:example}, we showed the necessity of asset price bubbles in several example economies. In this section, we establish the necessity in an abstract two-period OLG model.

\subsection{Model}\label{subsec:model}

We consider an OLG endowment economy with a long-lived asset.

\paragraph{Agents}

At each date $t=0,1,\dotsc$, a unit mass of a new generation of agents are born, who live for two dates. An agent born at time $t$ has utility function $U_t(y_t,z_{t+1})$, where $y_t,z_{t+1}$ denote the consumption when young and old. At $t=0$, there is a unit mass of old agents, who care only about their consumption $z_0$.

\paragraph{Commodities and asset}

There is a single perishable good. The endowments of the young and old at time $t$ are denoted by $a_t>0$ and $b_t\ge 0$. There is a unit supply of a dividend-paying asset with infinite maturity. Let $D_t\ge 0$ be the dividend of the asset at time $t$.

\paragraph{Budget constraints}

Letting $P_t\ge 0$ be the asset price and $x_t$ the number of asset shares demanded by the young, the budget constraints are
\begin{subequations}\label{eq:budget}
\begin{align}
    &\text{Young:} && y_t+P_tx_t=a_t, \label{eq:budget_young}\\
    &\text{Old:} && z_{t+1}=b_{t+1}+(P_{t+1}+D_{t+1})x_t. \label{eq:budget_old}
\end{align}
\end{subequations}
That is, at time $t$ the young decide to spend income $a_t$ on consumption $y_t$ and asset purchase $P_tx_t$, and at time $t+1$ the old liquidate all wealth to consume. The asset demand $x_t$ is arbitrary (positive or negative) as long as consumption is nonnegative.

\paragraph{Equilibrium}

Our equilibrium notion is the competitive equilibrium with sequential trading. A \emph{competitive equilibrium} consists of a sequence of prices and allocations $\set{(P_t,x_t,y_t,z_t)}_{t=0}^\infty$ satisfying the following conditions.
\begin{enumerate}
    \item (Individual optimization) The initial old consume $z_0=b_0+P_0+D_0$; for all $t$, the young maximize utility $U_t(y_t,z_{t+1})$ subject to the budget constraints \eqref{eq:budget}.
    \item (Commodity market clearing) $y_t+z_t=a_t+b_t+D_t$ for all $t$.
    \item (Asset market clearing) $x_t=1$ for all $t$.
\end{enumerate}

Note that, because at each date there are only two types of agents (the young and old) and the old exit the economy, market clearing forces that the young buy the entire shares of the asset, which explains $x_t=1$. Consequently, using the budget constraint \eqref{eq:budget}, the equilibrium allocation is
\begin{equation}
    (y_t,z_t)=(a_t-P_t,b_t+P_t+D_t). \label{eq:eqallocation}
\end{equation}

We define fundamental and bubbly equilibria in the obvious way.

\begin{defn}\label{defn:bubble}
An equilibrium is \emph{fundamental} (\emph{bubbly}) if $P_0=V_0$ ($P_0>V_0$).
\end{defn}

By definition, a bubbly equilibrium is an equilibrium in which the asset price exceeds its fundamental value. However, pure bubble models with $D_t=0$ often admit fundamental and bubbly steady states as well as a continuum of bubbly equilibria converging to the fundamental steady state.\footnote{See \citet[\S3.1]{HiranoToda2024JME} for a detailed analysis of a specific pure bubble model.} Therefore, proving the nonexistence of fundamental equilibria alone may not be convincing because it does not rule out bubbly equilibria converging to the fundamental steady state. We thus define a more demanding equilibrium concept, \emph{asymptotically bubbly equilibria}, which are bubbly equilibria with non-negligible bubble sizes relative to the economy.

\begin{defn}[Asymptotically bubbly equilibria]\label{defn:asym_bubble}
Let $\set{P_t}_{t=0}^\infty$ be equilibrium asset prices. The asset is \emph{asymptotically relevant (irrelevant)} if
\begin{equation}
    \liminf_{t\to\infty}\frac{P_t}{a_t}>0\quad (=0). \label{eq:relevant}
\end{equation}
A bubbly equilibrium is \emph{asymptotically bubbly (bubbleless)} if the asset is asymptotically relevant (irrelevant).
\end{defn}

By Definition \ref{defn:asym_bubble}, the set of asymptotically bubbly equilibria is a subset of bubbly equilibria. The relevance condition \eqref{eq:relevant} implies that there exists $p>0$ such that $P_t/a_t\ge p$ for all large enough $t$, so the asset price is always a non-negligible fraction of the endowment of the young. Because the young are the natural buyer of the asset, this condition implies that there will be trade in the asset in the long run, which motivates the term ``relevance''.\footnote{In monetary theory, money is said to be \emph{essential} if trading money is necessary for achieving at least some (desirable) allocations \citep{Hahn1973,Wallace2001}. When condition \eqref{eq:relevant} holds, the equilibrium allocation \eqref{eq:eqallocation} is relatively far from the autarky allocation $(a_t,b_t)$ because the relative error in the consumption of the young, $P_t/a_t$, is bounded away from zero. In the sense that asset trading achieves a nontrivial equilibrium allocation, the relevance condition \eqref{eq:relevant} is similar to essentiality. Obviously, the essentiality of money is a completely different concept from the necessity of bubbles.}

\subsection{Existence and characterization of equilibrium}

We introduce the following assumptions.

\begin{asmp}\label{asmp:U}
For all $t$, the utility function $U_t:\R_+^2\to [-\infty,\infty)$ is continuous, quasi-concave, and continuously differentiable on $\R_{++}^2$ with positive partial derivatives.
\end{asmp}

Assumption \ref{asmp:U} is standard. We define the marginal rate of substitution (MRS)
\begin{equation}
    M_t(y,z)\coloneqq \frac{(U_t)_z(y,z)}{(U_t)_y(y,z)}>0,\label{eq:MRS}
\end{equation}
where we denote partial derivatives with subscripts, \eg, $U_y=\partial U/\partial y$. The following theorem shows that an equilibrium always exists under the maintained assumptions. Furthermore, it characterizes the equilibrium as a solution to a nonlinear difference equation.

\begin{thm}\label{thm:exist}
Suppose Assumption \ref{asmp:U} holds and let $M_t$ be the marginal rate of substitution in \eqref{eq:MRS}. Then an equilibrium exists and asset prices satisfy
\begin{equation}
    P_t=\min\set{M_t(y_t,z_{t+1})(P_{t+1}+D_{t+1}),a_t},\label{eq:foc}
\end{equation}
where $(y_t,z_{t+1})=(a_t-P_t,b_{t+1}+P_{t+1}+D_{t+1})$.
\end{thm}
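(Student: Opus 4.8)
The plan is to treat this as two linked claims — that \emph{every} equilibrium price path solves \eqref{eq:foc}, and that a solution of \eqref{eq:foc} \emph{exists} and conversely \emph{is} an equilibrium — and prove them in that order.

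\emph{Characterization.} Since asset market clearing forces $x_t=1$, the equilibrium allocation is \eqref{eq:eqallocation}, and the only content of individual optimization is that, for each $t$, $x=1$ maximizes $x\mapsto U_t(a_t-P_tx,\,b_{t+1}+(P_{t+1}+D_{t+1})x)$ over the interval $\{x: a_t-P_tx\ge0,\ b_{t+1}+(P_{t+1}+D_{t+1})x\ge0\}$, whose right endpoint (when $P_t>0$) is $a_t/P_t$. I would split on whether $x=1$ lies in the interior of the constraint $a_t-P_tx\ge0$. If $P_t<a_t$ (so $y_t>0$ and $x=1$ is interior), the necessary first-order condition is that the derivative vanishes, which rearranges to $P_t=M_t(y_t,z_{t+1})(P_{t+1}+D_{t+1})$, a value that, being equal to $P_t$, is below $a_t$; conversely, because $U_t$ is quasi-concave with nowhere-vanishing gradient (positive partials), this first-order condition is also \emph{sufficient} for a global maximum on the interval. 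If $P_t=a_t$ (so $y_t=0$ and $x=1$ is the right endpoint), optimality of the endpoint for the quasi-concave, hence unimodal, objective is equivalent to its left derivative at $x=1$ being nonnegative, i.e. $M_t(y_t,z_{t+1})(P_{t+1}+D_{t+1})\ge a_t$ (reading $M_t(0,z)$ as $\lim_{y\downarrow0}M_t(y,z)\in(0,\infty]$). In the first case $\min\{M_t(y_t,z_{t+1})(P_{t+1}+D_{t+1}),a_t\}=P_t$ and in the second it is $a_t=P_t$, so \eqref{eq:foc} holds; running the implications backward, any price path satisfying \eqref{eq:foc} makes $x_t=1$ optimal for every generation, the initial old consume $b_0+P_0+D_0$ by construction, asset markets clear, and commodity markets clear by Walras' law from summing the budget constraints — so it is an equilibrium.

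\emph{Existence.} I would obtain a solution of \eqref{eq:foc} by finite-horizon truncation plus compactness. For each $T$, the economy with only the initial old and generations $0,\dots,T$ has no buyer at date $T+1$, so its resale price there is $P_{T+1}=0$, and a truncated equilibrium is a finite vector $(P_t)_{t=0}^{T}\in\prod_{t=0}^{T}[0,a_t]$ solving \eqref{eq:foc} for $t\le T$ with $P_{T+1}\coloneqq0$. I construct it \emph{backward}: given $P_{t+1}$, \eqref{eq:foc} reads $P_t=g_t(P_t)$ with $g_t(p)\coloneqq\min\{M_t(a_t-p,\,b_{t+1}+P_{t+1}+D_{t+1})(P_{t+1}+D_{t+1}),\,a_t\}$, a continuous self-map of $[0,a_t]$, so $g_t(p)-p$ changes sign and the intermediate value theorem gives a fixed point; iterating from $t=T$ down to $t=0$ yields $(P_t^{(T)})_{t=0}^{T+1}$. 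Extending each by zeros, these live in the compact product space $\prod_{t\ge0}[0,a_t]$, so by Tychonoff some subsequence converges coordinatewise to a limit $(P_t^{\ast})$. Fixing $t$ and letting $T_k\to\infty$ in the truncated \eqref{eq:foc} (valid once $T_k>t$), continuity of $M_t$ on $\R_{++}^2$ passes the equation to the limit, so $(P_t^{\ast})$ solves \eqref{eq:foc} for all $t$; by the converse half of the characterization it is an equilibrium.

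\emph{Main obstacle.} The delicate point is the behaviour of the marginal rate of substitution as $y_t\downarrow0$ (equivalently $P_t\uparrow a_t$): Assumption \ref{asmp:U} gives smoothness only on $\R_{++}^2$, so $M_t(0,z)$ need not be defined and $M_t(\cdot,z)$ need not even have a limit there, which threatens continuity of $g_t$ at its right endpoint and the limit passage when $P_t^{\ast}=a_t$. I would dispose of this using the cap at $a_t$: only $\liminf$/$\limsup$ of the MRS term matter through the $\min$, and along the construction the relevant bound can be read directly off the truncated equation — when $P_t^{(T_k)}<a_t$ infinitely often one has $M_t(a_t-P_t^{(T_k)},\cdot)(P_{t+1}^{(T_k)}+D_{t+1})=P_t^{(T_k)}\to a_t$, and when $P_t^{(T_k)}=a_t$ eventually the endpoint inequality is already in the limit. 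A second, purely degenerate, situation — $b_{t+1}=P_{t+1}^{\ast}+D_{t+1}=0$, where the young's problem has an optimum only if $P_t=0$ — propagates to $P_s^{\ast}=D_s=0$ for all $s>t$ and is consistent with \eqref{eq:foc}, so it can be set aside by a separate short check.
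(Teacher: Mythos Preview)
Your proposal is correct and follows essentially the same route as the paper's proof: derive \eqref{eq:foc} from the first-order conditions of the young's problem (the paper uses a Lagrange multiplier on the constraint $y_t\ge 0$, you argue directly via unimodality of the quasi-concave objective), then obtain existence by constructing finite-horizon solutions by backward induction and the intermediate value theorem and passing to the limit via Tychonoff compactness of $\prod_{t\ge 0}[0,a_t]$. The residual differences are cosmetic --- the paper uses the finite-intersection-property form of compactness and allows an arbitrary tail $\{P_t\}_{t\ge T}\in\prod_{t\ge T}[0,a_t]$ rather than fixing $P_{T+1}=0$ --- and you are, if anything, more explicit than the paper about the boundary behavior of $M_t$ as $y\downarrow 0$ and about the degenerate case $b_{t+1}+P_{t+1}+D_{t+1}=0$.
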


Note that if $P_t<a_t$, \eqref{eq:foc} reduces to the familiar asset pricing equation
\begin{equation*}
    P_t=M_t(y_t,z_{t+1})(P_{t+1}+D_{t+1}).
\end{equation*}
In general, the min operator appears in \eqref{eq:foc} due to the nonnegativity constraint on consumption: $y_t=a_t-P_tx_t\ge 0$. If we assume the Inada condition $U_y(0,z)=\infty$, this constraint never binds. If $0<P_t<a_t$, because the economy is deterministic, we can define the interest rate between time $t$ and $t+1$ by
\begin{equation}
    R_t\coloneqq \frac{P_{t+1}+D_{t+1}}{P_t}=\frac{1}{M(y_t,z_{t+1})}. \label{eq:Rt}
\end{equation}

\subsection{Bubble Necessity Theorem}

The question of fundamental importance is whether bubbles must arise in equilibrium, that is, whether it is possible that fundamental equilibria or asymptotically bubbleless equilibria may fail to exist. We address this question under additional assumptions.

\begin{asmp}\label{asmp:ab}
The endowments $\set{(a_t,b_t)}_{t=0}^\infty\subset \R_{++}\times \R_+$ satisfy
\begin{subequations}\label{eq:ab}
    \begin{align}
        \lim_{t\to\infty}\frac{a_{t+1}}{a_t}&\eqqcolon G \in (0,\infty), \label{eq:G}\\
        \lim_{t\to\infty}\frac{b_t}{a_t}&\eqqcolon w\in [0,\infty). \label{eq:w}
    \end{align}
\end{subequations}
\end{asmp}

Condition \eqref{eq:G} implies that the endowments of the young grow at rate $G>0$ in the long run. Although we use the term ``growth'', it could be $G\le 1$, so stationary or shrinking economies are also allowed. Furthermore, \eqref{eq:G} is an assumption only at infinity, so the endowments are arbitrary for arbitrarily long finite periods. Condition \eqref{eq:w} implies that the old-to-young endowment ratio approaches $w$ in the long run. Again, the income ratio is arbitrary for arbitrarily long finite periods.

Motivated by the interest rate formula \eqref{eq:Rt}, we define
\begin{equation}
    f_t(y,z)\coloneqq \frac{1}{M_t(a_ty,a_tz)}=\frac{(U_t)_y(a_ty,a_tz)}{(U_t)_z(a_ty,a_tz)}, \label{eq:ft}
\end{equation}
which gives the forward rate between time $t$ and $t+1$ when generation $t$ consumes $(a_ty,a_tz)$. We impose the following uniform convergence condition on $f_t$.

\begin{asmp}\label{asmp:f}
The forward rate function $f_t$ is continuous and uniformly converges on compact sets: there exists a continuous function $f:\R_{++}\times \R_+\to \R_+$ such that for any nonempty compact set $K\subset \R_{++}\times \R_+$, we have
\begin{equation}
    \lim_{t\to\infty}\sup_{(y,z)\in K}\abs{f_t(y,z)-f(y,z)}=0. \label{eq:f}
\end{equation}
\end{asmp}

A few remarks are in order. First, the domain of $f_t$ and $f$ is $\R_{++}\times \R_+$, not $\R_{++}^2$. This is because we would like to allow the possibility of zero endowments for the old, as in the examples in \S\ref{sec:example}. For instance, suppose the utility function exhibits constant relative risk aversion (CRRA), so $U_t(y,z)=u(y)+\beta u(z)$ with
\begin{equation}
    u(c)=\begin{cases*}
        \frac{c^{1-\gamma}}{1-\gamma} & if $0<\gamma\neq 1$,\\
        \log c & if $\gamma=1$,
    \end{cases*}\label{eq:CRRA}
\end{equation}
where $\beta>0$ is the discount factor and $\gamma>0$ is the relative risk aversion coefficient. Then $f_t(y,z)=f(y,z)=(z/y)^\gamma/\beta$, which satisfies Assumption \ref{asmp:f}. More generally, $f_t$ is well defined and continuous on $\R_{++}\times \R_+$ if $\partial U_t/\partial z$ can be continuously extended on $\R_{++}\times \R_+$ by allowing the value $\infty$.

Second, there are two interpretations of Assumption \ref{asmp:f}. If the utility function $U_t$ is homothetic, then the marginal rate of substitution \eqref{eq:MRS} is homogeneous of degree 0, so $f_t(y,z)=1/M_t(y,z)$ by \eqref{eq:ft}. Alternatively, if there is population growth as in \citet{Tirole1985}, $a_t$ denotes the population size, and $u_t(y,z)\coloneqq U_t(a_ty,a_tz)$ denotes the utility of an agent consuming $(y,z)$, then $f_t$ in \eqref{eq:ft} is exactly the reciprocal of the MRS of $u_t$. In either case, condition \eqref{eq:f} reduces to the uniform convergence of MRS instead of the scaled MRS.

Under the maintained assumptions, we can prove the necessity of bubbles.

\begin{thm}[Necessity of bubbles in OLG model]\label{thm:necessity}
If Assumptions \ref{asmp:U}--\ref{asmp:f} hold and
\begin{equation}
    f(1,Gw)<G_d\coloneqq \limsup_{t\to\infty} D_t^{1/t}<G, \label{eq:necessity_OLG}
\end{equation}
then all equilibria are asymptotically bubbly.
\end{thm}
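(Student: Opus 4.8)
The plan is to reduce the theorem to the single claim that $\liminf_{t\to\infty}p_t>0$ in every equilibrium, where $p_t\coloneqq P_t/a_t\in(0,1]$, and then to prove that claim by a present-value argument. Two preliminary observations: since $f$ takes values in $\R_+$, the hypothesis $f(1,Gw)<G_d$ forces $G_d>0$, so $D_t>0$ for infinitely many $t$ and hence $P_t>0$ for all $t$ by \eqref{eq:Pt} (so Lemma \ref{lem:bubble} applies); and since $G_d<G$, picking $\varepsilon>0$ with $G_d+2\varepsilon<G$ gives $D_t\le(G_d+\varepsilon)^t$ and $a_t\ge c(G-\varepsilon)^t$ eventually (some $c>0$), so $d_t\coloneqq D_t/a_t\le c^{-1}\bigl((G_d+\varepsilon)/(G-\varepsilon)\bigr)^t$ is summable. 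Granting $\liminf_t p_t>0$: then eventually $D_t/P_t=d_t/p_t\le C d_t$ is summable, so Lemma \ref{lem:bubble} yields a bubble, while $\liminf_t p_t>0$ is precisely asymptotic relevance; hence the equilibrium is asymptotically bubbly, and it remains only to prove $\liminf_t p_t>0$.

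For the recursion, detrend \eqref{eq:foc}: because $0\le P_t\le a_t$, whenever $p_t<1$ the constraint $y_t\ge0$ is slack, so writing $G_t\coloneqq a_{t+1}/a_t$, $w_{t+1}\coloneqq b_{t+1}/a_{t+1}$, and $\zeta_t\coloneqq z_{t+1}/a_t=G_t(w_{t+1}+p_{t+1}+d_{t+1})$, one gets $R_t\coloneqq(P_{t+1}+D_{t+1})/P_t=f_t(1-p_t,\zeta_t)$, equivalently $\zeta_t=G_t w_{t+1}+p_t f_t(1-p_t,\zeta_t)$ and $p_{t+1}=(R_t/G_t)p_t-d_{t+1}$. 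The structural point is that $p_{t+1}\le1$ forces $\zeta_t\le G_t(w_{t+1}+1+d_{t+1})\le M$ for a finite constant $M$; hence, whenever $p_t\le\bar\varepsilon<1$, the point $(1-p_t,\zeta_t)$ lies in the fixed compact set $K\coloneqq[1-\bar\varepsilon,1]\times[0,M]$, which contains $(1,Gw)$. By Assumption \ref{asmp:f}, $f_t$ is uniformly bounded on $K$, and moreover the identity $\zeta_t-G_t w_{t+1}=p_t f_t(1-p_t,\zeta_t)$ forces $\zeta_t\to Gw$, so $(1-p_t,\zeta_t)\to(1,Gw)$ and $R_t\to f(1,Gw)$ along any run of times on which $p_t\to0$. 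Since $f(1,Gw)<G$, I expect this to yield constants $\bar\varepsilon>0$, $\rho\in(0,1)$, $T_0$ such that for $t\ge T_0$, $p_t\le\bar\varepsilon$ implies $R_t\le\rho G_t$, hence $p_{t+1}\le\rho p_t\le\bar\varepsilon$ (a forward-invariant, contracting region).

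Using this, suppose for contradiction that $\liminf_t p_t=0$. Then there is $t_0\ge T_0$ with $p_{t_0}\le\bar\varepsilon$, and iterating gives $p_t\le\rho^{\,t-t_0}p_{t_0}$ for $t\ge t_0$, so $p_t\to0$ and therefore $R_t\to f(1,Gw)$. Fix $f(1,Gw)<R^*<R^{**}<G_d$; then $R_t\le R^*$ for all large $t$, so $q_t=\prod_{s<t}R_s^{-1}\ge c_1(R^*)^{-t}$ for some $c_1>0$. Since $\limsup_t D_t^{1/t}=G_d>R^{**}$, there are infinitely many $t$ with $D_t\ge(R^{**})^t$, and for these $q_t D_t\ge c_1(R^{**}/R^*)^t\to\infty$; in particular $\sum_{t=1}^\infty q_t D_t=\infty$. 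But \eqref{eq:P_iter} gives $\sum_{t=1}^T q_t D_t=P_0-q_T P_T\le P_0<\infty$, a contradiction. Hence $\liminf_t p_t>0$, and with the first paragraph the theorem follows.

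The step that carries the real work is the contraction claim at the end of the second paragraph. The difficulty is that the equation $\zeta_t=G_t w_{t+1}+p_t f_t(1-p_t,\zeta_t)$ is implicit and may possess, besides the near-autarky root $\zeta_t\approx G_t w_{t+1}$ (which gives $R_t\approx f(1,Gw)<G$ and the contraction), other large roots $\zeta_t\approx M$ with $R_t$ large: when the near-autarky root is selected the contraction holds, and when it is not available one must instead show that the small-$p$ region is simply unreachable in equilibrium — a continuation with $\zeta_t\approx M$ would force $p_{t+1}\approx1$ and then, via $0\le P_{t+1}\le a_{t+1}$, an incompatible value of $R_{t+1}$ one period later. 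Making this dichotomy precise, using the free-disposal/nonnegativity bounds $0\le P_t\le a_t$ together with the uniform convergence in Assumption \ref{asmp:f}, is the crux; the two reductions and the present-value estimate are routine.
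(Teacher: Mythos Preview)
Your architecture is exactly the paper's: reduce to $\liminf_t p_t>0$; obtain a contraction $p_{t+1}\le\rho p_t$ once $p_t$ is small; force $p_t\to0$ and $R_t\to f(1,Gw)$; and finish with the infinite-present-value contradiction. The paper packages the contraction as a two-step bootstrap (its Lemmas \ref{lem:p_ratio} and \ref{lem:p_ratio2}): first, $p_{t+1}\le1$ places $(1-p_t,\zeta_t)$ in a fixed compact $K$ whenever $p_t\le\tfrac12$, so $f_t\le C$ on $K$ gives a universal bound $p_{t+1}/p_t\le r$; second, feeding $p_{t+1}\le rp_t$ back in confines $(1-p_t,\zeta_t)$ to the small box $[1-\varepsilon,1]\times[\ubar G\,\ubar w,\bar G(\bar w+r\varepsilon)]$, on which $f_t$ is close to $f(1,Gw)$, yielding $p_{t+1}/p_t\le G_d/G$.

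Your final paragraph misdiagnoses the difficulty. The relation $\zeta_t=G_tw_{t+1}+p_tf_t(1-p_t,\zeta_t)$ is not an equation to be \emph{solved} for $\zeta_t$ with possibly multiple roots; it is an identity satisfied by the equilibrium value, and you already have the a~priori bound $\zeta_t\le M$ from $p_{t+1}\le1$. Once $(1-p_t,\zeta_t)\in K$ and $f_t\le C$ on $K$, the identity itself forces $\zeta_t\le G_tw_{t+1}+Cp_t$: this \emph{is} your ``near-autarky root'', and there is no other equilibrium possibility. Hence the contraction claim is already established by your second paragraph, and the scenario $\zeta_t\approx M$, $p_{t+1}\approx1$ never arises when $p_t$ is small. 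The only cosmetic difference is that the paper passes from the small box to the bound on $R_t$ via monotonicity of $f$ (claimed from quasi-concavity of $U_t$), whereas your route through uniform convergence and continuity of $f$ on shrinking compacta works equally well and is arguably cleaner.
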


Note that the statement of Theorem \ref{thm:necessity} is not vacuous because an equilibrium always exists by Theorem \ref{thm:exist}. Although the proof of Theorem \ref{thm:necessity} is technical, its intuition is straightforward. For simplicity, suppose $(a_t,D_t)=(a_0G^t,D_0G_d^t)$ with $G_d<G$, so endowments and dividends grow at constant rates. If a fundamental equilibrium exists, the price $P_t$ must asymptotically grow at rate $G_d$ because (by the definition of a fundamental equilibrium) it must equal the present value of dividends, which grow at rate $G_d$. Because $G_d<G$, the price-income ratio $P_t/a_t$ grows at rate $G_d/G<1$ and hence converges to zero, so the asset becomes asymptotically irrelevant. Hence the budget constraint \eqref{eq:budget} together with the asset market clearing condition $x_t=1$ implies that the consumption allocation $(y_t,z_{t+1})$ approaches autarky in the long run, and the equilibrium interest rate \eqref{eq:Rt} converges to
\begin{equation}
    R_t=\frac{1}{M_t(y_t,z_{t+1})}\to f(1,Gw)<G_d \label{eq:Rbound}
\end{equation}
by Assumption \ref{asmp:f} and condition \eqref{eq:necessity_OLG}. However, \eqref{eq:Rbound} implies that the interest rate is asymptotically lower than the dividend growth rate, so the fundamental value of the asset is infinite, which is of course impossible in equilibrium. Thus a fundamental equilibrium cannot exist.

\subsection{Examples}

Clearly, the examples in \S\ref{subsec:example_twosector}, \ref{subsec:example_innovation} are special cases of Theorem \ref{thm:necessity} (after converting to endowment economies), but they both use log utility for tractability. We present two more examples with other utility functions for illustration.

\begin{exmp}[\citealp{Wilson1981}]\label{exmp:linear}
Let the utility function be $U_t(y,z)=y+\beta z$, endowments $(a_t,b_t)=(aG^t,bG^t)$, and dividends $D_t=DG_d^t$. Since the marginal rate of substitution $M(y,z)=\beta$ is constant, Assumptions \ref{asmp:U}--\ref{asmp:f} clearly hold. The bubble necessity condition \eqref{eq:necessity_OLG} holds if $1/\beta<G_d<G$. The example in \citet[\S 7]{Wilson1981} is a special case with $\beta=3$, $G_d=1/2$, and $G=1$.
\end{exmp}

In Appendix \ref{sec:unique}, we prove the unique equilibrium price in Example \ref{exmp:linear} is $P_t=aG^t$.

\begin{exmp}[CRRA]\label{exmp:CRRA}
Suppose agents have the CRRA utility in the remark after Assumption \ref{asmp:f}, endowments grow at a constant rate $G>1$, so $(a_t,b_t)=(aG^t,bG^t)$, and dividends are positive but constant: $D_t=D>0$. Assumptions \ref{asmp:U}--\ref{asmp:f} clearly hold. Since $G_d=1$, \eqref{eq:necessity_OLG} reduces to
\begin{equation}
    \frac{1}{\beta}(bG/a)^\gamma<1<G\iff a>\beta^{-1/\gamma}Gb. \label{eq:nonexist_CRRA}
\end{equation}
Thus if \eqref{eq:nonexist_CRRA} holds, all assumptions of Theorem \ref{thm:necessity} are satisfied and all equilibria are asymptotically bubbly.
\end{exmp}

The bubble necessity condition \eqref{eq:nonexist_CRRA} implies that, if the young are sufficiently rich, all equilibria are asymptotically bubbly. The intuition is that the young have a strong savings motive, it pushes down the interest rate, but the interest rate cannot fall below the dividend growth rate, for otherwise the asset price would be infinite. This implies that when the economy falls into a low interest rate environment, the only possible outcome is an asset price bubble.

Although an application of Theorem \ref{thm:necessity} to Example \ref{exmp:CRRA} implies that all equilibria are asymptotically bubbly, Theorem \ref{thm:necessity} is silent about how to analyze them. For this purpose, we may apply the Hartman-Grobman theorem \citep[Theorem 4.6]{Chicone2006} (essentially linearization around the steady state); see Appendix \ref{sec:global} for details.

\section{Robustness of bubble necessity}\label{sec:robustness}

\S\ref{sec:necessity} provides sufficient conditions for the necessity of asset price bubbles in overlapping generations endowment economies with exogenous dividends. Although we chose this model for simplicity and clarity, there is nothing special about this apparently restrictive setting. In fact, some examples in \S\ref{sec:example} feature production and endogenous dividends. In this section, to further show the robustness of our results, we consider the \citet{Diamond1965} OLG model with capital accumulation and Bewley-type infinite-horizon heterogeneous-agent models. Because our purpose is to show the necessity of bubbles in plausible economic models, our models make minimal modifications to well-known models in the literature.

Since all examples below concern production economies, we modify the definition of asymptotically bubbly equilibria in Definition \ref{defn:asym_bubble} as follows.

\begin{defn}
Let $\set{P_t}_{t=0}^\infty$ be equilibrium asset prices. A bubbly equilibrium is \emph{asymptotically bubbly} if $\liminf_{t\to\infty}P_t/G^t>0$ for suitably defined economic growth rate $G>0$.
\end{defn}

\subsection{Diamond OLG model with capital accumulation}\label{subsec:diamond}

We first consider the \citet{Diamond1965} OLG model with capital accumulation.

\paragraph{Model}

The model description is brief because it is well known. Agents have Cobb-Douglas utility with discount factor $\beta\in (0,1)$ as in \S\ref{sec:example}. To simplify notation, let $F(K,L)$ be a neoclassical production function including undepreciated capital, so $F$ is homogeneous of degree 1, concave, and continuously differentiable on $\R_{++}^2$ with positive partial derivatives. Let the aggregate labor be $L=1$. There is an asset in unit supply, which pays exogenous dividend $D_t\ge 0$ at time $t$ and trades at endogenous ex-dividend price $P_t$.

\paragraph{Equilibrium}
In any equilibrium, the wage is $w_t=F_L(K_t,1)$. Because the young save fraction $\beta$ of wealth, the budget constraint and the market clearing condition for the asset imply
\begin{equation}
    K_{t+1}+P_t=\beta F_L(K_t,1). \label{eq:budget_diamond}
\end{equation}
The no-arbitrage condition implies
\begin{equation}
    \frac{P_{t+1}+D_{t+1}}{P_t}=R_t=F_K(K_{t+1},1). \label{eq:noarbitrage_diamond}
\end{equation}

We introduce the following technical condition.

\begin{asmp}\label{asmp:FL}
There exists $K^*>0$ such that
\begin{equation}
    \beta F_L(K,1)-K\begin{cases*}
        >0 & if $0<K<K^*$,\\
        =0 & if $K=K^*$,\\
        <0 & if $K>K^*$.
    \end{cases*} \label{eq:K*}
\end{equation}
\end{asmp}

Using the budget constraint \eqref{eq:budget_diamond}, the steady state savings (excluding capital) is $\beta F_L(K,1)-K$. Assumption \ref{asmp:FL} implies that savings is positive (negative) at low (high) capital values. The following theorem shows the necessity of asset price bubbles under the condition $R<G_d<G$.

\begin{thm}[Necessity of bubbles in Diamond model]\label{thm:diamond}
If Assumption \ref{asmp:FL} holds and
\begin{equation}
	F_K(K^*,1)<G_d\coloneqq \limsup_{t\to\infty}D_t^{1/t}<1\eqqcolon G, \label{eq:necessity_diamond}
\end{equation}
then any equilibrium with $\liminf_{t\to\infty}K_t>0$ is asymptotically bubbly.
\end{thm}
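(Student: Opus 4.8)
The strategy is to argue by contradiction: suppose there is an equilibrium with $\liminf_{t\to\infty}K_t>0$ that is \emph{not} asymptotically bubbly, i.e.\ $\liminf_{t\to\infty}P_t=0$ (recall $G=1$, so the normalization $P_t/G^t$ is just $P_t$), and derive that the fundamental value of the asset is infinite, which is impossible since $V_0\le P_0<\infty$. The first step is to pin down the behavior of capital. From the budget/market-clearing relation \eqref{eq:budget_diamond} we have $K_{t+1}=\beta F_L(K_t,1)-P_t\le\beta F_L(K_t,1)$, and combined with Assumption \ref{asmp:FL} (which makes $K\mapsto\beta F_L(K,1)$ a map with unique positive fixed point $K^*$ that is increasing below $K^*$) together with $\liminf K_t>0$, I would show that $\limsup_{t\to\infty}K_t\le K^*$; more importantly, along a subsequence where $P_t\to 0$, the dynamics of $K_{t+1}=\beta F_L(K_t,1)-P_t$ push $K_t$ toward $K^*$. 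The key intermediate claim is: if $P_{t_k}\to 0$ along a subsequence, then $K_{t_k+1}\to K^*$ (or at least $\liminf$ over that subsequence is $K^*$), because the nonnegative ``savings slack'' $\beta F_L(K_t,1)-K_t=K_{t+1}+P_t-K_t+\text{(telescoping)}$ must vanish.

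The second step converts this into a statement about the interest rate. By \eqref{eq:noarbitrage_diamond}, $R_t=F_K(K_{t+1},1)$, and since $K\mapsto F_K(K,1)$ is continuous and decreasing (concavity of $F$), along the subsequence where $K_{t+1}\to K^*$ we get $R_{t}\to F_K(K^*,1)$. The monotonicity of $F_K$ also gives the uniform bound $R_t=F_K(K_{t+1},1)\le F_K(\underline K,1)$ whenever $K_{t+1}\ge\underline K$ for some $\underline K>0$ below $\liminf K_t$; but to conclude I want $R_t$ to be \emph{close to} $F_K(K^*,1)$ for a positive-density (or at least infinite) set of periods, which is where the subsequence argument feeds in. Using \eqref{eq:necessity_diamond}, $F_K(K^*,1)<G_d$, so along that subsequence the one-period return is strictly below the dividend growth rate by a fixed margin.

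The third step is to assemble the fundamental value. Write $q_t=\prod_{s=0}^{t-1}R_s^{-1}$ (so $q_tP_t=P_0\prod R_s^{-1}\cdot(\text{no-arbitrage bookkeeping})$; more precisely iterate \eqref{eq:noarbitrage_diamond} to get $q_tP_t=q_{t+1}(P_{t+1}+D_{t+1})$ with $q_0=1$, hence $V_0=\sum_{t\ge1}q_tD_t$ with $q_t=\prod_{s=0}^{t-1}R_s^{-1}$). I then want to show $\sum_t q_tD_t=\infty$. Pick $\epsilon>0$ with $F_K(K^*,1)+\epsilon<G_d-\epsilon$, and choose $D_t\ge (G_d-\epsilon)^t$ infinitely often (from $\limsup D_t^{1/t}=G_d$). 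The delicate accounting is that $q_t=\prod_{s<t}R_s^{-1}$ involves \emph{all} past returns, not just those near $K^*$, so I need a lower bound on $q_t$. Here I would use that $R_s=F_K(K_{s+1},1)$ with $K_{s+1}\le\beta F_L(K_s,1)$ bounded above (by $\limsup K_t\le K^*$ eventually, or a cruder uniform bound), hence $R_s$ is bounded above by some $\bar R$, and for the infinitely many periods near $K^*$ we even have $R_s\le F_K(K^*,1)+\epsilon<G_d$. If the ``bad'' periods (where $R_s$ is merely $\le\bar R$ but not small) do not have vanishing density, this crude bound is not enough — \textbf{and this is the main obstacle}. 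The resolution I anticipate: the nonnegative quantities $P_t$ must in fact tend to $0$ (not just along a subsequence) OR be bounded away from $0$; in the former case $K_t\to K^*$ and $R_t\to F_K(K^*,1)<G_d$ for \emph{all} large $t$, giving $q_t\gtrsim (F_K(K^*,1)+\epsilon)^{-t}$ and hence $q_tD_t\gtrsim ((G_d-\epsilon)/(F_K(K^*,1)+\epsilon))^t\to\infty$, so $V_0=\infty$, contradiction; in the latter case $\liminf P_t>0$ means the equilibrium is asymptotically bubbly by definition, contradicting our hypothesis. So the crux is the dichotomy lemma: \emph{if $\liminf_t P_t=0$ then $P_t\to 0$ and $K_t\to K^*$}, which I would prove using \eqref{eq:budget_diamond}, Assumption \ref{asmp:FL}, and the fact that once $K_t$ is near $K^*$ and $P_t$ near $0$ the system stays there (local stability of the autarky steady state under \eqref{eq:K*}), mirroring the autarky-convergence argument sketched after Theorem \ref{thm:necessity}. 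Once that lemma is in hand, the rest is the short present-value estimate above together with Lemma \ref{lem:bubble} (or directly $P_0\ge V_0=\infty$) to finish.
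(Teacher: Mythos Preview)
Your proposal is correct and follows essentially the same route as the paper: the paper proves exactly your ``dichotomy lemma'' (as its Lemma~A.6) by first extracting a subsequence with $(P_{t_n},K_{t_n})\to(0,k_0)$, shifting it forward by a fixed $j$ so that $K_{t_n+j}$ lands in $(K^*-\epsilon,K^*+\epsilon)$ (via the global stability of $K\mapsto\beta F_L(K,1)$), and then running an induction showing that once $K_t\in(K^*-\epsilon,K^*+\epsilon)$ and $P_t$ is small, both properties persist and $P_t$ decays geometrically; the $V_0=\infty$ contradiction and the final application of Lemma~\ref{lem:bubble} are exactly as you sketch. Your intermediate claim ``$P_{t_k}\to0\Rightarrow K_{t_k+1}\to K^*$'' is not literally correct (you need a further subsequence and a shift by $j$ steps, since $K_{t_k+1}\to\beta F_L(k_0,1)$ rather than $K^*$), but you already superseded it with the correct trapping argument in Step~3.
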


Note that the bubble necessity condition \eqref{eq:necessity_diamond} exactly parallels that of Theorem \ref{thm:necessity} in \eqref{eq:necessity_OLG}. Theorem \ref{thm:diamond} is different from Proposition 1(c) of \citet{Tirole1985} because Tirole imposes several high-level assumptions without providing examples satisfying them, assumes a constant dividend growth rate, and does not rule out asymptotically bubbleless equilibria.

\begin{exmp}
Suppose $F$ is Cobb-Douglas with depreciation rate $\delta$, so $F(K,L)=AK^\alpha L^{1-\alpha}+(1-\delta)K$ for $A>0$, $\alpha\in (0,1)$, and $\delta\in [0,1]$. Then
\begin{equation*}
    \beta F_L(K,1)-K=K^\alpha(\beta A(1-\alpha)-K^{1-\alpha}),
\end{equation*}
so condition \eqref{eq:K*} holds with $K^*=[\beta A(1-\alpha)]^\frac{1}{1-\alpha}$. Then the left-hand side of \eqref{eq:necessity_diamond} becomes
\begin{equation*}
    F_K(K^*,1)=\frac{\alpha}{\beta(1-\alpha)}+1-\delta.
\end{equation*}
\end{exmp}

\subsection{Bewley model with idiosyncratic investment shocks}\label{subsec:bewley1}

We next consider an endogenous growth model with infinitely-lived agents subject to idiosyncratic investment shocks. For analytical tractability, we employ logarithmic preferences with a linear production function as in \citet{Kiyotaki1998}.\footnote{The model is also related to \citet{Kocherlakota2009}, who introduces an intrinsically useless asset and a collateral constraint in a model with idiosyncratic investment risk and labor. We modify the models of \citet{Kiyotaki1998} and \citet{Kocherlakota2009} by abstracting from labor and the credit market and introducing a dividend-paying asset. \citet*{HiranoJinnaiTodaLeverage} study an extension with a more general production function, endogenous dividends, and a credit market subject to a leverage constraint.}

\paragraph{Model}

There is a homogeneous good that can be consumed or used as capital. There is an asset in unit supply, which pays exogenous dividend $D_t\ge 0$ at time $t$ and cannot be shorted. There is a unit mass of a continuum of agents indexed by $i\in I=[0,1]$.\footnote{See \citet{SunZhang2009} for a mathematical foundation of models with a continuum of agents.} A typical agent has the logarithmic utility function
\begin{equation}
    \E_0\sum_{t=0}^\infty \beta^t\log c_{it}, \label{eq:logutility}
\end{equation}
where $\beta\in (0,1)$ is the discount factor and $c_{it}\ge 0$ is consumption.

Let $z_{it}$ be the productivity of agent $i$ at time $t$, which evolves over time according to finite-state Markov chain, independently across agents. The set of possible productivities is denoted by $\set{z_0,\dots,z_N}$ with $0=z_0<z_1<\dots<z_N$, so type 0 can be interpreted as savers (who have no entrepreneurial skill) and type $n\ge 1$ can be interpreted as investors, with higher types being more productive. Agents produce the good using the linear technology $y_{i,t+1}=z_{it}k_{it}$, where $k_{it}\ge 0$ is capital input at time $t$ and $y_{i,t+1}$ is output at time $t+1$.

\paragraph{Equilibrium}

The economy starts at $t=0$ with an initial specification of productivity and endowments of capital and asset $\set{(z_{i0},k_{i,-1},x_{i,-1})}_{i\in I}$, with $\int_I x_{i,-1}\diff i=1$. A \emph{rational expectations equilibrium} consists of sequences of asset prices $\set{P_t}_{t=0}^\infty$ and allocations $\set{(c_{it},k_{it},x_{it})_{i\in I}}_{t=0}^\infty$ such that
\begin{enumerate}
	\item each agent maximizes utility \eqref{eq:logutility} subject to the budget constraint
    \begin{equation}
        c_{it}+k_{it}+P_tx_{it}=w_t\eqqcolon z_{i,t-1}k_{i,t-1}+(P_t+D_t)x_{i,t-1} \label{eq:budget_bewley1}
    \end{equation}
    and the nonnegativity constraints $c_{it},k_{it},x_{it}\ge 0$, and
	\item the asset market clears, so $\int_I x_{it}\diff i=1$.
\end{enumerate}

We introduce the following technical condition.

\begin{asmp}\label{asmp:irreducible}
Let $\pi_{nn'}=\Pr(z_{i,t+1}=z_{n'} \mid z_{it}=z_n)$ be the transition probability. The transition probability matrix $\Pi=(\pi_{nn'})$ as well as its $N\times N$ submatrix $\Pi_1\coloneqq (\pi_{nn'})_{n,n'=1}^N$ are irreducible.
\end{asmp}

The irreducibility of $\Pi$ implies that the agent types are not permanent. The irreducibility of $\Pi_1$ implies that once agents become investors, with positive probability, they could visit all productivity states before returning to savers. Because low productivity agents are natural buyers of the asset, this assumption generates a demand for the asset, which allows us to bound the asset price from below. In what follows, for a square matrix $A$, let $\rho(A)$ denote its spectral radius (largest absolute value of all eigenvalues). The following theorem shows the necessity of asset price bubbles under the condition $R<G_d<G$.

\begin{thm}[Necessity of bubbles in Bewley model with investment shocks]\label{thm:bewley1}
Define the square nonnegative matrix $A\coloneqq (\beta z_n\pi_{nn'})$. If Assumption \ref{asmp:irreducible} holds and
\begin{equation}
    0<G_d\coloneqq \limsup_{t\to\infty} D_t^{1/t}<\rho(A)\eqqcolon G, \label{eq:necessity_bewley1}
\end{equation}
then all equilibria are asymptotically bubbly.
\end{thm}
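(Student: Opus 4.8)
The plan is to combine the logarithmic-utility portfolio characterization with a Perron--Frobenius lower bound on the asset price, and then close with Lemma~\ref{lem:bubble}. Two preliminaries come first. (i) In any equilibrium $P_t>0$ for all $t$: if $P_{t_0}=0$ while $P_{t_0+1}+D_{t_0+1}>0$, a type-$0$ agent (whose capital earns zero return) could buy arbitrarily many shares at the zero price and obtain unbounded date-$(t_0+1)$ wealth, contradicting optimality and $\int_I x_{it}\,\diff i=1$; hence $P_{t_0}=0$ forces $P_{t_0+1}=D_{t_0+1}=0$, and iterating forward contradicts $G_d=\limsup_t D_t^{1/t}>0$. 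So Lemma~\ref{lem:bubble} applies, and it suffices to show $\liminf_t P_t/G^t>0$: the asset is then asymptotically bubbly by definition, and since $D_t\le C(G')^t$ for any fixed $G'\in(G_d,G)$, the dividend yield $D_t/P_t$ decays geometrically, so $\sum_t D_t/P_t<\infty$ and the asset is bubbly. (ii) With log utility each agent consumes fraction $1-\beta$ of wealth and invests $\beta$; since the economy is deterministic the gross asset return $R_t=(P_{t+1}+D_{t+1})/P_t>0$ is common to all, so an agent with current productivity $z_n$ invests all savings in own capital if $z_n>R_t$ and all in the asset if $z_n<R_t$. In particular the savers (type $0$) always hold the asset, so $P_t\ge\beta W_{0t}$, where $W_{nt}$ denotes the integrated wealth of the agents whose date-$t$ productivity is $z_n$.

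The core step is a lower bound on $W_{0t}$. Since $w_{i,t+1}=\max\{z_{it},R_t\}\beta w_{it}$ for every agent, and (conditional on the current type) next period's type is independent of current wealth, the law of large numbers across the continuum gives the linear recursion $W_{t+1}=B_t^{\top}W_t$ with $B_t=(\beta\max\{z_n,R_t\}\pi_{nn'})$. Because $\max\{z_n,R_t\}\ge z_n$ we have $B_t\ge A$ entrywise, hence $W_{t+1}\ge A^{\top}W_t$ with all quantities nonnegative. Now $A$ is block lower-triangular --- its first row vanishes since $z_0=0$ --- with diagonal blocks $0$ and $A_1\coloneqq(\beta z_n\pi_{nn'})_{n,n'\ge1}$, so $\rho(A)=\rho(A_1)=G$; by Assumption~\ref{asmp:irreducible} $A_1$ is irreducible, so $\rho(A_1)>0$ and $A$ has right Perron eigenvector $\phi=(0,\tilde\phi)$ with all entries of $\tilde\phi$ positive. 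From $B_t\phi\ge A\phi=G\phi$ one gets $\phi^{\top}W_{t+1}=(B_t\phi)^{\top}W_t\ge G\,\phi^{\top}W_t$, so, once some investor type carries positive wealth (which happens from some finite date on, by irreducibility of $\Pi$ and positivity of returns), $\sum_{n\ge1}W_{nt}\ge cG^t$ for a constant $c>0$. Finally, because $\pi_{n0}>0$ for some investor type $n$ (irreducibility of $\Pi$ makes state $0$ reachable from the investors in one step somewhere) and, by irreducibility of $\Pi_1$, investor wealth reaches every investor type within $N$ periods, one transfers the growth rate $G$ to the saver component, obtaining $W_{0t}\ge c'G^t$ for all large $t$, whence $\liminf_t P_t/G^t\ge\beta c'>0$ and the theorem follows.

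The hard part is exactly this last transfer --- a uniform-in-$t$ lower bound on $P_t$, equivalently on $W_{0t}$. Three features make it delicate: the equilibrium return $R_t$ is endogenous and need not converge or even be bounded, so the one-step operator $B_t$ fluctuates; the savers' own capital return is zero, so $W_{0t}$ does not propagate through $A$ and is replenished only through investors who happen to transition into state $0$; and $\Pi$ (hence $B_t$) may be periodic, so the wealth vector can be momentarily concentrated on types that are irrelevant for pricing. This is precisely why both irreducibility hypotheses are used --- that of $\Pi_1$ to make the investor block $A_1$ irreducible with $\rho(A_1)=G$ and positive Perron eigenvectors, and that of $\Pi$ to connect the saver state to it --- and why the Perron argument, which by itself controls only $\phi^{\top}W_t$, must be supplemented either by a finite-window (or Cesàro-averaging) argument that upgrades the bound to every large $t$, or by a contradiction argument paralleling the proof of Theorem~\ref{thm:necessity} (if $P_t/G^t\to 0$ along a subsequence, the economy approaches autarky and the implied interest rate conflicts with $R<G_d<G$). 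The law-of-large-numbers step yielding $W_{t+1}=B_t^{\top}W_t$ exactly, rather than merely in expectation, is routine given independence of the idiosyncratic transitions from individual histories, but should be stated.
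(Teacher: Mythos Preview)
Your overall strategy matches the paper's: the log-utility portfolio rule gives $w_{i,t+1}=\beta\max\{z_{it},R_t\}w_{it}$, aggregation yields $v_{t+1}'\ge v_t'A$ (in the paper's row-vector notation, with $v_t'=(W_{0t},\dots,W_{Nt})$), hence $v_t'\ge v_0'A^t$; then $P_t\ge\beta W_{0t}$ because savers must hold the asset, and Lemma~\ref{lem:bubble} closes. The gap is in your Perron--Frobenius step, and although you diagnose the difficulty correctly you choose the wrong eigenvector to resolve it.

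You work with the \emph{right} Perron vector $\phi$ of $A$. Because the first row of $A$ vanishes, $\phi=(0,\tilde\phi)$, so the quantity you control, $\phi^\top W_t$, says nothing about $W_{0t}$, and you are forced into the ``transfer'' argument that you leave as a sketch. The paper instead uses the \emph{left} Perron vector $u'=(u_0,u_1')$. Writing $A$ in block form with the first row and column split off, the eigenvalue equation $u'A=\rho(A)u'$ reads $u_1'A_1=\rho(A)u_1'$ and $\rho(A)u_0=u_1'b_1$, where $b_1=(\beta z_n\pi_{n0})_{n\ge1}$. Irreducibility of $\Pi_1$ makes $A_1$ irreducible and hence $u_1\gg0$; irreducibility of $\Pi$ forces $b_1\neq0$; therefore $u_0>0$ and $u\gg0$. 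Since initial aggregate wealth is positive in every type, $v_0'\ge\epsilon u'$ for some $\epsilon>0$, and then $v_t'\ge v_0'A^t\ge\epsilon u'A^t=\epsilon G^t u'$ gives $W_{0t}\ge\epsilon u_0 G^t$ directly---no transfer, no finite-window averaging, no concern about periodicity of $\Pi$. This is exactly the content of the paper's Lemmas~\ref{lem:Wt} and~\ref{lem:W0}, and it uses the two irreducibility hypotheses in precisely the way you anticipated, just on the other side of the matrix.

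Your contradiction-based alternative (``if $P_t/G^t\to0$ along a subsequence, the implied interest rate conflicts with $R<G_d<G$'') does not port cleanly from the OLG proof: in this model there is no autarky benchmark with a well-defined limiting interest rate, and the equilibrium $R_t$ need not be bounded. The finite-window idea could in principle be completed, but it is strictly more work than switching to the left eigenvector.
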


Some remarks are in order. Since by assumption $z_0=0$, type 0 agents have no entrepreneurial skill and can save only through asset purchase. Under this condition, the counterfactual autarky interest rate is $R=0$. The numbers $G=\rho(A)$ and $G_d$ can be interpreted as (a lower bound of) the long-run economic and dividend growth rates. (The actual growth rate of the economy is of course endogenously determined because the risk-free rate is endogenous.) Thus the bubble necessity condition \eqref{eq:necessity_bewley1} exactly parallels that of Theorem \ref{thm:necessity} in \eqref{eq:necessity_OLG}.

Using $G=\rho(A)$, we can derive some comparative statics. Note that by Theorem 8.1.18 of \citet{HornJohnson2013}, $G=\rho(A)$ is increasing in each $z_n$. Therefore if the productivity of capital gets sufficiently high, the condition $G>G_d$ will be satisfied. This implies that technological innovations that enhance the overall productivity will inevitably generate asset price bubbles, which is consistent with the view in \citet[p.~22]{Scheinkman2014} that highlights the importance of the relationship between technological progress and asset price bubbles. Furthermore, applying an intermediate step of the proof of Proposition 5(iv) in \citet{BeareToda2022ECMA}, $G=\rho(A)$ is also increasing in the persistence of the Markov chain,\footnote{By ``increasing persistence'', we mean that we parameterize the transition probability matrix as $\tau I+(1-\tau)\Pi$ for $\tau\in [0,1)$ and increase $\tau$.} so if the persistence of productivity gets sufficiently high, bubbles inevitably arise.

\subsection{Bewley model with idiosyncratic preference shocks}\label{subsec:bewley2}

Finally, we consider a Bewley model with idiosyncratic preference shocks and endogenous labor supply. For analytical tractability, we employ quasi-linear preferences commonly used in monetary theory.\footnote{Representative papers include \citet{LagosWright2005} and \citet{RocheteauWright2005}, who study the welfare cost of inflation under various market structures (\eg, bargaining, price-taking, and price-posting). Our model is directly related to \citet{ChienWen2022}, who study optimal taxation in a heterogeneous-agent model with capital and labor. We modify this model by abstracting from capital and introducing a dividend-paying asset. The model of \citet{ChienWen2022} builds on \citet{Wen2015}, who studies a competitive equilibrium model with preference shocks and endogenous labor supply in a setting similar to \citet{LagosWright2005}.}

\paragraph{Model}

A representative firm produces the consumption good using the linear technology $Y_t=A_tL_t$, where $A_t>0$ is labor productivity and $L_t\ge 0$ is labor input. Thus the wage rate equals $A_t$. There is an asset in unit supply, which pays dividend $D_t\ge 0$ at time $t$. The sequence $\set{(A_t,D_t)}_{t=0}^\infty$ is exogenous and deterministic.

There is a unit mass of a continuum of agents indexed by $i\in I\in [0,1]$. A typical agent has the quasi-linear utility function
\begin{equation}
    \E_0\sum_{t=0}^\infty\beta^t[\theta_{it} u(c_{it})-\ell_{it}], \label{eq:ql}
\end{equation}
where $\beta\in (0,1)$ is the discount factor, $\theta_{it}>0$ is a preference shock, $c_{it}\ge 0$ is consumption, $\ell_{it}$ is labor supply, and $u$ is the period utility function. We assume that $u$ exhibits constant relative risk aversion $\gamma>0$ as in \eqref{eq:CRRA}. As is common in quasi-linear models, $\ell$ could be positive or negative, and we interpret the case $\ell<0$ as leisure $-\ell>0$.

The timing convention is as follows. At the beginning of period $t$, the agent first chooses the labor supply $\ell_{it}$. After choosing labor, the preference shock $\theta_{it}$ realizes, which is an \iid draw from a cumulative distribution function $F$ supported on $\Theta\coloneqq [\theta_L,\theta_H]$ with $0<\theta_L<\theta_H$. After observing $\theta_{it}$, the agent chooses consumption $c_{it}$ and asset holdings $x_{it}$. Intuitively, agents with high $\theta$ have an urge to consume because the weight on the utility from consumption is higher. Hence agents with low $\theta$ are the natural buyers of the asset.

\paragraph{Equilibrium}

The economy starts at $t=0$ with an initial endowment of asset $(x_{i,-1})_{i\in I}$, where $\int_I x_{i,-1}\diff i=1$. A \emph{rational expectations equilibrium} consists of sequences of wages and asset prices $\set{(A_t,P_t)}_{t=0}^\infty$ and allocations $\set{(c_{it},\ell_{it},x_{it})_{i\in I}}_{t=0}^\infty$ such that
\begin{enumerate}
    \item each agent maximizes utility \eqref{eq:ql} subject to the budget constraint
    \begin{equation}
        c_{it}+P_tx_{it}=w_{it}\coloneqq A_t\ell_{it}+(P_t+D_t)x_{i,t-1} \label{eq:budget_bewley2}
    \end{equation}
    and the shortsales constraint $x_{it}\ge 0$,
    \item the commodity market clears, so
    \begin{equation}
        \int_I c_{it}\diff i=A_t\int_I \ell_{it}\diff i+D_t, \label{eq:cclear}
    \end{equation}
    \item the asset market clears, so $\int_I x_{it}\diff i=1$.
\end{enumerate}

The following proposition characterizes the equilibrium dynamics.

\begin{prop}\label{prop:bewley2}
Letting $R_t\coloneqq (P_{t+1}+D_{t+1})/P_t$ be the equilibrium gross risk-free rate, there exists a threshold $\bar{\theta}_t\in \Theta$ such that the optimal consumption rule is
\begin{equation}
    c_t(\theta)=\left(\frac{A_{t+1}}{\beta R_t}\min\set{\theta,\bar{\theta}_t}\right)^{1/\gamma} \label{eq:crule}
\end{equation}
and the aggregate dynamics is
\begin{subequations}\label{eq:dynamics}
    \begin{align}
        \frac{1}{A_t}&=\frac{\beta R_t}{A_{t+1}}\int_\Theta\max\set{1,\theta/\bar{\theta}_t}\diff F(\theta), \label{eq:dynamics_A} \\
        P_t&=\left(\frac{A_{t+1}}{\beta R_t}\right)^{1/\gamma}\int_\Theta\max\set{0,\bar{\theta}_t^{1/\gamma}-\theta^{1/\gamma}}\diff F(\theta). \label{eq:dynamics_P} 
    \end{align}
\end{subequations}
\end{prop}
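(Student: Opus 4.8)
The plan is to exploit the quasi-linearity in labor, the standard Lagos--Wright device: it renders the continuation value affine in asset wealth and thereby collapses the dynamic program into a sequence of essentially static problems. Write the agent's problem recursively. Let $W_t(x)$ be the lifetime utility, discounted to date $t$, of an agent who enters period $t$ holding $x$ shares, \emph{before} choosing labor and before $\theta_{it}$ is drawn. After labor the agent holds cash-on-hand $m=A_t\ell+(P_t+D_t)x$ and, once $\theta$ realizes, solves a static consumption/portfolio problem with value $J_t(m,\theta)$, maximizing $\theta u(c)+\beta W_{t+1}(x')$ over $c\ge 0$, $x'\ge 0$ with $c+P_tx'=m$. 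Substituting $\ell=(m-(P_t+D_t)x)/A_t$ into the disutility term $-\ell$ yields
\begin{equation*}
    W_t(x)=\frac{P_t+D_t}{A_t}\,x+\max_m\Bigl(-\frac{m}{A_t}+\E_\theta J_t(m,\theta)\Bigr).
\end{equation*}
I would then prove by guess-and-verify (backward) induction that each $W_t$ is affine with slope $\lambda_t\coloneqq(P_t+D_t)/A_t$; concavity of $u$, hence of $m\mapsto J_t(m,\theta)$, together with the growth properties of CRRA $u$, guarantees the outer maximum is attained at an interior $m^*_t>0$. Crucially, the objective defining $m^*_t$ does not involve $x$, so \emph{every} agent reaches the consumption stage with the same cash-on-hand $m^*_t$.

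Next I would solve the static stage. Using $W_{t+1}(x')=\lambda_{t+1}x'+\mu_{t+1}$ and $x'=(m-c)/P_t$, the first-order condition for $c$ is $\theta u'(c)=\beta\lambda_{t+1}/P_t=\beta R_t/A_{t+1}$, the last equality by the definitions $\lambda_{t+1}=(P_{t+1}+D_{t+1})/A_{t+1}$ and $R_t=(P_{t+1}+D_{t+1})/P_t$; for CRRA $u$ this gives the unconstrained optimum $c^*(\theta)=(A_{t+1}\theta/(\beta R_t))^{1/\gamma}$, and the shortsales constraint $x'\ge 0$ binds exactly when $c^*(\theta)\ge m^*_t$. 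Defining the threshold by $c^*(\bar\theta_t)=m^*_t$ (and taking $\bar\theta_t$ to be the nearer endpoint of $\Theta$ in the degenerate cases with no interior crossing) gives $c_t(\theta)=\min\{c^*(\theta),m^*_t\}$, which is \eqref{eq:crule}; constrained agents ($\theta\ge\bar\theta_t$) hold $x_{it}=0$, and $m^*_t=(A_{t+1}\bar\theta_t/(\beta R_t))^{1/\gamma}$.

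Finally I would read off the aggregate equations. The envelope theorem gives $\partial J_t/\partial m(m^*_t,\theta)=\theta u'(c_t(\theta))=(\beta R_t/A_{t+1})\max\{1,\theta/\bar\theta_t\}$, using $u'(m^*_t)=(m^*_t)^{-\gamma}=\beta R_t/(A_{t+1}\bar\theta_t)$ on the constrained region; hence the first-order condition $1/A_t=\E_\theta[\partial J_t/\partial m(m^*_t,\theta)]$ for the common choice $m^*_t$ is precisely \eqref{eq:dynamics_A}. For \eqref{eq:dynamics_P}, each agent holds $x_{it}=(m^*_t-c_t(\theta_{it}))/P_t$; integrating over $i$ and invoking the exact law of large numbers for a continuum of \iid shocks \citep{SunZhang2009}, asset market clearing $\int_I x_{it}\diff i=1$ becomes $P_t=m^*_t-\int_\Theta c_t(\theta)\diff F(\theta)$, and substituting the closed forms for $m^*_t$ and $c_t(\theta)$ and combining terms under the integral yields \eqref{eq:dynamics_P}.

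The step I expect to be the main obstacle is making the affine form of $W_t$ rigorous rather than merely formal: one must verify that the outer maximization over $m$ (equivalently over labor) is attained and finite---this uses concavity together with the sublinear or bounded growth of CRRA $u$ and $R_t>0$---and one must handle the knife-edge case $R_t=A_{t+1}/(\beta A_t)$, in which the agent is indifferent over a range of $m$ and $\bar\theta_t$ is pushed to a boundary of $\Theta$. Everything downstream of the affine property is bookkeeping.
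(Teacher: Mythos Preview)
Your proposal is correct and reaches the same destination as the paper, but the route differs in presentation. You work recursively, positing an affine value function $W_t(x)=\lambda_t x+\mu_t$ in the Lagos--Wright style and reading off the consumption rule and aggregate equations from the envelope condition for the inner problem and the first-order condition for the outer choice of $m$. The paper instead works directly with sequential first-order conditions: it takes the labor FOC $\E_\theta[\theta u'(c_{it})]A_t=1$ and the asset FOC $-\theta_{it}u'(c_{it})P_t+\beta\E_t[\theta_{i,t+1}u'(c_{i,t+1})](P_{t+1}+D_{t+1})\le 0$, then substitutes the \emph{next-period} labor FOC $\E_t[\theta_{i,t+1}u'(c_{i,t+1})]=1/A_{t+1}$ into the Euler equation to obtain $\theta_{it}u'(c_{it})\ge \beta R_t/A_{t+1}$ without ever writing down a value function. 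The two derivations are equivalent---your slope $\lambda_{t+1}=(P_{t+1}+D_{t+1})/A_{t+1}$ is precisely what the paper's substitution encodes---but the paper's approach sidesteps the verification you flag as the main obstacle (well-definedness and attainment of the outer maximum), since it only needs the FOCs to hold at an equilibrium allocation rather than a globally valid Bellman equation.
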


The intuition for Proposition \ref{prop:bewley2} is as follows. Due to quasi-linear preferences, agents adjust labor supply to achieve a common wealth $w_{it}=w_t$ in \eqref{eq:budget_bewley2}. Agents with high $\theta$ do not save and set $c_{it}=w_t$, while those with low $\theta$ choose savings $x_{it}$ to satisfy the Euler equation, which explains the cutoff rule \eqref{eq:crule}. Conditions \eqref{eq:dynamics_A} and \eqref{eq:dynamics_P} are essentially the (unconditional) Euler equation and the asset market clearing condition.

Monetary theory interprets the integral
\begin{equation}
    R(\bar{\theta}_t)\coloneqq \int_\Theta\max\set{1,\theta/\bar{\theta}_t}\diff F(\theta) \label{eq:liquidity_premium}
\end{equation}
as the liquidity premium \citep[Equation (12)]{Wen2015}.\footnote{See also \citet*{GeromichalosLicariSuarez-Lledo2007}, \citet{Lagos2010JME}, and \citet{RocheteauWright2013} for asset pricing implications of the liquidity premium in monetary search models.} This is because if agents are liquidity-unconstrained, then $\bar{\theta}_t=\theta_H$ and \eqref{eq:dynamics_A} reduces to $1/A_t=\beta R_t/A_{t+1}$, which is the usual Euler equation. In general, the liquidity premium arises because constrained agents are prevented from shortselling the asset, which raises its price. To see this formally, combining \eqref{eq:dynamics_A} and \eqref{eq:dynamics_P}, we may write
\begin{equation}
    P_t=A_t^{1/\gamma}\left(\int_\Theta \max\set{1,\theta/\bar{\theta}_t}\diff F(\theta)\right)^{1/\gamma}\int_\Theta \max\set{0,\bar{\theta}_t^{1/\gamma}-\theta^{1/\gamma}}\diff F(\theta), \label{eq:Ptheta}
\end{equation}
so $P_t$ is proportional to $R(\bar{\theta}_t)^{1/\gamma}$ using \eqref{eq:liquidity_premium}. However, the liquidity premium does not affect whether the asset price exhibits a bubble or not because $R(\bar{\theta}_t)\in [1,\theta_H/\theta_L]$ is bounded (see Lemma \ref{lem:bubble}).

We introduce the following technical condition.

\begin{asmp}\label{asmp:delta}
There exists $\delta>0$ such that $0<F(\theta_L)=F(\theta_L+\delta)<1$.
\end{asmp}

Assumption \ref{asmp:delta} states that $\theta=\theta_L$ is an isolated point mass of the cumulative distribution function $F$. Assumption \ref{asmp:delta} holds, for example, if $F$ is a nondegenerate distribution taking finitely many values. Because agents with low $\theta$ are natural buyers of the asset, this assumption generates a demand for the asset, which allows us to bound the asset price from below using \eqref{eq:Ptheta}. The following theorem shows the necessity of asset price bubbles under the condition $R<G_d<G$.

\begin{thm}[Necessity of bubbles in Bewley model with preference shocks]\label{thm:bewley2}
If Assumption \ref{asmp:delta} holds and
\begin{equation}
    0<G_d\coloneqq \limsup_{t\to\infty} D_t^{1/t}<\liminf_{t\to\infty}A_t^{1/\gamma t}\eqqcolon G, \label{eq:necessity_bewley2}
\end{equation}
then all equilibria are asymptotically bubbly.
\end{thm}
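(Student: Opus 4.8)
The plan is to mirror the proof of Theorem~\ref{thm:necessity}: show that in \emph{every} equilibrium the asset price is bounded below by a positive multiple of the natural growth unit $A_t^{1/\gamma}$, and then read off both conclusions --- that there is a bubble, and that it is non-negligible relative to the economy --- from Lemma~\ref{lem:bubble} and the gap $G_d<G$. I would work throughout with Proposition~\ref{prop:bewley2}, using the identity $P_t=A_t^{1/\gamma}R(\bar\theta_t)^{1/\gamma}g(\bar\theta_t)$ obtained by substituting \eqref{eq:dynamics_A} into \eqref{eq:dynamics_P} --- here $R(\bar\theta_t)$ is the liquidity premium \eqref{eq:liquidity_premium}, bounded in $[1,\theta_H/\theta_L]$, and $g(\bar\theta_t)\coloneqq\int_\Theta\max\set{0,\bar\theta_t^{1/\gamma}-\theta^{1/\gamma}}\diff F(\theta)$ --- together with the no-arbitrage identity $R_t=(P_{t+1}+D_{t+1})/P_t=A_{t+1}/(\beta R(\bar\theta_t)A_t)$.

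\emph{Step 1 (positivity).} Since $G_d=\limsup_t D_t^{1/t}>0$, the dividends are positive infinitely often, so for each $t$ there is $s>t$ with $D_s>0$; then $P_t\ge q_sD_s/q_t>0$ by \eqref{eq:Pt}. Hence $P_t>0$ for all $t$, so Lemma~\ref{lem:bubble} is available.

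\emph{Step 2 (asymptotic relevance --- the crux).} I would prove $\liminf_t P_t/A_t^{1/\gamma}>0$, i.e.\ that the asset remains a non-negligible fraction of the economy. Writing $\phi_t\coloneqq P_t/A_t^{1/\gamma}=R(\bar\theta_t)^{1/\gamma}g(\bar\theta_t)$, Assumption~\ref{asmp:delta} is what makes this possible: with $p_L\coloneqq F(\theta_L)>0$ and no mass of $F$ on $(\theta_L,\theta_L+\delta]$, the atom at $\theta_L$ gives $\phi_t\ge g(\bar\theta_t)\ge p_L(\bar\theta_t^{1/\gamma}-\theta_L^{1/\gamma})$, so it suffices to keep the cutoff $\bar\theta_t$ bounded away from $\theta_L$ (intuitively, the low-$\theta$ agents always generate enough demand for the asset). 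Arguing by contradiction, suppose $\bar\theta_{t_k}\to\theta_L$ along a subsequence; then $\phi_{t_k}\to0$ and the liquidity premium $R(\bar\theta_{t_k})\to R(\theta_L)$, which drives the risk-free rate toward its floor, $R_{t_k}\sim A_{t_k+1}/(\beta R(\theta_L)A_{t_k})$. Combining the two identities above gives the scalar recursion $\phi_{t+1}=\beta^{-1}(A_{t+1}/A_t)^{1-1/\gamma}R(\bar\theta_t)^{-1}\phi_t-D_{t+1}/A_{t+1}^{1/\gamma}$; iterating it along the subsequence and comparing the implied growth of $\phi_t$ with the hypothesis $G_d<G=\liminf_t A_t^{1/\gamma t}$ yields the contradiction. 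This is the precise analogue, for this model, of the OLG mechanism in which an asymptotically irrelevant asset pushes the interest rate below the dividend growth rate, making the present value of dividends infinite. I expect this step to be the main obstacle: turning ``$\bar\theta_t$ near $\theta_L$'' into an impossibility is not bookkeeping and is where Assumption~\ref{asmp:delta}, the liquidity-premium bounds, and the growth comparison all have to be used together.

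\emph{Step 3 (conclusion).} Granting Step 2, there are $c>0$ and $T$ with $P_t\ge cA_t^{1/\gamma}$ for $t\ge T$; since $A_t^{1/\gamma}$ is (a proxy for) the size of the economy, this already says the equilibrium is asymptotically bubbly in the sense of the definition for production economies, and in particular $\liminf_t P_t^{1/t}\ge G$. It remains to check there is genuinely a bubble: fix $G'\in(G_d,G)$; as $\liminf_t A_t^{1/\gamma t}=G>G'$ we have $A_t^{1/\gamma}\ge(G')^t$ eventually, hence $D_t/P_t\le c^{-1}D_t/(G')^t$ eventually, and $\limsup_t\bigl(D_t/(G')^t\bigr)^{1/t}=G_d/G'<1$, so $\sum_t D_t/P_t<\infty$ by the root test. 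By Lemma~\ref{lem:bubble} the asset exhibits a bubble. Since the argument used nothing about the particular equilibrium, every equilibrium is asymptotically bubbly.
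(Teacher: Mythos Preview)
Your overall plan---establish a uniform lower bound $P_t\ge p\,A_t^{1/\gamma}$ and then read off both the bubble and its asymptotic relevance from Lemma~\ref{lem:bubble} and the gap $G_d<G$---is exactly the paper's, and your Steps~1 and~3 coincide with the paper's argument essentially verbatim.

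Where you diverge is Step~2, and you have made it much harder than it needs to be. You correctly isolate the identity \eqref{eq:Ptheta} and note that it suffices to keep $\bar\theta_t$ bounded away from $\theta_L$; but you then launch a dynamical contradiction argument (suppose $\bar\theta_{t_k}\to\theta_L$, derive the recursion for $\phi_t$, iterate, compare growth rates) imported from the OLG proof. That sketch is both unnecessary and, as written, gappy: ``iterating along the subsequence'' does not control the intervening $\bar\theta_t$'s, the coefficient $(A_{t+1}/A_t)^{1-1/\gamma}$ in your recursion is not obviously bounded in the right direction, and you never say how the growth comparison actually closes. The paper bypasses all of this. Assumption~\ref{asmp:delta} is tailored precisely so that Step~2 is a one-liner: $P_t>0$ forces $\bar\theta_t>\theta_L$ by \eqref{eq:Ptheta}, and the gap in the support of $F$ above $\theta_L$ then gives $\bar\theta_t\ge\theta_L+\delta$, whence immediately
\[
P_t\;\ge\;A_t^{1/\gamma}\bigl((\theta_L+\delta)^{1/\gamma}-\theta_L^{1/\gamma}\bigr)F(\theta_L)\;\eqqcolon\;p\,A_t^{1/\gamma}
\]
for \emph{every} $t$, not merely asymptotically. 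There is no recursion, no subsequence, no interest-rate limit; the OLG machinery from Lemmas~\ref{lem:p_ratio}--\ref{lem:relevant} is simply not needed in this model. The lesson is that Assumption~\ref{asmp:delta} is doing structural work (it pins $\bar\theta_t$ away from $\theta_L$ statically), not just providing a mass point to feed into an asymptotic argument.
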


The interpretation of the bubble necessity condition \eqref{eq:necessity_bewley2} is similar to that of Theorem \ref{thm:bewley1}. To see why, consider a complete-market setting where the preference shock $\theta_{it}=\theta$ is constant. Then setting $\ell_{it}=\ell_t$ and $x_{it}=1$ in the budget constraint \eqref{eq:budget_bewley2}, individual consumption is $c_{it}=c_t=A_t\ell_t+D_t$. Maximizing the utility function \eqref{eq:ql} (with relative risk aversion $\gamma$) with respect to $\ell_t$ yields the first-order condition
\begin{equation*}
    \theta c_t^{-\gamma}A_t=1\iff c_t=(\theta A_t)^{1/\gamma}.
\end{equation*}
Thus a lower bound of long-run economic (consumption) growth is
\begin{equation*}
    \liminf_{t\to\infty}c_t^{1/t}=\liminf_{t\to\infty} A_t^{1/\gamma t},
\end{equation*}
which explains the condition \eqref{eq:necessity_bewley2}.

\section{Conclusion}

In this paper we presented a conceptually new perspective on thinking about asset price bubbles: their \emph{necessity}. We showed a plausible general class of economic models with dividend-paying assets in which the emergence of bubbles is a necessity by proving that all equilibria are asymptotically bubbly. This surprising insight of the necessity of bubbles is fundamentally different from economists' long-held view that bubbles are either not possible in rational equilibrium models or even if they are, a situation in which bubbles occur is a special circumstance and hence fragile. This is also conceptually different from rational bubble models (most of which are monetary models) that show the possibility of bubbles. We emphasize that the necessity of bubbles naturally arises in workhorse models in macro-finance. Hence, our Bubble Necessity Theorem challenges the conventional wisdom on bubbles and may open up a new direction for research.

In a series of working papers, we and our collaborators apply the idea of the necessity of asset price bubbles to macro-finance \citep*{HiranoJinnaiTodaLeverage}, housing \citep{HiranoTodaHousingbubble}, and growth \citep{HiranoTodaUnbalanced}. Interestingly and importantly, in all of these models, bubbles naturally and necessarily arise.

\appendix

\section{Proofs}\label{sec:proof}

\subsection{Proof of Theorem \ref{thm:exist}}

We need several lemmas to prove Theorem \ref{thm:exist}.

\begin{lem}\label{lem:asset_pricing}
In equilibrium, the asset pricing equation \eqref{eq:foc} holds.
\end{lem}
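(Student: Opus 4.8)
The plan is to read off \eqref{eq:foc} from the first-order conditions of the young's optimization problem evaluated at the equilibrium asset holding $x_t=1$. Fix a generation $t$ and substitute the budget constraints \eqref{eq:budget}; the young choose $x$ to maximize
\[
    \phi(x)\coloneqq U_t(a_t-P_tx,\,b_{t+1}+(P_{t+1}+D_{t+1})x)
\]
over the interval of $x$ for which both arguments are nonnegative. Being the composition of the quasi-concave $U_t$ with an affine map, $\phi$ is quasi-concave, hence single-peaked, on that interval; by the equilibrium conditions (individual optimization together with market clearing $x_t=1$) the point $x=1$ maximizes $\phi$, and feasibility of $x=1$ forces $P_t\le a_t$, so $y_t\coloneqq a_t-P_t\ge 0$. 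Write $\pi\coloneqq P_{t+1}+D_{t+1}\ge 0$.

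Using that $U_t$ has positive partials (no satiation), I would first rule out the configurations $P_t=0$ with $\pi>0$ (then raising $x$ costs nothing yet raises the second argument of $\phi$, so $x=1$ is not optimal) and $P_t>0$ with $\pi=0$ (then lowering $x$ costs nothing yet raises the first argument, again a contradiction). In the remaining flat case $P_t=\pi=0$, $\phi$ is constant, $(y_t,z_{t+1})=(a_t,b_{t+1})$, and the right-hand side of \eqref{eq:foc} equals $\min\set{M_t(a_t,b_{t+1})\cdot 0,\,a_t}=0=P_t$ (with the convention $0\cdot\infty=0$ in the sub-case $b_{t+1}=0$). So assume $P_t,\pi>0$; then the feasible interval is $[-b_{t+1}/\pi,\,a_t/P_t]$, $a_t/P_t\ge 1$, and $z_{t+1}=b_{t+1}+\pi>0$.

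The core step is the first-order analysis at $x=1$. If $P_t<a_t$, then $x=1$ lies in the interior of the feasible interval and $\phi$ is differentiable there (both arguments are positive), so $\phi'(1)=0$, that is, $-P_t(U_t)_y(y_t,z_{t+1})+\pi(U_t)_z(y_t,z_{t+1})=0$, which by \eqref{eq:MRS} rearranges to $P_t=M_t(y_t,z_{t+1})\pi$; since this quantity equals $P_t<a_t$, it is the minimum in \eqref{eq:foc}. If instead $P_t=a_t$, then $y_t=0$ and $x=1$ is the right endpoint $a_t/P_t$; for $x$ slightly below $1$, $\phi$ is differentiable, and since $x=1$ maximizes the single-peaked $\phi$, $\phi$ is non-decreasing just to the left of $1$, so $\phi'(x)=-P_t(U_t)_y+\pi(U_t)_z\ge 0$, equivalently $M_t(a_t-P_tx,\,b_{t+1}+\pi x)\ge P_t/\pi$, for such $x$. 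Letting $x\uparrow 1$ and reading $M_t$ at the edge $\{y=0\}$ as the resulting one-sided limit gives $M_t(0,z_{t+1})\pi\ge a_t=P_t$, so again $P_t=\min\set{M_t(0,z_{t+1})\pi,\,a_t}$, which is \eqref{eq:foc} with $y_t=0$.

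I expect the boundary case $P_t=a_t$ (zero consumption when young) to be the one delicate point: there $U_t$ need not be differentiable, so in place of the plain first-order condition one leans on quasi-concavity together with a one-sided limit of the marginal rate of substitution, and one must make sense of $M_t$ at the edge of its domain. When the marginal utility of young consumption diverges as $y\downarrow 0$ (an Inada condition), $\phi'(x)\to-\infty$ as $x\uparrow 1$, which contradicts $\phi'\ge 0$; hence $P_t=a_t$ then cannot arise and the difficulty is vacuous, while if that limit is finite the argument above goes through. The interior first-order condition and the flat corner are routine computations.
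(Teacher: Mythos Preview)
Your proposal is correct and follows essentially the same approach as the paper: both derive \eqref{eq:foc} from the first-order conditions of the young's constrained problem at $x=1$, distinguishing the interior case $P_t<a_t$ from the corner $P_t=a_t$. The paper packages this via a Lagrangian with multiplier $\lambda$ on the constraint $a_t-P_tx\ge 0$, while you work directly with the single-peakedness of $\phi$ and one-sided derivatives; these are equivalent formulations, and your treatment of the degenerate cases $P_t=0$ or $P_{t+1}+D_{t+1}=0$ and of the edge behavior of $M_t$ at $y=0$ is, if anything, more explicit than the paper's.
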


\begin{proof}
Take any equilibrium. To simplify notation, let $U_t=U$, $P_t=P$, $P_{t+1}=P'$, and $D_{t+1}=D'$, etc. Using the budget constraint \eqref{eq:budget} to eliminate $y,z'$, the young seek to solve
\begin{subequations}
\begin{align}
    &\maximize && U(a-Px,b'+(P'+D')x) \label{eq:Ux}\\
    &\st && a-Px\ge 0, \label{eq:y>0}\\
    &   && b'+(P'+D')x\ge 0. \label{eq:z>0}
\end{align}
\end{subequations}
In equilibrium, market clearing forces $x=1$. Since $b'>0$ and $P'+D'\ge 0$, the nonnegativity constraint \eqref{eq:z>0} never binds at $x=1$. Let $\lambda\ge 0$ be the Lagrange multiplier associated with the nonnegativity constraint \eqref{eq:y>0} and let
\begin{equation*}
    L(x,\lambda)=U(a-Px,b'+(P'+D')x)+\lambda(a-Px)
\end{equation*}
be the Lagrangian. The first-order condition implies
\begin{equation}
    -PU_y+(P'+D')U_z-\lambda P=0\iff P=M(P'+D')-\frac{\lambda}{U_y}P, \label{eq:foc2}
\end{equation}
where $M=U_z/U_y$ and $U_y,U_z$ are evaluated at $(a-P,b'+P'+D')$. If $P<a$, then the nonnegativity constraint \eqref{eq:y>0} does not bind, $\lambda=0$, and \eqref{eq:foc2} reduces to $P=M(P'+D')$ and \eqref{eq:foc} holds. If $P=a$, then the nonnegativity constraint \eqref{eq:y>0} binds, $\lambda\ge 0$, and \eqref{eq:foc2} implies $M(P'+D')=a+(\lambda/U_y)a\ge a$, so \eqref{eq:foc} holds.
\end{proof}

\begin{lem}\label{lem:backward_induction}
For all $P_{t+1}\ge 0$, there exists $P_t\in [0,a_t]$ that satisfies \eqref{eq:foc}.
\end{lem}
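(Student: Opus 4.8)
The plan is to recast \eqref{eq:foc}, for a fixed $P_{t+1}\ge 0$, as a scalar fixed-point equation in the unknown $P_t$ and solve it by the intermediate value theorem. Write $c\coloneqq P_{t+1}+D_{t+1}\ge 0$. If $c=0$, then $P_t=0$ satisfies \eqref{eq:foc} trivially, so assume $c>0$; then $z_{t+1}\coloneqq b_{t+1}+c>0$ is a fixed positive number and we may set
\[
  g(P)\coloneqq \min\bigl\{M_t(a_t-P,\,z_{t+1})\,c,\ a_t\bigr\},\qquad P\in[0,a_t).
\]
By Assumption \ref{asmp:U}, $M_t$ is continuous and strictly positive on $\R_{++}^2$, so $g$ is continuous and $(0,a_t]$-valued on $[0,a_t)$; in particular $g(0)>0$. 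A point $P\in[0,a_t)$ solves \eqref{eq:foc} precisely when it is a fixed point of $g$, so it suffices to exhibit $P^{*}\in[0,a_t]$ with $g(P^{*})=P^{*}$ (the endpoint being interpreted as below).

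First I would apply the intermediate value theorem to $\psi(P)\coloneqq g(P)-P$, continuous on $[0,a_t)$ with $\psi(0)=g(0)>0$. If $\psi(P_1)\le 0$ for some $P_1\in(0,a_t)$ --- equivalently $M_t(a_t-P_1,z_{t+1})\,c\le P_1$, since for $P_1<a_t$ the $\min$ exceeds $P_1$ whenever $M_t(\cdot)\,c$ does --- then $\psi$ vanishes at some $P^{*}\in(0,P_1]\subset(0,a_t)$. This gives an interior solution of \eqref{eq:foc} with $y_t=a_t-P^{*}>0$, so $M_t$ is evaluated strictly inside its domain and nothing further need be said.

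The remaining case is $\psi>0$ on all of $[0,a_t)$, i.e.\ $M_t(a_t-P,z_{t+1})\,c>P$ for every $P\in[0,a_t)$. Letting $P\uparrow a_t$, equivalently $y\coloneqq a_t-P\downarrow 0$, yields $\liminf_{y\downarrow 0}M_t(y,z_{t+1})\,c\ge a_t$. Hence, reading $M_t(0,z_{t+1})\in(0,\infty]$ as the one-sided limit --- the convention consistent with the discussion following Theorem \ref{thm:exist}, where $U_y(0,z)$ is admitted as a (possibly infinite) continuous extension --- the value $P_t=a_t$, so $y_t=0$, makes the bracket in \eqref{eq:foc} at least $a_t$, whence the $\min$ equals $a_t=P_t$ and \eqref{eq:foc} holds. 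In either case we have produced $P_t\in[0,a_t]$ satisfying \eqref{eq:foc}.

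The step I expect to be the main obstacle is exactly this last, boundary case: it is where the $\min$ in \eqref{eq:foc} is genuinely active (as in the linear-utility Example \ref{exmp:linear} with $\beta$ large), and it forces one to make sense of the marginal rate of substitution at $y_t=0$, a point outside the domain $\R_{++}^2$ on which Assumption \ref{asmp:U} grants differentiability. The resolution is to interpret \eqref{eq:foc} at $P_t=a_t$ as the inequality $M_t(0,z_{t+1})\,c\ge a_t$ with $M_t(0,\cdot)$ the (possibly infinite) one-sided limit, mirroring the paper's own handling of the binding constraint $y_t\ge 0$ in the proof of Lemma \ref{lem:asset_pricing}. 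The remaining ingredients --- continuity of $g$, positivity of $g(0)$, and the intermediate value theorem --- are routine.
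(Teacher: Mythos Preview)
Your proposal is correct and follows essentially the same route as the paper: form the gap function between the right- and left-hand sides of \eqref{eq:foc}, note it is nonnegative at $P=0$, and apply the intermediate value theorem, with the corner $P_t=a_t$ handled separately when no interior zero exists.

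Two small points of contrast. First, the paper additionally invokes quasi-concavity of $U_t$ to argue that $M_t(a_t-P,\,\cdot)$ is monotone in $P$, which makes the auxiliary function strictly decreasing; this is not needed for existence (as your argument shows) but pins down a unique interior root. Second, at the corner the paper simply evaluates $M_t(0,z_{t+1})$ without comment, whereas you take the one-sided $\liminf$; your treatment is the more careful of the two, and your reading of the boundary case as the inequality $M_t(0,z_{t+1})\,c\ge a_t$ matches exactly how the paper interprets \eqref{eq:foc} when the nonnegativity constraint binds (Lemma \ref{lem:asset_pricing}).
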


Define the function $f:[0,a]\to \R$ by
\begin{equation*}
    f(P)=(P'+D')M(a-P,b'+P'+D')-P.
\end{equation*}
By Assumption \ref{asmp:U}, $f$ is continuous. Since $U$ is quasi-concave, the marginal rate of substitution $M$ in \eqref{eq:MRS} is increasing in $y$. Therefore $M(a-P,b'+P'+D')$ is decreasing in $P$, so $f$ is strictly decreasing. Clearly
\begin{equation*}
    f(0)=(P'+D')M(a,b'+P'+D')\ge 0.
\end{equation*}
If $f(a)>0$, then the definition of $f$ implies that
\begin{equation*}
    a<(P'+D')M(0,b'+P'+D'),
\end{equation*}
so \eqref{eq:foc} holds with $P=a$. If $f(a)\le 0$, by the intermediate value theorem, there exists $P\in [0,a]$ such that $f(P)=0$, which clearly satisfies \eqref{eq:foc}.

\begin{proof}[Proof of Theorem \ref{thm:exist}]
Although the existence of equilibrium follows from \citet[Theorem 1]{Wilson1981}, to make the paper self-contained, we present a standard truncation argument as in \citet{BalaskoShell1980}. Define the set $\cA\coloneqq \prod_{t=0}^\infty [0,a_t]$ endowed with the product topology induced by the Euclidean topology on $[0,a_t]\subset \R$ for all $t$. By Tychonoff's theorem, $\cA$ is nonempty and compact.

Define a \emph{$T$-equilibrium} by a sequence $\set{P_t}_{t=0}^\infty$ such that $P_t\in [0,a_t]$ for all $t$ and the asset pricing equation \eqref{eq:foc} holds for $t=0,\dots,T-1$. Let $\cP_T\subset \cA$ be the set of all $T$-equilibria. For any sequence $\set{P_t}_{t=T}^\infty$ such that $P_t\in [0,a_t]$ for all $t\ge T$, repeatedly applying Lemma \ref{lem:backward_induction}, by backward induction we can construct a $T$-equilibrium $\set{P_t}_{t=0}^\infty$. Therefore $\cP_T\neq\emptyset$.

Since $U_t$ is continuously differentiable, the marginal rate of substitution $M_t$ is continuous, and hence $\cP_T$ is closed. Furthermore, by the definition of the $T$-equilibrium, we have $\cP_T\supset \cP_{T+1}$ for all $T$. Since $\cP_T\subset \cA$ and $\cA$ is compact, we have $\cP\coloneqq \bigcap_{t=0}^\infty \cP_t\neq\emptyset$. If we take any $\set{P_t}_{t=0}^\infty\in \cP$, by definition \eqref{eq:foc} holds for all $t$. The quasi-concavity of $U_t$ implies that we have an equilibrium.
\end{proof}

\subsection{Proof of Theorem \ref{thm:necessity}}

Take any equilibrium $\set{P_t}_{t=0}^\infty$. Let $p_t=P_t/a_t$ and $d_t=D_t/a_t$ be the asset price and dividend detrended by the endowment of the young. Let $G_t=a_{t+1}/a_t$ be the endowment growth rate and $w_t=b_t/a_t$ be the old-to-young endowment ratio. Since $G_d>0$, we have $D_t>0$ infinitely often, so by the remark after Lemma \ref{lem:bubble}, we have $0<P_t\le a_t$. Dividing both sides by $a_t>0$, we obtain $0<p_t\le 1$. Since $D_t\ge 0$, we have $d_t=D_t/a_t\ge 0$.

We need several lemmas to prove Theorem \ref{thm:necessity}.

\begin{lem}\label{lem:Da}
We have $\sum_{t=1}^\infty D_t/a_t<\infty$. In particular, $\lim_{t\to\infty}d_t=0$.
\end{lem}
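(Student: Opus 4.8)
The plan is to obtain this purely from the rate comparison $G_d<G$ in \eqref{eq:necessity_OLG} by an elementary geometric-series estimate: since the endowments $a_t$ grow exponentially at rate $G$ while the dividends $D_t$ grow at exponential rate at most $G_d<G$, the ratio $D_t/a_t$ decays geometrically and is therefore summable. Notably this step uses nothing about the equilibrium itself; it relies only on Assumption~\ref{asmp:ab} and the definition of $G_d$.

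Concretely, I would first fix intermediate rates $G_d<\rho_1<\rho_2<G$. From $\limsup_{t\to\infty}D_t^{1/t}=G_d<\rho_1$ there is $T_1$ with $D_t\le\rho_1^{\,t}$ for all $t\ge T_1$; this also covers periods with $D_t=0$, for which $D_t^{1/t}=0<\rho_1$ trivially. From \eqref{eq:G}, $a_{t+1}/a_t\to G>\rho_2$, so there is $T_2$ with $a_{t+1}/a_t\ge\rho_2$ for all $t\ge T_2$, and iterating this inequality gives $a_t\ge c\,\rho_2^{\,t}$ for all $t\ge T_2$, where $c\coloneqq a_{T_2}\rho_2^{-T_2}>0$. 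Putting the two bounds together, for every $t\ge\max\set{T_1,T_2}$ we get $D_t/a_t\le c^{-1}(\rho_1/\rho_2)^t$.

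Since $\rho_1/\rho_2<1$, the tail of $\sum_t D_t/a_t$ is dominated by a convergent geometric series, and adding the finitely many remaining terms yields $\sum_{t=1}^\infty D_t/a_t<\infty$; the conclusion $d_t=D_t/a_t\to0$ is then immediate from the fact that the summands of a convergent series tend to zero. The whole argument is routine, so there is no real obstacle; the only places calling for a little care are treating possibly-vanishing dividends in the $\limsup$ step and converting the ratio limit \eqref{eq:G} into a genuine exponential lower bound on $a_t$.
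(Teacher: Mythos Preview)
Your argument is correct and essentially identical to the paper's own proof: both pick intermediate rates between $G_d$ and $G$, bound $D_t$ above by a geometric sequence via the $\limsup$ definition and $a_t$ below by a geometric sequence via the ratio limit~\eqref{eq:G}, and conclude that $D_t/a_t$ is dominated by a summable geometric series. The only cosmetic difference is notation ($\rho_1,\rho_2$ versus $G_d+\epsilon,G-\epsilon$).
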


\begin{proof}
Using \eqref{eq:G} and \eqref{eq:necessity_OLG}, we can take $\epsilon>0$ and $T>0$ such that
\begin{equation*}
    D_t^{1/t}<G_d+\epsilon<G-\epsilon<a_{t+1}/a_t
\end{equation*}
for $t\ge T$. Therefore
\begin{equation*}
    \frac{D_t}{a_t}=\frac{D_t}{a_T}\left(\prod_{s=T}^{t-1} \frac{a_{s+1}}{a_s}\right)^{-1}<\frac{(G_d+\epsilon)^t}{a_T}(G-\epsilon)^{T-t}=\frac{(G-\epsilon)^T}{a_T}\left(\frac{G_d+\epsilon}{G-\epsilon}\right)^t,
\end{equation*}
which is summable because $G_d+\epsilon<G-\epsilon$.
\end{proof}

\begin{lem}\label{lem:p_ratio}
The following statement is true:
\begin{equation*}
    (\exists r>0)(\exists T>0)(\forall t\ge T)\quad p_t\in (0,1/2)\implies \frac{p_{t+1}}{p_t}\le r.
\end{equation*}
In other words, there exists a universal constant $r>0$ such that $p_{t+1}/p_t\le r$ for all large enough $t$ whenever $p_t<1/2$.
\end{lem}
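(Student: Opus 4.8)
The plan is to use the asset pricing equation \eqref{eq:foc} in the regime where the young's nonnegativity constraint is slack. First I would observe that $p_t<1/2$ forces $P_t<a_t$, so the minimum in \eqref{eq:foc} is attained by its first term and the Euler equation $P_t=M_t(y_t,z_{t+1})(P_{t+1}+D_{t+1})$ holds with equality. Substituting $P_{t+1}+D_{t+1}=a_{t+1}(p_{t+1}+d_{t+1})$, $y_t=a_t(1-p_t)$, $z_{t+1}=a_{t+1}(w_{t+1}+p_{t+1}+d_{t+1})$, and the identity $M_t(a_ty,a_tz)=1/f_t(y,z)$ from \eqref{eq:ft}, then dividing through by $a_t$ and writing $G_t=a_{t+1}/a_t$, this becomes
\begin{equation*}
    p_{t+1}+d_{t+1}=\frac{f_t\bigl(1-p_t,\;G_t(w_{t+1}+p_{t+1}+d_{t+1})\bigr)}{G_t}\,p_t .
\end{equation*}
Since $d_{t+1}\ge 0$, it then suffices to bound $f_t(\cdot,\cdot)/G_t$ above by a universal constant for all large $t$ whenever $p_t<1/2$.

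The second step --- the one requiring the most care --- is to confine the argument of $f_t$ to a single compact subset of $\R_{++}\times\R_+$ valid for all large $t$. The first coordinate is $1-p_t\in(1/2,1)\subset[1/2,1]$ precisely because $p_t<1/2$; this is exactly why the hypothesis $p_t<1/2$ is needed, as it keeps us bounded away from the boundary $y=0$, on which Assumption \ref{asmp:f} provides no uniform control. For the second coordinate I would fix $\epsilon\in(0,G)$ and use $G_t\to G$ and $w_t\to w$ from \eqref{eq:G}--\eqref{eq:w}, $d_t\to 0$ from Lemma \ref{lem:Da}, together with the a priori bound $p_{t+1}\le 1$ established before the lemma, to conclude $G_t(w_{t+1}+p_{t+1}+d_{t+1})\le(G+\epsilon)(w+1+2\epsilon)\eqqcolon M$ for all large $t$. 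Hence for large $t$ the argument of $f_t$ lies in $K\coloneqq[1/2,1]\times[0,M]$, which is compact and contained in $\R_{++}\times\R_+$; note that no lower bound on the second coordinate is needed, which is why it is harmless that $f_t$ is only defined on $\R_{++}\times\R_+$ and not on $\R_{++}^2$.

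Finally I would invoke Assumption \ref{asmp:f}: $f$ is continuous, hence bounded on the compact set $K$ by some $\bar f$, and uniform convergence on $K$ gives $f_t\le\bar f+1$ on $K$ for all large $t$, while simultaneously $G_t\ge G-\epsilon>0$ for large $t$. Setting $r\coloneqq(\bar f+1)/(G-\epsilon)$ --- a constant independent of $t$ and of the chosen equilibrium --- and taking $T$ large enough that all the eventual inequalities above hold, I obtain $p_{t+1}\le p_{t+1}+d_{t+1}\le rp_t$ for every $t\ge T$ with $p_t<1/2$, which is the claim. The genuine obstacle is step two, and the key point is that the hypothesis $p_t<1/2$ controls the first coordinate of the argument of $f_t$, while the convergence of $G_t$, $w_t$, $D_t/a_t$ and the trivial bound $p_{t+1}\le 1$ control the second.
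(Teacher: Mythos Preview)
Your proposal is correct and follows essentially the same route as the paper: derive the detrended Euler equation, confine the argument of $f_t$ to a fixed compact set $K\subset\R_{++}\times\R_+$ using $p_t<1/2$ for the first coordinate and the bounds $p_{t+1}\le 1$, $d_{t+1}\to 0$, $G_t\to G$, $w_t\to w$ for the second, then invoke the uniform convergence in Assumption~\ref{asmp:f} to bound $f_t$ on $K$ and divide by a lower bound on $G_t$. Your choice $K=[1/2,1]\times[0,M]$ differs only cosmetically from the paper's $K=[1/2,1]\times[\ubar{G}\ubar{w},\bar{G}(\bar{w}+1)]$; taking $0$ as the lower bound on the second coordinate is perfectly valid since $f$ is assumed continuous on $\R_{++}\times\R_+$.
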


\begin{proof}
Suppose $p_t<1/2$. Then in particular $p_t<1$ and $P_t<a_t$, so \eqref{eq:foc} holds without the min operator with $a_t$. Dividing both sides by $a_t>0$ and using the definition of the forward rate function in \eqref{eq:ft}, we obtain
\begin{equation}
    \frac{G_{t+1}(p_{t+1}+d_{t+1})}{p_t}=f_t(1-p_t,G_{t+1}(w_{t+1}+p_{t+1}+d_{t+1})). \label{eq:foc/a}
\end{equation}
With a slight abuse of notation, let
\begin{equation}
    (y_t,z_t)\coloneqq (1-p_t,G_{t+1}(w_{t+1}+p_{t+1}+d_{t+1})). \label{eq:yz}
\end{equation}
Then using $d_t\ge 0$ and \eqref{eq:foc/a}, we obtain 
\begin{equation}
    0<\frac{p_{t+1}}{p_t}\le \frac{p_{t+1}+d_{t+1}}{p_t}=\frac{1}{G_{t+1}}f_t(y_t,z_t). \label{eq:p_ratio}
\end{equation}

By Assumption \ref{asmp:ab}, we can take $0<\ubar{G}<G<\bar{G}$ and $0\le \ubar{w}\le w<\bar{w}$ such that $G_t\in (\ubar{G},\bar{G})$ and $w_t\in [\ubar{w},\bar{w})$ for large enough $t$. Define the compact set
\begin{equation}
    K\coloneqq [1/2,1]\times [\ubar{G}\ubar{w},\bar{G}(\bar{w}+1)] \subset \R_{++}\times \R_+. \label{eq:K}
\end{equation}
Since $0<p_{t+1}\le 1$ and $d_{t+1}\to 0$ by Lemma \ref{lem:Da}, it follows from the definitions of $(y_t,z_t)$ in \eqref{eq:yz} and $K$ in \eqref{eq:K} that
\begin{equation}
    (\exists T_1)(\forall t\ge T_1)\quad p_t<1/2 \implies (y_t,z_t)\in K, \label{eq:yzinK}
\end{equation}
that is, $(y_t,z_t)\in K$ for all large enough $t$ whenever $p_t<1/2$. In general, for any nonempty compact set $K\subset \R_{++}\times \R_+$, define
\begin{equation}
    0\le \bar{f}(K)\coloneqq \max_{(y,z)\in K}f(y,z), \label{eq:fK}
\end{equation}
which is well defined because $f$ is continuous by Assumption \ref{asmp:f}. By Assumption \ref{asmp:f}, we have
\begin{equation*}
    (\exists T_2>0)(\forall t\ge T_2)\quad (y,z)\in K \implies \abs{f_t(y,z)-f(y,z)}<1.
\end{equation*}
In particular, if $t\ge T_2$, by the definition of $\bar{f}$ is \eqref{eq:fK}, we have
\begin{equation}
    (\forall t\ge T_2)\quad (y_t,z_t)\in K \implies f_t(y_t,z_t)\le \bar{f}(K)+1. \label{eq:ft_ub}
\end{equation}

Define $T=\max\set{T_1,T_2}$. If $t\ge T$ and $p_t<1/2$, then \eqref{eq:yzinK} implies $(y_t,z_t)\in K$. Therefore putting all the pieces together, we obtain
\begin{align*}
    \frac{p_{t+1}}{p_t}&\le \frac{1}{G_{t+1}}f_t(y_t,z_t) && (\because \eqref{eq:p_ratio})\\
    &\le \frac{1}{\ubar{G}}(\bar{f}(K)+1)\eqqcolon r. && (\because G_{t+1}\ge \ubar{G}, \eqref{eq:ft_ub}) \qedhere
\end{align*}
\end{proof}

\begin{lem}\label{lem:p_ratio2}
The following statement is true:
\begin{equation}
    (\exists \epsilon>0)(\exists T>0)(\forall t\ge T) \quad p_t<\epsilon\implies \frac{p_{t+1}}{p_t}\le \frac{G_d}{G}<1. \label{eq:statement2}
\end{equation}
\end{lem}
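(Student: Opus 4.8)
The plan is to sharpen the proof of Lemma~\ref{lem:p_ratio}. Exactly as there, if $t$ is large and $p_t<\epsilon$ for some $\epsilon\in(0,1/2)$, then $P_t<a_t$, so \eqref{eq:foc} holds without the $\min$; dividing by $a_t$ and writing $(y_t,z_t)=(1-p_t,G_{t+1}(w_{t+1}+p_{t+1}+d_{t+1}))$ as in \eqref{eq:yz} yields the inequality \eqref{eq:p_ratio}, namely $0<p_{t+1}/p_t\le f_t(y_t,z_t)/G_{t+1}$. The point is that now, because $p_t$ (hence also $p_{t+1}$, after a preliminary bound) is \emph{small}, the pair $(y_t,z_t)$ is forced into a small neighborhood of $(1,Gw)$, where $f(1,Gw)<G_d$ by \eqref{eq:necessity_OLG}; so I want to replace the crude bound $f_t(y_t,z_t)\le\bar f(K)+1$ used in Lemma~\ref{lem:p_ratio} by the sharp bound $f_t(y_t,z_t)/G_{t+1}\le G_d/G$.

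Concretely I would proceed as follows. First, since $f(1,Gw)<G_d<G$, fix $\eta>0$ so small that $G(f(1,Gw)+\eta)\le G_d(G-\eta)$ (possible since $G_d-f(1,Gw)>0$); this is chosen so that $(f(1,Gw)+\eta)/(G-\eta)\le G_d/G$. Second, on the compact box $N\coloneqq[1/2,3/2]\times[0,Gw+1]\subset\R_{++}\times\R_+$, combine the uniform convergence $f_t\to f$ from Assumption~\ref{asmp:f} with the continuity of $f$ at $(1,Gw)$ to produce $\delta_0>0$ and $T_0$ such that $f_t(y,z)<f(1,Gw)+\eta$ whenever $t\ge T_0$ and $(y,z)\in N$ lies within $\delta_0$ of $(1,Gw)$. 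Third, combine Lemma~\ref{lem:p_ratio} (so that $p_{t+1}\le rp_t<r\epsilon$ for $t$ beyond some threshold) with $\lim_t d_t=0$ from Lemma~\ref{lem:Da} and with $G_t\to G$, $w_t\to w$ from Assumption~\ref{asmp:ab} to see that, by taking $\epsilon\in(0,\min\set{1/2,\delta_0})$ small enough and $T$ large enough, one has $(y_t,z_t)\in N$, $\abs{(y_t,z_t)-(1,Gw)}<\delta_0$, and $G_{t+1}>G-\eta$ whenever $t\ge T$ and $p_t<\epsilon$. Feeding these into \eqref{eq:p_ratio} then gives
\begin{equation*}
    \frac{p_{t+1}}{p_t}\le\frac{f_t(y_t,z_t)}{G_{t+1}}<\frac{f(1,Gw)+\eta}{G-\eta}\le\frac{G_d}{G}<1,
\end{equation*}
where the final strict inequality is just $G_d<G$ from \eqref{eq:necessity_OLG}; this is the claim.

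The one subtlety to handle with care is the apparent circularity in the definition of $(y_t,z_t)$: the quantity $p_{t+1}$ that I need to control enters $z_t$, which in turn governs the bound on $p_{t+1}/p_t$. This is exactly why Lemma~\ref{lem:p_ratio} is established first: it supplies the crude a~priori bound $p_{t+1}\le rp_t$, which transfers the smallness of $p_t$ to $p_{t+1}$ and lets the localization of $(y_t,z_t)$ near $(1,Gw)$ go through. The rest is bookkeeping — merging the several thresholds (from the three limits $G_t\to G$, $w_t\to w$, $d_t\to 0$, from Lemma~\ref{lem:p_ratio}, and from the uniform-convergence step) into a single $T$ — and I expect no genuine difficulty there.
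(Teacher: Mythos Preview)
Your proposal is correct and takes essentially the same approach as the paper: both use the crude a~priori bound $p_{t+1}\le rp_t$ from Lemma~\ref{lem:p_ratio} to localize $(y_t,z_t)$ near $(1,Gw)$, then invoke the uniform convergence $f_t\to f$ from Assumption~\ref{asmp:f} to conclude $f_t(y_t,z_t)$ is close to $f(1,Gw)<G_d$. The only minor variation is that you appeal to continuity of $f$ at $(1,Gw)$ directly, whereas the paper uses the monotonicity of $f$ (inherited from quasi-concavity of $U_t$) to bound $f(y_t,z_t)$ by its value at the corner of the shrinking box $K(\epsilon)=[1-\epsilon,1]\times[\ubar{G}\ubar{w},\bar{G}(\bar{w}+r\epsilon)]$; your version is arguably cleaner since it avoids re-invoking quasi-concavity.
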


\begin{proof}
Take $r>0$ and $T>0$ as in Lemma \ref{lem:p_ratio}. For $\epsilon \in (0,1/2)$, define the compact set
\begin{equation}
    K(\epsilon)\coloneqq [1-\epsilon,1]\times [\ubar{G}\ubar{w},\bar{G}(\bar{w}+r\epsilon)]\subset \R_{++}\times \R_+. \label{eq:Kepsilon}
\end{equation}
If $t\ge T$ and $p_t<\epsilon$, then in particular $p_t<1/2$. Therefore by Lemma \ref{lem:p_ratio}, we have
\begin{equation*}
    \frac{p_{t+1}}{p_t}\le r\implies p_{t+1}\le rp_t\le r\epsilon.
\end{equation*}
Therefore by the definition of $K(\epsilon)$ in \eqref{eq:Kepsilon}, we have $(y_t,z_t)\in K(\epsilon)$. For any $\delta>0$, by Assumption \ref{asmp:f} we have
\begin{equation}
    \sup_{(y,z)\in K(\epsilon)}\abs{f_t(y,z)-f(y,z)}\le \delta \label{eq:ft_bound}
\end{equation}
for $t\ge T$ (by choosing a larger $T$ if necessary). Since $(y_t,z_t)\in K(\epsilon)$, it follows from \eqref{eq:p_ratio} that
\begin{align}
    \frac{p_{t+1}}{p_t}&\le \frac{1}{G_{t+1}}f_t(y_t,z_t) \le \frac{1}{\ubar{G}}(f(y_t,z_t)+\delta) && (\because \eqref{eq:p_ratio}, \eqref{eq:ft_bound}) \notag \\
    &\le \frac{1}{\ubar{G}}(f(1-\epsilon,\bar{G}(\bar{w}+r\epsilon))+\delta), \label{eq:p_ratio2}
\end{align}
where the last line follows from the quasi-concavity of $U_t$ (hence $f(y,z)$ is decreasing in $y$ and increasing in $z$) and the fact that $(y_t,z_t)\in K(\epsilon)$ with $K(\epsilon)$ defined as in \eqref{eq:Kepsilon}.

By Assumption \ref{asmp:ab}, we may take $\ubar{G},\bar{G}$ arbitrarily close to $G$ and $\bar{w}$ arbitrarily close to $w$. Clearly, we can take $\epsilon, \delta>0$ arbitrarily close to zero. Therefore the right-hand side of \eqref{eq:p_ratio2} can be made arbitrarily close to $f(1,Gw)/G$, which is less than $G_d/G<1$ by condition \eqref{eq:necessity_OLG}. Therefore \eqref{eq:statement2} holds.
\end{proof}

\begin{lem}\label{lem:relevant}
In all equilibria, the asset is asymptotically relevant.
\end{lem}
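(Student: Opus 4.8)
The plan is to argue by contradiction. Suppose that in some equilibrium the asset is asymptotically irrelevant, i.e., $\liminf_{t\to\infty}p_t=0$. The first thing to do is to strengthen this to genuine convergence $p_t\to 0$. Since $0<p_t\le 1$ for all $t$ (established just before Lemma \ref{lem:Da}), take $\epsilon,T$ as in Lemma \ref{lem:p_ratio2}; because $\liminf p_t=0$ there is some $t_0\ge T$ with $p_{t_0}<\epsilon$, and then a one-line induction using Lemma \ref{lem:p_ratio2} shows $p_t<\epsilon$ and $p_{t+1}/p_t\le G_d/G$ for all $t\ge t_0$, hence $p_t\le (G_d/G)^{t-t_0}p_{t_0}\to 0$ since $G_d<G$.

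Next I would pin down the limiting interest rate. Once $p_t\to 0$, eventually $p_t<1$, so \eqref{eq:foc} holds without the $\min$; dividing by $a_t>0$ this is exactly \eqref{eq:foc/a}, which says that the realized gross return $R_t=(P_{t+1}+D_{t+1})/P_t=G_{t+1}(p_{t+1}+d_{t+1})/p_t$ equals $f_t(y_t,z_t)$ with $(y_t,z_t)$ as in \eqref{eq:yz}. As $t\to\infty$ we have $y_t=1-p_t\to 1$ and, using $d_t\to 0$ (Lemma \ref{lem:Da}), $w_t\to w$, and $G_t\to G$, also $z_t\to Gw$; the pairs $(y_t,z_t)$ eventually lie in a fixed compact subset of $\R_{++}\times\R_+$, so Assumption \ref{asmp:f} (uniform convergence of $f_t$ to $f$ on compacts) together with continuity of $f$ yields $R_t\to f(1,Gw)$.

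The contradiction then comes from $f(1,Gw)<G_d$ in \eqref{eq:necessity_OLG}. Fix $\eta$ with $f(1,Gw)<\eta<G_d$. Then $R_t<\eta$ for all $t\ge T_0$, so the Arrow--Debreu prices determined by the no-arbitrage recursion $q_{t+1}=q_t/R_t$ satisfy $q_t\ge c\,\eta^{-t}$ for $t\ge T_0$ and some $c>0$. Meanwhile $\eta<G_d=\limsup_t D_t^{1/t}$ forces $D_t>\eta^t$ for infinitely many $t$, and at such $t\ge T_0$ we get $q_tD_t>c\,\eta^{-t}\eta^t=c$. Hence $q_tD_t\not\to 0$, so $V_0=\sum_{t=1}^\infty q_tD_t=\infty$; but iterating \eqref{eq:noarbitrage} as in \eqref{eq:P_iter} and dropping the nonnegative term $q_TP_T$ gives $V_0\le P_0\le a_0<\infty$. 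This contradiction shows that $\liminf_{t\to\infty}p_t>0$ in every equilibrium, i.e., the asset is asymptotically relevant.

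The main obstacle is concentrated in the first two steps: turning $\liminf p_t=0$ into genuine convergence $p_t\to 0$, and then into the convergence $R_t\to f(1,Gw)$. This is where Lemmas \ref{lem:p_ratio}--\ref{lem:p_ratio2} and the uniform-convergence Assumption \ref{asmp:f} are essential. Given $R_t\to f(1,Gw)<G_d$, the divergence of the fundamental value is routine, the only care needed being to keep track of the $\limsup$ and to argue along the subsequence of dates at which $D_t$ is large.
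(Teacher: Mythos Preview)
Your proof is correct and follows essentially the same route as the paper: both argue by contradiction, use Lemma~\ref{lem:p_ratio2} to upgrade $\liminf p_t=0$ to $p_t\to 0$ via the same one-line induction, obtain $R_t=f_t(y_t,z_t)\to f(1,Gw)$ from \eqref{eq:foc/a} and Assumptions~\ref{asmp:ab}--\ref{asmp:f}, and then derive $V_0=\infty$ by bounding $q_tD_t$ below along the subsequence where $D_t^{1/t}$ exceeds a number strictly between $f(1,Gw)$ and $G_d$. The only cosmetic difference is that you use a single threshold $\eta\in(f(1,Gw),G_d)$ where the paper uses a symmetric $\epsilon$-separation, and you spell out $V_0\le P_0\le a_0$ explicitly; otherwise the arguments are the same.
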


\begin{proof}
Choose $\epsilon,T$ as in Lemma \ref{lem:p_ratio2}. By way of contradiction, suppose there exists an equilibrium in which the asset is asymptotically irrelevant. Then by Definition \ref{defn:asym_bubble} we can take $t_0\ge T$ such that $p_{t_0}<\epsilon$. Let us show by induction that $p_t<(G_d/G)^{t-t_{0}}\epsilon$ for all $t\ge t_0$. The claim is trivial when $t=t_0$. If the claim holds for some $t$, then in particular $p_t<\epsilon$, so using \eqref{eq:statement2} we obtain
\begin{equation*}
    p_{t+1}\le (G_d/G)p_t<(G_d/G)(G_d/G)^{t-t_0}\epsilon=(G_d/G)^{t+1-t_0}\epsilon,
\end{equation*}
so the claim holds for $t+1$ as well.

Since $G_d/G<1$, we have $p_t\to 0$. Multiplying both sides of \eqref{eq:p_ratio} by $G_{t+1}$ and letting $t\to\infty$, by Assumptions \ref{asmp:ab} and \ref{asmp:f}, we obtain
\begin{equation}
    R_t=G_{t+1}\frac{p_{t+1}+d_{t+1}}{p_t}=f_t(y_t,z_t)\to f(1,Gw). \label{eq:Rt_lim_gen}
\end{equation}
By \eqref{eq:necessity_OLG} and \eqref{eq:Rt_lim_gen}, we can take $\epsilon>0$ and $T>0$ such that
\begin{equation*}
    R_t<f(1,Gw)+\epsilon<G_d-\epsilon
\end{equation*}
for $t\ge T$. Furthermore, by \eqref{eq:necessity_OLG}, we have $D_t^{1/t}>G_d-\epsilon$ infinitely often. Therefore for such $t$, the present value of $D_t$ can be bounded from below as
\begin{equation*}
    q_tD_t=q_T(q_t/q_T)D_t\ge q_T(G_d-\epsilon)^{T-t}(G_d-\epsilon)^t=q_T(G_d-\epsilon)^T.
\end{equation*}
Since the lower bound is positive and does not depend on $t$, and there are infinitely many such $t$, we obtain $P_0\ge V_0=\sum_{t=1}^\infty q_tD_t=\infty$, which is a contradiction.
\end{proof}

\begin{proof}[Proof of Theorem \ref{thm:necessity}]
Take any equilibrium. By Lemma \ref{lem:relevant}, the asset is asymptotically relevant. By Definition \ref{defn:asym_bubble}, we can take $T>0$ and $p>0$ such that $P_t/a_t\ge p$ for $t\ge T$. Then
\begin{equation*}
    \sum_{t=1}^\infty \frac{D_t}{P_t}\le \sum_{t=1}^{T-1} \frac{D_t}{P_t}+\sum_{t=T}^\infty \frac{D_t}{pa_t}<\infty
\end{equation*}
by Lemma \ref{lem:Da}. Lemma \ref{lem:bubble} implies that the equilibrium is bubbly, and it is asymptotically bubbly because the asset is asymptotically relevant.
\end{proof}

\subsection{Proof of Theorem \ref{thm:diamond}}

Since $F$ is homogeneous of degree 1 and concave, $F_L(K,1)=F_L(1,1/K)$ is increasing in $K$. We need several lemmas to prove Theorem \ref{thm:diamond}.

\begin{lem}\label{lem:global}
If Assumption \ref{asmp:FL} holds, the map $(0,\infty)\ni K\mapsto \beta F_L(K,1) \in (0,\infty)$ has a unique fixed point $K^*>0$, which is globally stable. 
\end{lem}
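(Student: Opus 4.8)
The plan is to run a standard monotone‑dynamics (cobweb) argument. Set $g(K)\coloneqq\beta F_L(K,1)$; by the maintained differentiability of $F$ this map is continuous on $(0,\infty)$, and as noted just before the lemma it is strictly increasing. The existence of a point $K^*>0$ with $g(K^*)=K^*$ is already built into Assumption \ref{asmp:FL} (the ``$=0$'' case), so first I would record uniqueness: Assumption \ref{asmp:FL} says $g(K)-K>0$ on $(0,K^*)$ and $g(K)-K<0$ on $(K^*,\infty)$, so no $K\neq K^*$ can satisfy $g(K)=K$, and hence $K^*$ is the unique fixed point of $g$.

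For global stability, fix an arbitrary $K_0>0$ and consider the orbit $K_{t+1}=g(K_t)$; it stays in $(0,\infty)$ since $g$ maps $(0,\infty)$ into itself. If $K_0=K^*$ the orbit is constant. If $K_0<K^*$, I would show by induction that $K_t<K_{t+1}<K^*$: from $K_t<K^*$, strict monotonicity of $g$ gives $K_{t+1}=g(K_t)<g(K^*)=K^*$, while the sign pattern of Assumption \ref{asmp:FL} gives $K_{t+1}=g(K_t)>K_t$. Hence $\set{K_t}$ is increasing and bounded above, so it converges to some $L\in(0,K^*]$; passing to the limit in $K_{t+1}=g(K_t)$ using continuity of $g$ yields $L=g(L)$, so $L=K^*$ by the uniqueness established above. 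The case $K_0>K^*$ is symmetric: $\set{K_t}$ is decreasing and bounded below by $K^*$, hence converges, necessarily to $K^*$.

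There is no genuine obstacle here; the only point that needs a little care is that the orbit is monotone and never crosses $K^*$, and both facts are immediate consequences of $g$ being increasing together with the single sign change of $g(K)-K$ at $K^*$ guaranteed by Assumption \ref{asmp:FL}.
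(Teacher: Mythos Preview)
Your proposal is correct and follows essentially the same monotone--cobweb argument as the paper: uniqueness from the single sign change in Assumption \ref{asmp:FL}, then monotonicity of $g(K)=\beta F_L(K,1)$ to show the orbit is monotone and bounded by $K^*$, hence convergent to the unique fixed point. The only cosmetic difference is that the paper works with weak inequalities ($K_0\le K^*$ etc.), but this does not affect the substance.
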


\begin{proof}
By \eqref{eq:K*}, clearly $K^*$ is the unique fixed point of $K\mapsto \beta F_L(K,1)$. Take any sequence $\set{K_t}\subset (0,\infty)$ such that $K_{t+1}=\beta F_L(K_t,1)$. If $K_0\le K^*$, then \eqref{eq:K*} and the monotonicity of $F_L$ imply
\begin{equation*}
    K_0\le K_1=\beta F_L(K_0,1)\le \beta F_L(K^*,1)=K^*.
\end{equation*}
Continuing this argument, we have $K_0\le K_1\le \dots\le K_t\le K^*$. Since $\set{K_t}$ is bounded and monotonically increasing, it is convergent. The limit is a fixed point of $K\mapsto \beta F_L(K,1)$, and by \eqref{eq:K*}, it must be $K^*$. The same argument applies when $K_0\ge K^*$. Therefore $K^*>0$ is the unique and globally stable fixed point of $K\mapsto \beta F_L(K,1)$.
\end{proof}

\begin{lem}\label{lem:K_ub}
In any equilibrium, we have $\limsup_{t\to\infty}K_t\le K^*$.
\end{lem}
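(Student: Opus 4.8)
The plan is to exploit the fact that, because the asset price is nonnegative, the capital stock obeys a one-dimensional \emph{inequality} recursion dominated by the monotone map $g(K)\coloneqq\beta F_L(K,1)$, whose dynamics are pinned down by Assumption \ref{asmp:FL} and Lemma \ref{lem:global}.

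First I would record the basic inequality: from the budget and market-clearing condition \eqref{eq:budget_diamond} together with $P_t\ge 0$ we get $0\le K_{t+1}\le \beta F_L(K_t,1)=g(K_t)$ for all $t$. Since $F$ is homogeneous of degree $1$ and concave, $g$ is continuous and nondecreasing in $K$ (as noted just before Lemma \ref{lem:global}), has unique fixed point $K^*$, and satisfies $g(K)<K$ for all $K>K^*$ by Assumption \ref{asmp:FL}.

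Then I would split into two cases. If $K_{t_0}\le K^*$ for some $t_0$, monotonicity of $g$ gives $K_{t_0+1}\le g(K_{t_0})\le g(K^*)=K^*$, and by induction $K_t\le K^*$ for all $t\ge t_0$; hence $\limsup_{t\to\infty}K_t\le K^*$. Otherwise $K_t>K^*$ for every $t$, and then $K_{t+1}\le g(K_t)<K_t$, so $\{K_t\}$ is strictly decreasing and bounded below by $K^*$, hence converges to some $L\ge K^*$. Passing to the limit in $K_{t+1}\le g(K_t)$ and using continuity of $g$ yields $L\le g(L)$; combined with $g(K)<K$ for all $K>K^*$ this forces $L=K^*$, so again $\limsup_{t\to\infty}K_t=K^*\le K^*$.

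The argument is essentially a monotone-map comparison, so I do not expect a genuine obstacle; the only points needing care are (i) that the recursion is an inequality rather than an equality, so one must verify that the region $\{K\le K^*\}$ is absorbing under the domination and that the inequality survives passage to the limit, and (ii) invoking the monotonicity of $K\mapsto F_L(K,1)$ and the global-stability structure of $g$ from Lemma \ref{lem:global} rather than re-deriving them.
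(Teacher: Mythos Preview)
Your proof is correct and rests on the same monotone-comparison idea as the paper's: both use $K_{t+1}\le g(K_t)$ with $g(K)=\beta F_L(K,1)$ nondecreasing and globally stable at $K^*$. The paper's execution differs only cosmetically---it introduces a majorizing sequence $\bar K_t$ defined by $\bar K_0=K_0$ and $\bar K_{t+1}=g(\bar K_t)$, shows $K_t\le\bar K_t$ by induction, and then invokes Lemma~\ref{lem:global} for $\bar K_t\to K^*$, rather than arguing via your case split.
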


\begin{proof}
By \eqref{eq:budget_diamond}, we have $K_{t+1}\le \beta F_L(K_t,1)$. Define the sequence $\set{\bar{K}_t}$ by $\bar{K}_0=K_0$ and $\bar{K}_{t+1}=\beta F_L(\bar{K}_t,1)$. Since $F_L(K,1)$ is increasing in $K$, by induction we have $K_t\le \bar{K}_t$ for all $t$. Since by Lemma \ref{lem:global} the map $K\mapsto \beta F_L(K,1)$ is globally stable, it follows that $\limsup_{t\to\infty}K_t\le \lim_{t\to\infty}\bar{K}_t=K^*$.
\end{proof}

\begin{lem}\label{lem:PK}
Suppose there exists a subsequence such that $\lim_{n\to\infty}(P_{t_n},K_{t_n})=(0,k_0)$ for some $k_0>0$. Define $\set{k_j}\subset (0,\infty)$ recursively by
\begin{equation}
     k_{j+1}=\beta F_L(k_j,1). \label{eq:kj}
\end{equation}
Then $\lim_{n\to\infty}(P_{t_n+j},K_{t_n+j})=(0,k_j)$ for all $j$.
\end{lem}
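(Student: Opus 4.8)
The plan is to induct on $j$, with the base case $j=0$ being precisely the hypothesis. Note first that the sequence $\set{k_j}$ stays in $(0,\infty)$: $k_0>0$ by assumption, and since $F_L>0$ on $\R_{++}^2$, the recursion \eqref{eq:kj} gives $k_{j+1}=\beta F_L(k_j,1)>0$ whenever $k_j>0$. In particular every $k_j$ lies in the open orthant where $F$ is continuously differentiable, which is what makes the continuity arguments below legitimate.

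For the inductive step, suppose $\lim_{n\to\infty}(P_{t_n+j},K_{t_n+j})=(0,k_j)$. I would first pin down the limit of $K_{t_n+j+1}$. Evaluating the budget constraint \eqref{eq:budget_diamond} at time $t=t_n+j$ gives $K_{t_n+j+1}=\beta F_L(K_{t_n+j},1)-P_{t_n+j}$. Since $(K_{t_n+j},1)\to(k_j,1)$ with $k_j>0$, continuity of $F_L$ on $\R_{++}^2$ yields $F_L(K_{t_n+j},1)\to F_L(k_j,1)$, and together with $P_{t_n+j}\to 0$ this gives $K_{t_n+j+1}\to\beta F_L(k_j,1)=k_{j+1}$.

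Next I would show $P_{t_n+j+1}\to 0$. Rearranging the no-arbitrage condition \eqref{eq:noarbitrage_diamond} at $t=t_n+j$ and using $D_{t_n+j+1}\ge 0$ and $P_{t_n+j+1}\ge 0$ gives the sandwich $0\le P_{t_n+j+1}\le P_{t_n+j}\,F_K(K_{t_n+j+1},1)$. By the previous paragraph $K_{t_n+j+1}\to k_{j+1}>0$, so $F_K(K_{t_n+j+1},1)\to F_K(k_{j+1},1)$ by continuity of $F_K$ at the interior point $(k_{j+1},1)$; in particular this factor is bounded in $n$. Since $P_{t_n+j}\to 0$, the upper bound tends to $0$, and the squeeze theorem gives $P_{t_n+j+1}\to 0$. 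This closes the induction.

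There is essentially no hard step here: the whole argument is continuity of $F_L$ and $F_K$ plus the squeeze theorem applied along the subsequence, and the only point requiring a word is that $F_K(K_{t_n+j+1},1)$ does not blow up --- but it cannot, precisely because $K_{t_n+j+1}$ converges to the strictly positive number $k_{j+1}$ rather than to the boundary. Note also that we never use $t_n\to\infty$; all that is needed is that the equilibrium conditions \eqref{eq:budget_diamond} and \eqref{eq:noarbitrage_diamond} hold at each index $t_n+j\ge 0$.
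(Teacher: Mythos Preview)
Your proof is correct and follows essentially the same approach as the paper: induction on $j$, using the budget constraint \eqref{eq:budget_diamond} and continuity of $F_L$ for the capital limit, then the inequality $P_{t+1}\le P_tF_K(K_{t+1},1)$ from \eqref{eq:noarbitrage_diamond} together with boundedness of $F_K$ near $k_{j+1}>0$ for the price limit. Your write-up is slightly more explicit than the paper's in justifying why $k_j>0$ and why the $F_K$ factor stays bounded, but the argument is the same.
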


\begin{proof}
We show the claim by induction on $j$. The claim holds for $j=0$ by assumption. Suppose the claim holds for some $j$. Letting $t=t_n+j$ and $n\to\infty$ in \eqref{eq:budget_diamond}, we obtain $K_{t_n+j+1}\to \beta F_L(k_j,1)\eqqcolon k_{j+1}$ by \eqref{eq:kj}. Rewriting \eqref{eq:noarbitrage_diamond} as
\begin{equation*}
    P_{t+1}=P_tF_K(K_{t+1},1)-D_{t+1}\le P_tF_K(K_{t+1},1)
\end{equation*}
and letting $t=t_n+j$ and $n\to\infty$, it follows that $P_{t_n+j+1}\to 0$.
\end{proof}

\begin{lem}\label{lem:K_bound}
For $\epsilon\in [0,K^*)$, define
\begin{subequations}
\begin{align}
    p(\epsilon)&\coloneqq \beta F_L(K^*-\epsilon,1)-(K^*-\epsilon)\ge 0, \label{eq:p_epsilon}\\
    r(\epsilon)&\coloneqq F_K(K^*-\epsilon,1)>0. \label{eq:r_epsilon}
\end{align}
\end{subequations}
If \eqref{eq:necessity_diamond} holds, $\liminf_{t\to\infty}P_t=0$, and $\liminf_{t\to\infty}K_t>0$, then there exist $\epsilon\in (0,K^*)$ with $r(\epsilon)<1$ and $T>0$ such that for all $t\ge T$, we have $K_t\in (K^*-\epsilon,K^*+\epsilon)$ and $P_t\le r(\epsilon)^{t-T}p(\epsilon)$.
\end{lem}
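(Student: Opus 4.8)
The plan is to identify a forward‑invariant ``trapping set'' around the target point $(P,K)=(0,K^*)$ and then to show that the equilibrium orbit $\{(P_t,K_t)\}_{t\ge 0}$ must eventually fall into it. First I would fix the parameter $\epsilon$. Since $K\mapsto F_K(K,1)$ is continuous and $F_K(K^*,1)<G_d<1$ by \eqref{eq:necessity_diamond}, I can choose $\epsilon\in(0,K^*)$ so small that $r(\epsilon)=F_K(K^*-\epsilon,1)<1$; then $p(\epsilon)>0$ by Assumption \ref{asmp:FL} (because $0<K^*-\epsilon<K^*$) and $r(\epsilon)>0$ since the partials of $F$ are positive. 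I will use two monotonicity facts coming from concavity and homogeneity of $F$ that are already invoked before Lemma \ref{lem:global}: $K\mapsto\beta F_L(K,1)$ is non-decreasing and $K\mapsto F_K(K,1)$ is non-increasing. Finally, Lemma \ref{lem:K_ub} gives $T_1$ with $K_t<K^*+\epsilon$ for all $t\ge T_1$, so the upper bound $K_t<K^*+\epsilon$ will be automatic once I force $T\ge T_1$.

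Next I would use the hypothesis $\liminf_{t\to\infty}P_t=0$ to manufacture the entry time $T$. Take a subsequence along which $P_{t_n}\to 0$; since $\{K_t\}$ is bounded (above by $\limsup K_t\le K^*$ and eventually below away from $0$ because $\liminf K_t>0$), pass to a further subsequence with $K_{t_n}\to k_0$ for some $k_0\in(0,K^*]$. Lemma \ref{lem:PK} then yields $(P_{t_n+j},K_{t_n+j})\to(0,k_j)$ for every $j$, where $k_{j+1}=\beta F_L(k_j,1)$, and Lemma \ref{lem:global} gives $k_j\to K^*$. Choosing $J$ with $k_J>K^*-\epsilon/2$ and then $n$ large, I obtain a date $T\coloneqq t_n+J\ge T_1$ at which $K_T\in(K^*-\epsilon,K^*+\epsilon)$ and $P_T<p(\epsilon)$.

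With such a $T$ I would prove by induction on $t\ge T$ the three statements $K_t\in(K^*-\epsilon,K^*+\epsilon)$, $P_t\le r(\epsilon)^{t-T}p(\epsilon)$, and $P_t<p(\epsilon)$; the base case is the previous paragraph. For the induction step, assuming the statements at $t$ we have $P_t<p(\epsilon)$, so by the budget constraint \eqref{eq:budget_diamond}, monotonicity of $\beta F_L(\cdot,1)$, and the identity $p(\epsilon)=\beta F_L(K^*-\epsilon,1)-(K^*-\epsilon)$,
\begin{equation*}
K_{t+1}=\beta F_L(K_t,1)-P_t>\beta F_L(K^*-\epsilon,1)-p(\epsilon)=K^*-\epsilon,
\end{equation*}
and $K_{t+1}<K^*+\epsilon$ because $t+1\ge T_1$ (Lemma \ref{lem:K_ub}). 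Then by the no-arbitrage condition \eqref{eq:noarbitrage_diamond}, $D_{t+1}\ge 0$, monotonicity of $F_K(\cdot,1)$, and $K_{t+1}>K^*-\epsilon$,
\begin{equation*}
P_{t+1}=P_tF_K(K_{t+1},1)-D_{t+1}\le P_tF_K(K_{t+1},1)\le r(\epsilon)P_t\le r(\epsilon)^{t+1-T}p(\epsilon)<p(\epsilon),
\end{equation*}
the last inequality since $r(\epsilon)<1$. This closes the induction and is exactly the conclusion of the lemma.

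The one genuinely delicate point is the construction of $T$ in the second paragraph: the hypothesis $\liminf P_t=0$ says nothing about the value of $K$ at the low‑$P$ dates, so one cannot conclude directly that $(P_t,K_t)$ is ever near $(0,K^*)$. The resolution is to propagate the \emph{deterministic} capital recursion forward from a low‑$P$ date --- which is legitimate precisely because of Lemma \ref{lem:PK}, the price remaining near $0$ along the way since $P_{t+1}\le P_tF_K(K_{t+1},1)$ with $F_K$ bounded --- so that global stability (Lemma \ref{lem:global}) drags the capital stock near $K^*$ before forward invariance takes over. Everything else reduces to the two monotonicity properties of $F$ and the definitions of $p(\epsilon)$ and $r(\epsilon)$.
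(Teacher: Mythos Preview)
Your proposal is correct and follows essentially the same route as the paper: fix $\epsilon$ with $r(\epsilon)<1$, use a subsequence with $P_{t_n}\to 0$ together with Lemma~\ref{lem:PK} and the global stability of Lemma~\ref{lem:global} to land the orbit near $(0,K^*)$ at some date $T$, and then run the forward induction via \eqref{eq:budget_diamond} and \eqref{eq:noarbitrage_diamond} with the monotonicity of $F_L$ and $F_K$. The only cosmetic difference is that you carry the extra hypothesis $P_t<p(\epsilon)$ through the induction to secure the strict lower bound $K_{t+1}>K^*-\epsilon$, whereas the paper bounds $P_t$ directly by $r(\epsilon)^{t-T}p(\epsilon)\le p(\epsilon)$ in the same step.
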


\begin{proof}
By \eqref{eq:necessity_diamond} and the definition of $r(\epsilon)$ in \eqref{eq:r_epsilon}, we can choose sufficiently small $\epsilon>0$ such that $r(\epsilon)<1$. Since $\liminf_{t\to\infty}P_t=0$ and $\liminf_{t\to\infty}K_t>0$, we can take a subsequence such that $(P_{t_n},K_{t_n})\to (0,k_0)$ for some $k_0>0$. Let $\set{k_j}\subset (0,\infty)$ be as in Lemma \ref{lem:PK}. Then by \eqref{eq:kj} and Lemma \ref{lem:global}, we have $k_j\to K^*$ as $j\to\infty$. Therefore we can take large enough $j$ such that $k_j\in (K^*-\epsilon,K^*+\epsilon)$. For this $j$, since by Lemma \ref{lem:PK} we have $P_{t_n+j}\to 0$ and $K_{t_n+j}\to k_j$, we can take large enough $n$ such that $K_{t_n+j}\in (K^*-\epsilon,K^*+\epsilon)$ and $P_{t_n+j}\le p(\epsilon)$. Define $T=t_n+j$. Since $n$ can be taken arbitrarily large, by Lemma \ref{lem:K_ub}, without loss of generality we may assume $K_t<K^*+\epsilon$ for all $t\ge T$.

Let us show by induction that $K_t\in (K^*-\epsilon,K^*+\epsilon)$ and $P_t\le r(\epsilon)^{t-T}p(\epsilon)$ for all $t\ge T$. The claim is obvious for $t=T$. Suppose the claim holds for some $t$. As mentioned before, $K_t<K^*+\epsilon$ holds. Regarding the lower bound, we obtain
\begin{align*}
	K_{t+1}&=\beta F_L(K_t,1)-P_t && (\because \eqref{eq:budget_diamond})\\
	&\ge \beta F_L(K^*-\epsilon,1)-r(\epsilon)^{t-T}p(\epsilon) && (\because \text{induction hypothesis})\\
	&\ge \beta F_L(K^*-\epsilon,1)-p(\epsilon) && (\because r(\epsilon)<1)\\
	&=K^*-\epsilon. && (\because \text{definition of $p(\epsilon)$ in \eqref{eq:p_epsilon}})
\end{align*}
Furthermore, using the concavity of $F$ (so $F_K(K,1)$ is decreasing),
\begin{align*}
	P_{t+1}&\le \frac{P_{t+1}+D_{t+1}}{P_t}P_t && (\because D_{t+1}\ge 0)\\
	&=P_tF_K(K_{t+1},1) && (\because \eqref{eq:noarbitrage_diamond})\\
	&\le P_tF_K(K^*-\epsilon) && (\because K_{t+1}\ge K^*-\epsilon)\\
	&=r(\epsilon)^{t+1-T}p(\epsilon). && (\because P_t\le r(\epsilon)^{t-T}p(\epsilon), \eqref{eq:r_epsilon})
\end{align*}
Therefore the claim holds for $t+1$ as well.
\end{proof}

\begin{proof}[Proof of Theorem \ref{thm:diamond}]
Suppose that there exists an equilibrium with $\liminf_{t\to\infty}K_t>0$ such that the asset is asymptotically irrelevant, so $\liminf_{t\to\infty}P_t=0$. By condition \eqref{eq:necessity_diamond} and \eqref{eq:r_epsilon}, we can take sufficiently small $\epsilon>0$ such that $r(\epsilon)<G_d<1$. Applying Lemma \ref{lem:K_bound}, we can take $T>0$ such that $K_t\in (K^*-\epsilon,K^*+\epsilon)$ for $t\ge T$. Using \eqref{eq:noarbitrage_diamond}, the gross risk-free rate for $t\ge T$ is
\begin{equation*}
    R_t=F_K(K_{t+1},1)\le F_K(K^*-\epsilon,1)=r(\epsilon)<G_d.
\end{equation*}
By the same argument as in the proof of Lemma \ref{lem:relevant}, we have $P_0\ge V_0=\infty$, which is a contradiction.

Therefore in all equilibria, the asset is asymptotically relevant, and there exists $p>0$ such that $P_t\ge p$. Since $D_t$ asymptotically grows at rate $G_d<1$, we have $\sum_{t=1}^\infty D_t/P_t<\infty$, so by Lemma \ref{lem:bubble} the equilibrium is asymptotically bubbly.
\end{proof}

\subsection{Proof of Theorem \ref{thm:bewley1}}

We need several lemmas to prove Theorem \ref{thm:bewley1}.

\begin{lem}\label{lem:Wt}
Let $W_{nt}$ be the aggregate wealth held by type $n$ agents at time $t$ and $v_t'=(W_{0t},\dots,W_{Nt})$ be the row vector of aggregate wealth. Then $v_t'\ge v_0'A^t$.
\end{lem}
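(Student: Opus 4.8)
The plan is to prove the one-step inequality $v_{t+1}'\ge v_t'A$ and then iterate, exploiting that $A$ is entrywise nonnegative, so that right-multiplication by $A$ preserves the entrywise order on row vectors. Two facts drive the argument: logarithmic utility pins down a gross savings rate equal to $\beta$ independently of returns; and a type-$n$ agent can always earn gross return at least $z_n$ on savings by investing in their own capital, so in equilibrium their wealth cannot fall below $\beta z_n$ times its current value.

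First I would record individual optimality. Under the logarithmic objective \eqref{eq:logutility} with budget constraint \eqref{eq:budget_bewley1} and no labor income, the agent consumes fraction $1-\beta$ and saves fraction $\beta$ of current wealth (as is standard for logarithmic utility), so $k_{it}+P_tx_{it}=\beta w_{it}$. Next I claim that if $z_{it}=z_n$ then $w_{i,t+1}\ge\beta z_nw_{it}$. Writing $w_{i,t+1}=z_nk_{it}+(P_{t+1}+D_{t+1})x_{it}$ and subtracting $z_n(k_{it}+P_tx_{it})=\beta z_nw_{it}$, the claim reduces to $\big((P_{t+1}+D_{t+1})/P_t-z_n\big)x_{it}\ge0$. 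Were this to fail, the agent would hold $x_{it}>0$ of an asset whose deterministic gross return is strictly below $z_n$; shifting a small amount from the asset into capital---feasible even at the corner $k_{it}=0$---would strictly raise $w_{i,t+1}$ at no cost to current consumption, hence strictly raise continuation utility (the value function is strictly increasing in wealth), contradicting optimality. (When $P_t=0$ the bound is immediate, since then $k_{it}=\beta w_{it}$ and $x_{it}\ge0$.)

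Then I would aggregate over the date-$t$ population, partitioned by current type. For agents with $z_{it}=z_n$ the bound $w_{i,t+1}\ge\beta z_nw_{it}$ holds individually; moreover, since productivity follows an exogenous Markov chain independent across agents, the exact law of large numbers for a continuum \citep{SunZhang2009} makes the transition from type $n$ to type $n'$ independent of wealth, so $\int_{\{z_{it}=z_n,\,z_{i,t+1}=z_{n'}\}}w_{it}\diff i=\pi_{nn'}W_{nt}$. Therefore
\begin{equation*}
    W_{n',t+1}\ge\sum_{n=0}^N\int_{\{z_{it}=z_n,\,z_{i,t+1}=z_{n'}\}}\beta z_nw_{it}\diff i=\sum_{n=0}^N\beta z_n\pi_{nn'}W_{nt}=(v_t'A)_{n'},
\end{equation*}
which is $v_{t+1}'\ge v_t'A$. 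Since $v_0'\ge0$ entrywise and $A\ge0$, induction then yields $v_t'\ge v_0'A^t$.

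The step I expect to demand the most care is the law-of-large-numbers argument: justifying that the set of type-$n$ agents moving to type $n'$ carries exactly fraction $\pi_{nn'}$ of type-$n$ wealth rests on the transition shock being independent of the endogenous, history-dependent wealth, which is exactly what the continuum foundation is invoked for. A minor point is checking feasibility of the portfolio deviation at the corner $k_{it}=0$ and appealing to strict monotonicity of the log-utility value function in wealth.
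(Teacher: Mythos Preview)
Your proof is correct and follows essentially the same approach as the paper: establish the one-step inequality $v_{t+1}'\ge v_t'A$ from the log-utility savings rate and individual portfolio optimality, then iterate using nonnegativity of $A$. The only cosmetic difference is that the paper writes the exact wealth law $w_{i,t+1}=\beta\max\{z_{it},R_t\}w_{it}$ and then uses $\max\{z_n,R_t\}\ge z_n$, whereas you obtain the bound $w_{i,t+1}\ge\beta z_n w_{it}$ directly via a deviation argument.
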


\begin{proof}
Let $R_t=(P_{t+1}+D_{t+1})/P_t$ be the gross risk-free rate between time $t$ and $t+1$. Due to log utility, the optimal consumption rule is $c_{it}=(1-\beta)w_{it}$. Savings is thus $\beta w_{it}$. Because the productivity is predetermined, it is optimal for an agent to invest entirely in the technology (asset) if $z_{it}>R_t$ ($z_{it}<R_t$). If $z_{it}=R_t$, the agent is indifferent between the technology and asset. Therefore using the budget constraint \eqref{eq:budget_bewley1}, individual wealth evolves according to
\begin{equation}
    w_{i,t+1}=\beta \max\set{z_{it},R_t} w_{it}. \label{eq:wt}
\end{equation}

Let $W_{nt}$ be the aggregate wealth held by type $n$ agents. Then aggregating \eqref{eq:wt} across agents and applying the strong law of large numbers, we obtain
\begin{equation}
    W_{n',t+1}=\sum_{n=0}^N\pi_{nn'}\beta \max\set{z_n,R_t}W_{nt}\ge \sum_{n=0}^N\pi_{nn'}\beta z_nW_{nt}. \label{eq:Wt}
\end{equation}
Collecting the terms in \eqref{eq:Wt} into a row vector and using the definitions of $v_t'$ and $A$, we obtain $v_{t+1}'\ge v_t'A$. Iterating this inequality, we obtain $v_t'\ge v_0'A^t$.
\end{proof}

\begin{lem}\label{lem:W0}
There exists a constant $w_0>0$ such that $W_{0t}\ge w_0G^t$ for all $t$.
\end{lem}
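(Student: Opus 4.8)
The plan is to combine Lemma~\ref{lem:Wt} with the Perron--Frobenius theorem, exploiting the block structure of $A$ forced by $z_0=0$. Since $z_0=0$, the row of $A=(\beta z_n\pi_{nn'})_{n,n'=0}^N$ indexed by $0$ vanishes, so $A$ is block lower triangular with a $1\times1$ zero block and the $N\times N$ block $B\coloneqq(\beta z_n\pi_{nn'})_{n,n'=1}^N$; hence $\rho(A)=\rho(B)=G$. Because $z_n>0$ for $n\ge1$, $B$ has the same zero pattern as $\Pi_1$ and so is irreducible by Assumption~\ref{asmp:irreducible}, and Perron--Frobenius gives $G=\rho(B)>0$ together with strictly positive left and right eigenvectors $\ell,r\in\R_{++}^N$ (normalized so that $\ell'B=G\ell'$, $Br=Gr$). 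Moreover, a shortest path of the irreducible chain $\Pi$ from an investor state to state $0$ stays in investor states until its last step, so $\pi_{m0}>0$ for some $m\ge1$; hence the column $\mathbf c\coloneqq(\beta z_n\pi_{n0})_{n=1}^N$ appearing in the lower-left block of $A$ is a nonzero nonnegative vector and $\ell'\mathbf c>0$.

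Writing $v_t'=(W_{0t},\tilde v_t')$ with $\tilde v_t'=(W_{1t},\dots,W_{Nt})$ and restricting $v_{t+1}'\ge v_t'A$ to the investor coordinates---where the state-$0$ contribution drops because $z_0=0$---gives $\tilde v_{t+1}'\ge\tilde v_t'B\ge0$ and $W_{0,t+1}\ge\tilde v_t'\mathbf c$. From the first inequality, $\tilde v_t'r\ge G\,\tilde v_{t-1}'r$, so $t\mapsto G^{-t}\tilde v_t'r$ is nondecreasing, and it suffices to produce one date $t_1$ with $\tilde v_{t_1}'r>0$, that is, with some investor state holding positive wealth. This holds because in any equilibrium $P_t>0$ for all $t$ (otherwise optimality and no-arbitrage force $P_s=D_s=0$ for all $s\ge t$, contradicting $G_d>0$), so aggregate wealth at $t=0$ is positive and $R_0=(P_1+D_1)/P_0>0$; then either an investor already has positive wealth, or all initial wealth sits with savers, in which case---savers weakly preferring the asset and $\pi_{0n}>0$ for some $n\ge1$ by irreducibility of $\Pi$---positive wealth reaches an investor state at $t=1$. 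Hence $\sum_{n=1}^N W_{nt}\ge\kappa G^t$ for all $t\ge t_1$ and some $\kappa>0$.

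Finally I route the growth back to state $0$: iterating $\tilde v_{t+1}'\ge\tilde v_t'B$ gives $\tilde v_t'\ge\tilde v_{t_1}'B^{t-t_1}$, so $W_{0,t+1}\ge\tilde v_{t_1}'B^{t-t_1}\mathbf c$ for $t\ge t_1$. When $B$ is primitive, $G^{-k}B^k\to r\ell'/(\ell'r)$, so for large $k$ the right-hand side is at least $\tfrac12 G^{t-t_1}(\tilde v_{t_1}'r)(\ell'\mathbf c)/(\ell'r)$, a positive multiple of $G^{t+1}$ by the first paragraph; when $B$ has period $p\ge2$, one runs the same estimate over blocks of $p$ consecutive dates (on which $B^p$ is block diagonal with primitive blocks), supplemented by the fact that, by irreducibility of $\Pi$, wealth can travel from any investor state to state $0$ within $N$ steps accumulating only positive factors $\beta z_n$. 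Either way $W_{0t}\ge c_1 G^t$ for all sufficiently large $t$, and shrinking the constant to accommodate the finitely many earlier dates yields the stated bound $W_{0t}\ge w_0 G^t$ for all $t$.

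The step I expect to be the main obstacle is the last one. The matrix $A$ on its own pushes all wealth toward state $0$ and then annihilates it, so one cannot merely iterate a single eigenvector of $A$; one must track how wealth circulates inside the irreducible block $B$ and how it leaks back to state $0$ through the column $\mathbf c$, which forces one to confront the possible periodicity of $B$ (and the possibility that the initial wealth is concentrated in a single cyclic subclass). This is where essentially all of the care is needed; the eigenvalue comparison $\rho(A)=\rho(B)=G$ and the lower bound on the investor sector are comparatively routine consequences of Lemma~\ref{lem:Wt} and Perron--Frobenius.
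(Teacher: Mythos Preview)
Your argument is essentially correct but takes a considerably more circuitous route than the paper. The paper's key observation is that, despite $A$ being reducible (its state-$0$ row is zero), $A$ still admits a \emph{strictly positive left eigenvector} $u'$ for the eigenvalue $G=\rho(A)$. Writing $u'=(u_0,u_1')$ in your block notation, one has $u_1'A_1=Gu_1'$ and $Gu_0=u_1'b_1$; irreducibility of $A_1$ (your $B$) forces $u_1'\gg0$, and then $u_0=u_1'b_1/G>0$ since $b_1\neq0$. With $u'\gg0$ in hand, the paper simply takes $\epsilon>0$ with $v_0'\ge\epsilon u'$ (using that every type starts with positive aggregate wealth) and applies Lemma~\ref{lem:Wt}:
\[
v_t'\ge v_0'A^t\ge \epsilon u'A^t=\epsilon G^t u',
\]
whence $W_{0t}\ge \epsilon u_0 G^t$ for all $t$.

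This directly refutes your last paragraph: one \emph{can} merely iterate a single eigenvector of $A$---it just has to be a \emph{left} eigenvector, which is insensitive to the fact that row $0$ is absorbing-and-annihilating from the right. All the work you do (tracking $G^{-t}\tilde v_t'r$, feeding back through $\mathbf c$, handling periodicity of $B$, worrying about where the initial mass sits) is subsumed by the single identity $u'A=Gu'$. Your approach does buy a little extra robustness in that it does not require $v_0'\gg0$ (you only need some wealth to eventually reach an investor state), whereas the paper tacitly assumes positive initial wealth in every type; but your handling of the periodic case and of the finitely many early dates is sketchy, and under the paper's initial conditions the clean eigenvector argument is both shorter and complete.
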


\begin{proof}
Noting that $z_0=0$, partition the matrix $A=(\beta z_n\pi_{nn'})$ as
\begin{equation}
    A=\begin{bmatrix}
        0 & 0\\
        b_1 & A_1
    \end{bmatrix}, \label{eq:A_partition}
\end{equation}
where $A_1=(\beta z_n\pi_{nn'})_{n,n'=1}^N$ is the $N\times N$ submatrix. Since $\Pi$ is irreducible and $z_n>0$ for all $n\ge 1$, we have $b_1>0$. Similarly, since $\Pi_1=(\pi_{nn'})_{n,n'=1}^N$ is irreducible by Assumption \ref{asmp:irreducible} and $z_n>0$ for all $n\ge 1$, $A_1$ is irreducible.

Since $A$ is nonnegative, by Theorem 8.3.1 of \citet{HornJohnson2013}, the spectral radius $\rho(A)$ is an eigenvalue with a corresponding nonnegative left eigenvector $u'$. Partition $u'$ as $u'=(u_0,u_1')$. Multiplying $u'$ to $A$ in \eqref{eq:A_partition} from the left and comparing entries, we obtain
\begin{subequations}
\begin{align}
    \rho(A)u_0&=u_1'b_1, \label{eq:u0}\\
    \rho(A)u_1'&=u_1'A_1. \label{eq:u1}
\end{align}
\end{subequations}
If $u_1=0$, then \eqref{eq:u0} and $\rho(A)=G>0$ implies $u_0=0$. Then $u=0$, which contradicts the fact that $u$ is an eigenvector of $A$. Therefore $u_1\neq 0$, and \eqref{eq:u1} implies that $u_1'$ is a nonnegative left eigenvector of $A_1$. Since $A$ is block lower triangular, we have $0<\rho(A)=\rho(A_1)$. Since $A_1$ is irreducible, by the Perron-Frobenius theorem \citep[Theorem 8.4.4]{HornJohnson2013}, $u_1'$ must be the left Perron vector of $A_1$ and hence $u_1\gg 0$. Therefore \eqref{eq:u0} implies $u_0=u_1'b_1/\rho(A)>0$, so $u'=(u_0,u_1')\gg 0$.

Since agents are endowed with positive endowments, the vector of initial aggregate wealth $v_0'$ is positive. Therefore we can take $\epsilon>0$ such that $v_0'\ge \epsilon u'$. By Lemma \ref{lem:Wt}, we obtain
\begin{equation*}
    v_t'\ge v_0'A^t\ge \epsilon u'A^t=\epsilon \rho(A)^tu'=\epsilon G^tu'.
\end{equation*}
Comparing the 0-th entry, we obtain $W_{0t}\ge \epsilon u_0 G^t$, so we can take $w_0\coloneqq \epsilon u_0$.
\end{proof}

\begin{proof}[Proof of Theorem \ref{thm:bewley1}]
Using the definition of $G_d$ and the assumption $G_d<G$, it follows from Lemma \ref{lem:W0} that we can take $\epsilon>0$ such that $G>G_d+\epsilon$ and
\begin{equation}
    W_{0t}\ge w_0G^t>(G_d+\epsilon)^t\ge D_t \label{eq:WD}
\end{equation}
for large enough $t$. Since $z_0=0$, type 0 agents invest all wealth in the asset, so the market capitalization of the asset (which equals the the asset price because it is in unit supply) must exceed the aggregate savings of type 0: $P_t\ge \beta W_{0t}$. Therefore using \eqref{eq:WD}, for large enough $t$ we can bound the dividend yield from above as
\begin{equation*}
    \frac{D_t}{P_t}\le \frac{D_t}{\beta W_{0t}}\le \frac{1}{\beta w_0}\left(\frac{G_d+\epsilon}{G}\right)^t,
\end{equation*}
which is summable. By Lemma \ref{lem:bubble} the equilibrium is bubbly. Furthermore, $P_t\ge \beta W_{0t}\ge \beta w_0G^t$ implies that the equilibrium is asymptotically bubbly.
\end{proof}

\subsection{Proof of Proposition \ref{prop:bewley2} and Theorem \ref{thm:bewley2}}

We need several lemmas to prove Proposition \ref{prop:bewley2}. The following lemma is a consequence of quasi-linear utility.

\begin{lem}\label{lem:wt}
In equilibrium, there exists a sequence $\set{w_t}_{t=0}^\infty$ such that $w_{it}=w_t$ for all $i$: agents adjust labor supply $\ell_{it}$ to achieve a common wealth $w_{it}$ in \eqref{eq:budget_bewley2}.
\end{lem}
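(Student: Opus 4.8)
The plan is to exploit the fact that labor $\ell_{it}$ enters the utility function \eqref{eq:ql} linearly with unit disutility and enters the budget constraint \eqref{eq:budget_bewley2} linearly at the common wage $A_t$, while being otherwise unconstrained (it may be negative). Consequently every agent can transform one unit of post-labor wealth into $1/A_t$ units of disutility, in either direction, so the \emph{beginning-of-period} value function $V_t$ --- the value at the start of period $t$ as a function of inherited wealth $m_{it}\coloneqq(P_t+D_t)x_{i,t-1}$ --- is affine in $m_{it}$ with slope $1/A_t$. Given this, the optimal \emph{post-labor} wealth $w_{it}$ depends only on calendar time (through $A_t$, the price/dividend path, and the distribution $F$), not on the agent's inherited position, and is therefore common across $i$.

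Concretely, I would proceed by backward induction on a finite-horizon truncation and then pass to the limit. Suppose $V_{t+1}(m)=m/A_{t+1}+C_{t+1}$ with $C_{t+1}$ independent of $m$. Then the continuation payoff from carrying $x_{it}\ge 0$ shares equals $\beta(P_{t+1}+D_{t+1})x_{it}/A_{t+1}+\beta C_{t+1}$, which is linear in $x_{it}$. Eliminating $x_{it}=(w-c_{it})/P_t$ through the post-labor budget $c_{it}+P_tx_{it}=w$, the continuation value as a function of post-labor wealth $w$ becomes, modulo an additive constant,
\[
\phi_t(w)\coloneqq\int_\Theta\max_{0\le c\le w}\Bigl(\theta u(c)+\frac{\beta R_t}{A_{t+1}}(w-c)\Bigr)\diff F(\theta),\qquad R_t\coloneqq\frac{P_{t+1}+D_{t+1}}{P_t},
\]
and the labor stage is $V_t(m)=\max_{\ell}\bigl[-\ell+\phi_t(m+A_t\ell)\bigr]+\beta C_{t+1}=m/A_t+\max_w\bigl[\phi_t(w)-w/A_t\bigr]+\beta C_{t+1}$ after the substitution $w=m+A_t\ell$. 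This both verifies the affine form of $V_t$ and identifies $w_t\coloneqq\arg\max_w[\phi_t(w)-w/A_t]$, so $w_{it}=w_t$ for every $i$ once each agent sets $\ell_{it}=(w_t-m_{it})/A_t$, which is feasible since $A_t>0$ and $\ell_{it}$ is free. The cutoff rule \eqref{eq:crule} and the aggregate relations \eqref{eq:dynamics} then fall out of the inner maximization in $\phi_t$ together with the market-clearing conditions.

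Two points require care. First, the infinite-horizon well-posedness of the affine guess: with CRRA $u$ (unbounded above when $\gamma<1$) one must verify that $\max_w[\phi_t(w)-w/A_t]$ is finite and attained. Since $\phi_t$ is concave with asymptotic slope $\beta R_t/A_{t+1}$, boundedness above needs $\beta R_t/A_{t+1}\le 1/A_t$, which one reads off the combined labor and consumption optimality conditions, since $\phi_t'\ge\beta R_t/A_{t+1}$ pointwise while the labor FOC forces $\phi_t'(w_t)=1/A_t$. Second, \emph{uniqueness} of $w_t$, which is what the lemma asserts: $\phi_t$ is concave as an $F$-average of the value of a concave program, and it is strictly concave exactly on the set of $w$ for which a positive $F$-mass of shocks makes the constraint $c\le w$ bind; strict concavity of $u$ and the isolated point mass of $F$ at $\theta_L$ (Assumption \ref{asmp:delta}), together with the asset being positively priced and in unit supply, place the maximizer in that region, so it is unique. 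The main obstacle is the first point --- pinning down the affine form rigorously in the infinite horizon; once that is in hand, the lemma is immediate.
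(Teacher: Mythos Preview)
Your approach is correct and rests on the same core observation as the paper --- quasi-linearity in labor means that inherited wealth affects the continuation problem only through an additive constant --- but you take a considerably heavier route. The paper does not set up a value function or invoke backward induction on a truncation. Instead, it simply substitutes $\ell_{it}=(w_{it}-(P_t+D_t)x_{i,t-1})/A_t$ directly into the time-$t$ continuation utility \eqref{eq:ql}, observes that the term $(P_t+D_t)x_{i,t-1}/A_t$ enters additively and is predetermined, and concludes that the maximization over $w_{it}$ (equivalently $\ell_{it}$) is the \emph{same} problem for every agent because the remaining objective and the \iid\ shock distribution are identical across $i$. Strict concavity of $u$ then yields a unique maximizer $w_t$.

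The advantage of the paper's argument is that it never needs to verify that the infinite-horizon value function is well defined or affine; your ``main obstacle'' is entirely an artifact of the value-function formulation. Your approach buys something in return: your discussion of strict concavity of $\phi_t$ is more careful than the paper's one-line appeal to ``strict concavity of $u$,'' and you correctly identify that uniqueness of the maximizer hinges on a positive mass of shocks being constrained at $w_t$. (Your invocation of Assumption~\ref{asmp:delta} here is not how the paper handles it, however --- the paper does not use that assumption in this lemma.) Finally, note that the cutoff rule \eqref{eq:crule} and the aggregate relations \eqref{eq:dynamics} belong to Proposition~\ref{prop:bewley2}, not this lemma; the lemma only asserts $w_{it}=w_t$.
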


\begin{proof}
Using the budget constraint \eqref{eq:budget_bewley2} to eliminate consumption and labor, the continuation utility at time $t$ is
\begin{equation*}
    \E_\theta\left[\theta u(w_{it}-P_tx_{it})-\frac{w_{it}-(P_t+D_t)x_{i,t-1}}{A_t}\right]+\E_t\sum_{s=1}^\infty \beta^s[\theta_{i,t+s}u(c_{i,t+s})-\ell_{i,t+s}].
\end{equation*}
Since $(A_t,P_t,D_t,x_{i,t-1})$ is predetermined at time $t$, the continuation utility is ordinally equivalent to
\begin{equation*}
    \E_\theta[\theta u(w_{it}-P_tx_{it})-w_{it}/A_t]+\E_t\sum_{s=1}^\infty \beta^s[\theta_{i,t+s}u(c_{i,t+s})-\ell_{i,t+s}].
\end{equation*}
By the budget constraint \eqref{eq:budget_bewley2}, choosing $\ell_{it}$ is equivalent to choosing $w_{it}$. Because $\theta_{it}$ is \iid across agents and time, all agents face the same problem. Therefore the optimal $w_{it}=w_t$ is common by the strict concavity of $u$.
\end{proof}

The following lemma characterizes the optimal consumption of agents.

\begin{lem}\label{lem:cit}
Let $R_t\coloneqq (P_{t+1}+D_{t+1})/P_t$ be the gross risk-free rate and $w_t$ be the common wealth in Lemma \ref{lem:wt}. Then the optimal consumption is
\begin{equation}
    c_{it}=\min\set{\left(\frac{\theta_{it}A_{t+1}}{\beta R_t}\right)^{1/\gamma},w_t}. \label{eq:cit}
\end{equation}
\end{lem}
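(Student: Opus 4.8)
The plan is to exploit the quasi-linearity of preferences: because labor supply is unrestricted and enters \eqref{eq:ql} linearly at wage $A_{t+1}$, the marginal value of date-$(t+1)$ wealth is pinned down at $1/A_{t+1}$, which turns the agent's date-$t$ problem (after $\theta_{it}$ is realized) into a one-dimensional strictly concave maximization whose Kuhn--Tucker conditions give \eqref{eq:cit}. Two preliminaries: by Lemma \ref{lem:wt} there is a common wealth level $w_t$ with $w_{it}=w_t$; and in any equilibrium $P_t>0$ --- if $P_t=0$, free disposal together with $P_{t+1}+D_{t+1}\ge0$ would make the asset a costless claim with non-negative payoff, generating unbounded demand and violating asset market clearing --- so $R_t=(P_{t+1}+D_{t+1})/P_t>0$ is well defined; moreover at any optimum $c_{it}>0$, since $u'(0^+)=\infty$ for CRRA $u$ while the marginal gain from saving is finite.

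The key step is an \emph{envelope/perturbation} argument. Along any optimal plan, fix a date $t$ and a small $\varepsilon$, and replace $(x_{it},\ell_{i,t+1})$ by $(x_{it}+\varepsilon,\ \ell_{i,t+1}-\varepsilon(P_{t+1}+D_{t+1})/A_{t+1})$, adjusting $c_{it}$ down by $P_t\varepsilon$ through the budget constraint \eqref{eq:budget_bewley2}. This leaves date-$(t+1)$ wealth $w_{i,t+1}$, and hence the entire continuation, feasible and unchanged; it is admissible whenever $\varepsilon\ge-x_{it}$ (using $c_{it}>0$ for small positive $\varepsilon$). Its first-order effect on lifetime utility, in date-$t$ units, is $\varepsilon\bigl(\beta(P_{t+1}+D_{t+1})/A_{t+1}-\theta_{it}u'(c_{it})P_t\bigr)$. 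Optimality forces this to be $\le0$ for all admissible $\varepsilon$: $\varepsilon>0$ gives $\theta_{it}u'(c_{it})\ge\beta R_t/A_{t+1}$ after dividing by $P_t>0$ and using the definition of $R_t$, while $\varepsilon<0$ (admissible exactly when $x_{it}>0$) gives the reverse inequality, so equality holds whenever $x_{it}>0$.

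It remains to invert the first-order condition. Since $u'(c)=c^{-\gamma}$ is a strictly decreasing bijection of $(0,\infty)$ onto itself, the case $x_{it}>0$ yields $c_{it}=(\theta_{it}A_{t+1}/(\beta R_t))^{1/\gamma}$, and this interior value automatically satisfies $c_{it}\le w_t$ because $x_{it}=(w_t-c_{it})/P_t\ge0$; the complementary case is $x_{it}=0$, so $c_{it}=w_t$, and then $\theta_{it}u'(w_t)>\beta R_t/A_{t+1}$ forces $w_t<(\theta_{it}A_{t+1}/(\beta R_t))^{1/\gamma}$. Combining the two cases gives $c_{it}=\min\set{(\theta_{it}A_{t+1}/(\beta R_t))^{1/\gamma},w_t}$, which is \eqref{eq:cit}. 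Uniqueness is immediate: after substituting $x_{it}=(w_t-c_{it})/P_t$, the date-$t$ objective $\theta_{it}u(c_{it})-(\beta R_t/A_{t+1})c_{it}$ is strictly concave in $c_{it}$ on $[0,w_t]$.

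The step I expect to be the main obstacle is making the perturbation argument watertight, namely ensuring the agent's problem is well posed --- finite value and existence of an optimal plan --- so that the first-order inequality is genuinely necessary and not merely formal; this is where one leans on the linear technology, the bounded support $\Theta=[\theta_L,\theta_H]$, and $\beta\in(0,1)$ to control the continuation value. Granting that, the displayed perturbation requires no value-function machinery and, together with strict concavity of the reduced objective in $c_{it}$, characterizes the optimum exactly as in \eqref{eq:cit}.
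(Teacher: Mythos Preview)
Your proof is correct and follows essentially the same approach as the paper: both exploit quasi-linearity to pin the marginal value of date-$(t+1)$ wealth at $1/A_{t+1}$, yielding the Euler inequality $\theta_{it}u'(c_{it})\ge \beta R_t/A_{t+1}$ (with equality when $x_{it}>0$), which is then inverted using $u'(c)=c^{-\gamma}$. The only mechanical difference is that the paper takes separate first-order conditions for $\ell$ and $x$ and substitutes the former (at $t+1$) into the latter, whereas you reach the same inequality in one step via the joint perturbation of $(x_{it},\ell_{i,t+1})$; one minor slip is that in the corner case $x_{it}=0$ you should have the weak inequality $w_t\le(\theta_{it}A_{t+1}/(\beta R_t))^{1/\gamma}$ rather than strict, but this does not affect the conclusion.
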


\begin{proof}
Using the budget constraint \eqref{eq:budget_bewley2} to eliminate consumption, the continuation utility at time $t$ is
\begin{equation*}
    \E_\theta\left[\theta u(A_t\ell+(P_t+D_t)x_{i,t-1}-P_tx)-\ell\right]+\E_t\sum_{s=1}^\infty \beta^s[\theta_{i,t+s}u(c_{i,t+s})-\ell_{i,t+s}],
\end{equation*}
where $\ell=\ell_{it}$ and $x=x_{it}$. The first-order condition with respect to $\ell$ is
\begin{equation}
    \E_\theta[\theta u'(c_{it})]A_t-1=0. \label{eq:foc_l}
\end{equation}
Noting that $(c_{it},x_{it})$ is chosen after $\theta_{it}$ realizes and the asset cannot be shorted, the first-order condition with respect to $x$ is
\begin{equation}
    -\theta_{it} u'(c_{it})P_t+\beta\E_t[\theta_{i,t+1}u'(c_{i,t+1})](P_{t+1}+D_{t+1})\begin{cases*}
        =0 & if $x_{it}>0$,\\
        \le 0 & if $x_{it}=0$.
    \end{cases*} \label{eq:foc_x}
\end{equation}
Dividing both sides of \eqref{eq:foc_x} by $P_t>0$ and using \eqref{eq:foc_l}, we obtain
\begin{equation}
    -\theta_{it}u'(c_{it})+\frac{\beta R_t}{A_{t+1}}\begin{cases*}
        =0 & if $x_{it}>0$,\\
        \le 0 & if $x_{it}=0$.
    \end{cases*} \label{eq:foc_x2}
\end{equation}
If $x_{it}=0$, the budget constraint implies $c_{it}=w_{it}=w_t$. Otherwise, using \eqref{eq:foc_x2} and $u'(c)=c^{-\gamma}$, we obtain \eqref{eq:cit}.
\end{proof}

\begin{proof}[Proof of Proposition \ref{prop:bewley2}]
If we define $\bar{\theta}_t\coloneqq (\beta R_t/A_{t+1})w_t^\gamma$, then \eqref{eq:cit} immediately implies the optimal consumption rule \eqref{eq:crule}. Using this, we obtain
\begin{equation*}
    \theta u'(c_t(\theta))=\theta c_t(\theta)^{-\gamma}=\frac{\beta R_t}{A_{t+1}}\frac{\theta}{\min\set{\theta,\bar{\theta}_t}}=\frac{\beta R_t}{A_{t+1}}\max\set{1,\theta/\bar{\theta}_t}.
\end{equation*}
Integrating both sides with respect to $F$, using the first-order condition \eqref{eq:foc_l}, and rearranging terms, we obtain \eqref{eq:dynamics_A}.

Finally, let us show \eqref{eq:dynamics_P}. By \eqref{eq:crule}, an agent with $\theta\ge \bar{\theta}_t$ satisfies 
\begin{equation}
    w_t=c_t(\theta)=c_t(\bar{\theta}_t)=\left(\frac{\bar{\theta}_tA_{t+1}}{\beta R_t}\right)^{1/\gamma}. \label{eq:wctheta}
\end{equation}
By Lemma \ref{lem:wt}, $w_t$ is common across all agents, so the budget constraint \eqref{eq:budget_bewley2} implies that labor income is $A_t\ell_{it}=w_t-(P_t+D_t)x_{i,t-1}$. Aggregating across agents, using the market clearing conditions \eqref{eq:cclear} and $\int_I x_{it}\diff i=1$, and using \eqref{eq:crule} and \eqref{eq:wctheta}, we obtain
\begin{align*}
    \int_I c_{it}\diff i&=A_t\int_I \ell_{it}\diff i+D_t=w_t-(P_t+D_t)+D_t\\
    \iff P_t&=w_t-\int_I c_{it}\diff i=c_t(\bar{\theta}_t)-\int_\Theta c_t(\theta)\diff F(\theta)\\
    &=\left(\frac{A_{t+1}}{\beta R_t}\right)^{1/\gamma}\int_\Theta\left(\bar{\theta}_t^{1/\gamma}-\min\set{\theta,\bar{\theta}_t}^{1/\gamma}\right)\diff F(\theta)\\
    &=\left(\frac{A_{t+1}}{\beta R_t}\right)^{1/\gamma}\int_\Theta\max\set{0,\bar{\theta}_t^{1/\gamma}-\theta^{1/\gamma}}\diff F(\theta),
\end{align*}
which is \eqref{eq:dynamics_P}.
\end{proof}

\begin{proof}[Proof of Theorem \ref{thm:bewley2}]
If $\bar{\theta}_t=\theta_L$, then $\theta\ge \theta_L=\bar{\theta}_t$ and \eqref{eq:Ptheta} implies $P_t=0$, which is impossible because $D_t>0$ infinitely often. Therefore $\bar{\theta}_t\ge \theta_L+\delta$ by Assumption \ref{asmp:delta}, and \eqref{eq:Ptheta} implies
\begin{equation}
    P_t\ge A_t^{1/\gamma}((\theta_L+\delta)^{1/\gamma}-\theta_L^{1/\gamma})F(\theta_L)\eqqcolon p A_t^{1/\gamma} \label{eq:Pt_lb}
\end{equation}
for the constant $p>0$. By \eqref{eq:necessity_bewley2}, we can take $\epsilon>0$ such that $D_t^{1/t}<G_d+\epsilon<G-\epsilon<A_t^{1/\gamma t} $ for large enough $t$. Therefore we can bound the dividend yield from above as
\begin{equation*}
    \frac{D_t}{P_t}\le \frac{D_t}{p A_t^{1/\gamma}}\le \frac{1}{p}\left(\frac{G_d+\epsilon}{G-\epsilon}\right)^t,
\end{equation*}
which is summable. By Lemma \ref{lem:bubble}, the equilibrium is bubbly. Furthermore, \eqref{eq:Pt_lb} and the definition of $G$ in \eqref{eq:necessity_bewley2} imply that the equilibrium is asymptotically bubbly.
\end{proof}

\printbibliography

\clearpage

\begin{center}
    {\Huge Online Appendix}
\end{center}

\section{Uniqueness of equilibrium in Example \ref{exmp:linear}}\label{sec:unique}

This appendix characterizes the unique equilibrium in Example \ref{exmp:linear}.

\begin{prop}\label{prop:wilson}
If $1/\beta<G_d<G$ in Example \ref{exmp:linear}, then the unique equilibrium asset price is $P_t=aG^t$, and there is a bubble.
\end{prop}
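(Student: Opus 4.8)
The plan is to combine the equilibrium characterization of Theorem~\ref{thm:exist} with a forward-iteration argument that exploits the constant marginal rate of substitution of linear utility. Since $M_t(y,z)\equiv\beta$ here, Theorem~\ref{thm:exist} tells us that every equilibrium price sequence satisfies
\[
  P_t=\min\{\beta(P_{t+1}+D_{t+1}),\,a_t\},
\]
and (using free disposal $P_{t+1}\ge 0$ together with the min) in particular $0\le P_t\le a_t$ for all $t$. First I would verify that $P_t=aG^t$ is indeed an equilibrium: here $P_t=a_t$, so it suffices to check $\beta(P_{t+1}+D_{t+1})\ge a_t$, and since $1/\beta<G$ gives $\beta G>1$ we already have $\beta aG^{t+1}>aG^t$, while $\beta DG_d^{t+1}>0$; the induced consumption $y_t=a_t-P_t=0$ is admissible and the first-order conditions are sufficient by concavity. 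So a candidate equilibrium exists; it remains to show it is the only one.

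For uniqueness I would detrend: put $p_t\coloneqq P_t/(aG^t)\in[0,1]$, $d_t\coloneqq D_t/(aG^t)=(D/a)(G_d/G)^t>0$, and $\kappa\coloneqq\beta G$, noting $\kappa>1$. The recursion becomes $p_t=\min\{\kappa(p_{t+1}+d_{t+1}),\,1\}$. Suppose for contradiction that $p_{t_0}<1$ for some $t_0$. Then $\min\{\kappa(p_{t_0+1}+d_{t_0+1}),1\}=p_{t_0}<1$ forces $\kappa(p_{t_0+1}+d_{t_0+1})=p_{t_0}$, and since $\kappa>1$ and $d_{t_0+1}>0$ this gives $p_{t_0+1}<p_{t_0}/\kappa<1$. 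By induction $p_s<1$ and $p_s=\kappa(p_{s+1}+d_{s+1})$ for every $s\ge t_0$; iterating this linear recursion $n$ steps forward and discarding the nonnegative tail term $\kappa^n p_{t_0+n}$ yields
\[
  p_{t_0}=\kappa^n p_{t_0+n}+\sum_{j=1}^{n}\kappa^{j}d_{t_0+j}\ \ge\ \sum_{j=1}^{n}\kappa^{j}d_{t_0+j}=\frac{D}{a}\Bigl(\frac{G_d}{G}\Bigr)^{t_0}\sum_{j=1}^{n}(\beta G_d)^{j}.
\]
The decisive point is that $\kappa\cdot(G_d/G)=\beta G_d>1$, which holds \emph{precisely} because $1/\beta<G_d$; hence the right-hand side diverges as $n\to\infty$, contradicting $p_{t_0}\le 1$. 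Therefore $p_t=1$ for all $t$, i.e.\ $P_t=aG^t$ is the unique equilibrium. The bubble claim is then immediate: since $P_t=aG^t>0$, Lemma~\ref{lem:bubble} applies, and $\sum_{t\ge 1}D_t/P_t=(D/a)\sum_{t\ge 1}(G_d/G)^t<\infty$ because $G_d<G$.

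I expect the only mildly delicate bookkeeping to be confirming that $p_{t_0}<1$ really does propagate to $p_s=\kappa(p_{s+1}+d_{s+1})$ with $p_s<1$ for \emph{all} later $s$ (this uses $\kappa>1$ together with $d_s>0$ to keep the strict inequality alive under forward iteration); there is no genuine analytic difficulty. The heart of the matter is the elementary but decisive observation that the discounted-dividend partial sums $\sum_j(\beta G_d)^j$ diverge exactly under the hypothesis $1/\beta<G_d$, which is what rules out any price path that eventually drops strictly below the endowment of the young.
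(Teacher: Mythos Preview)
Your proof is correct and shares the same decisive idea as the paper's: once the price drops strictly below the endowment of the young, the Euler equation with constant MRS $\beta$ forces the price to dominate the partial sums $\sum_j(\beta G_d)^j$, which diverge precisely because $1/\beta<G_d$. The only cosmetic difference is that you run a forward induction on the detrended price $p_t$ to show $\{p_t<1\}$ propagates to all later times, whereas the paper runs the dual backward induction on the gross return $R_t$ to show $\{R_t>1/\beta\}$ (equivalently $\{P_t=aG^t\}$) propagates to all earlier times; your organization is if anything a bit more direct.
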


\begin{proof}
Let $P_t>0$ be any equilibrium asset price. Because the old exit the economy, the equilibrium consumption allocation is
\begin{equation*}
    (y_t,z_t)=(aG^t-P_t,bG^t+P_t+D_t).
\end{equation*}
Nonnegativity of consumption implies $P_t\le aG^t$. Let $R_t\coloneqq (P_{t+1}+D_{t+1})/P_t$ be the gross risk-free rate. The first-order condition for optimality together with the nonnegativity of consumption implies that $R_t\ge 1/\beta$, with equality if $P_t<aG^t$. Suppose $R_t>1/\beta$. Then $P_t=aG^t$, so
\begin{align*}
    R_{t-1}\coloneqq \frac{P_t+D_t}{P_{t-1}}&=\frac{aG^t+DG_d^t}{P_{t-1}}\ge \frac{aG^t+DG_d^t}{aG^{t-1}} && (\because P_{t-1}\le aG^{t-1})\\
    &=\frac{aG^{t+1}+DGG_d^t}{aG^t} > \frac{aG^{t+1}+DG_d^{t+1}}{P_t} && (\because G>G_d, P_t=aG^t)\\
    &\ge \frac{P_{t+1}+D_{t+1}}{P_t}=R_t>\frac{1}{\beta}. && (\because P_{t+1}\le aG^{t+1})
\end{align*}
Therefore by induction, if $R_t>1/\beta$, then $R_s>1/\beta$ for all $s\le t$. This argument shows that, in equilibrium, either
\begin{enumerate*}
    \item\label{item:R=} there exists $T>0$ such that $R_t=1/\beta$ for all $t\ge T$, or
    \item\label{item:R>} $R_t>1/\beta$ for all $t$.
\end{enumerate*}
In Case \ref{item:R=}, using \eqref{eq:Pt}, $1/R_t=\beta$ for $t\ge T$, and $1/\beta<G_d$, the asset price at time $t\ge T$ can be bounded from below as
\begin{equation*}
    P_t\ge V_t=\sum_{s=1}^\infty \beta^s DG_d^{t+s}=\sum_{s=1}^\infty DG_d^t(\beta G_d)^s=\infty,
\end{equation*}
which is impossible in equilibrium. Therefore it must be Case \ref{item:R>} and hence $P_t=aG^t$ and $y_t=0$ for all $t$. In this case, we have
\begin{equation*}
    R_t=\frac{aG^{t+1}+DG_d^{t+1}}{aG^t}\ge G>\frac{1}{\beta},
\end{equation*}
so the first-order condition holds and we have an equilibrium, which is unique. Using $P_t=aG^t$, $D_t=DG_d^t$, and applying Lemma \ref{lem:bubble}, we immediately see that there is a bubble.
\end{proof}

\section{Global dynamics of Example \ref{exmp:CRRA}}\label{sec:global}

This appendix provides a step-by-step analysis of the asymptotically bubbly equilibrium in Example \ref{exmp:CRRA}.

Start with the asset pricing equation \eqref{eq:foc}, which is
\begin{equation}
    P_t=\beta\left(\frac{bG^{t+1}+P_{t+1}+D}{aG^t-P_t}\right)^{-\gamma}(P_{t+1}+D). \label{eq:foc_CRRA}
\end{equation}
Define the detrended variable $\xi=(\xi_{1t},\xi_{2t})$ by $\xi_{1t}\coloneqq P_t/(aG^t)$ and $\xi_{2t}\coloneqq D/(aG^t)$. Then \eqref{eq:foc_CRRA} can be rewritten as the system of autonomous nonlinear implicit difference equations
\begin{equation}
    H(\xi_t,\xi_{t+1})=0, \label{eq:diff_eq}
\end{equation}
where $H:\R^4\to \R^2$ is defined by
\begin{subequations}\label{eq:H}
\begin{align}
    H_1(\xi,\eta)&=\beta G^{1-\gamma}\left(\frac{w+\eta_1+G\xi_2}{1-\xi_1}\right)^{-\gamma}(\eta_1+G\xi_2)-\xi_1, \label{eq:H1}\\
    H_2(\xi,\eta)&=\eta_2-\frac{1}{G}\xi_2 \label{eq:H2}
\end{align}
\end{subequations}
with $w\coloneqq b/a$. Let $\xi^*=(\xi_1^*,\xi_2^*)$ be a steady state of the difference equation \eqref{eq:diff_eq}, so $H(\xi^*,\xi^*)=0$. Since $G>1$ and hence $1/G\in (0,1)$, \eqref{eq:H2} implies that $\xi_2^*=0$. Using \eqref{eq:H1}, we can solve for $\xi_1^*$ as
\begin{equation}
    \beta G^{1-\gamma}\left(\frac{w+\xi_1^*}{1-\xi_1^*}\right)^{-\gamma}\xi_1^*-\xi_1^*=0\iff \xi_1^*=\frac{(\beta G^{1-\gamma})^{1/\gamma}-w}{1+(\beta G^{1-\gamma})^{1/\gamma}}>0 \label{eq:xi1}
\end{equation}
because \eqref{eq:nonexist_CRRA} holds. (Note that the other (fundamental) steady state $\xi_1^*=0$ is ruled out by Theorem \ref{thm:necessity}.)

We apply the implicit function theorem at $(\xi,\eta)=(\xi^*,\xi^*)$ to express \eqref{eq:diff_eq} as $\xi_{t+1}=h(\xi_t)$ for $\xi_t$ close to $\xi^*$. Noting that $\xi_2^*=0$, a straightforward calculation using \eqref{eq:H} and \eqref{eq:xi1} implies that
\begin{equation*}
    D_\xi H(\xi^*,\xi^*)=\begin{bmatrix}
        H_{1,\xi_1} & H_{1,\xi_2}\\
        0 & -1/G
    \end{bmatrix} \quad \text{and} \quad D_\eta H(\xi^*,\xi^*)=\begin{bmatrix}
        H_{1,\eta_1} & 0 \\ 0 & 1
    \end{bmatrix},
\end{equation*}
where
\begin{align*}
    H_{1,\xi_1}&=-\gamma\beta G^{1-\gamma}(w+\xi_1^*)^{-\gamma}(1-\xi_1^*)^{\gamma-1}\xi_1^*-1=-1-\gamma\frac{\xi_1^*}{1-\xi_1^*},\\
    H_{1,\eta_1}&=\beta G^{1-\gamma}\left(\frac{w+\xi_1^*}{1-\xi_1^*}\right)^{-\gamma}\left(1-\gamma\frac{\xi_1^*}{w+\xi_1^*}\right)=1-\gamma\frac{\xi_1^*}{w+\xi_1^*},
\end{align*}
and $H_{1,\xi_2}$ is unimportant. Therefore except the special case with $\gamma=1+w/\xi_1^*$, we may apply the implicit function theorem, and for $(\xi,\eta)$ sufficiently close to $(\xi^*,\xi^*)$, we have $H(\xi,\eta)=0\iff \eta=h(\xi)$ for some $C^1$ function $h$ with
\begin{equation}
    Dh(\xi^*)=\begin{bmatrix}
        \lambda_1 & * \\
        0 & \lambda_2
    \end{bmatrix}, \quad \text{where} \quad (\lambda_1,\lambda_2)=\left(\frac{1+\gamma\frac{\xi_1^*}{1-\xi_1^*}}{1-\gamma\frac{\xi_1^*}{w+\xi_1^*}},\frac{1}{G}\right). \label{eq:Dh}
\end{equation}
We thus obtain the following proposition.

\begin{prop}
Let everything be as in Example \ref{exmp:CRRA} and define $\kappa\coloneqq (\beta G^{1-\gamma})^{1/\gamma}$. If $\frac{1}{\gamma}\neq \frac{\kappa-w}{\kappa(1+w)}$, then there exists an asymptotically bubbly equilibrium such that $P_t/(aG^t)$ converges to $\xi_1^*$ in \eqref{eq:xi1}. If in addition
\begin{equation}
    \frac{1}{\gamma}>\frac{1}{2}\frac{\kappa-w}{\kappa}\frac{1-\kappa}{1+w}, \label{eq:determinacy}
\end{equation}
then such an equilibrium is unique.
\end{prop}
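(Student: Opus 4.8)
The plan is to extract everything from the linearization $\xi_{t+1}=h(\xi_t)$ already constructed before the statement, and to translate the two inequalities in the proposition into eigenvalue conditions on $\lambda_1,\lambda_2$ of \eqref{eq:Dh}. Recall an equilibrium is a sequence $\set{\xi_t}$ solving $H(\xi_t,\xi_{t+1})=0$ with the second coordinate pinned at $\xi_{2t}=(D/a)G^{-t}$ and $\xi_{1,0}=P_0/a$ free, and the relevant steady state is $\xi^*=(\xi_1^*,0)$ with $\xi_1^*>0$ by \eqref{eq:nonexist_CRRA}. I would first record two facts. (a) Because the CRRA felicity has $u'(0)=\infty$, the $\min$ in \eqref{eq:foc} never binds: for any $\eta_1\ge0$ and $\xi_2>0$, the right-hand side of \eqref{eq:H1} is strictly decreasing in $\xi_1$ on $[0,1)$, positive at $\xi_1=0$, and tends to $0$ as $\xi_1\to1$, so \eqref{eq:H1}$=0$ has a unique root $\xi_1\in(0,1)$; hence $\xi_{1t}\in(0,1)$ in every equilibrium, and conversely any $\set{\xi_t}$ with $\xi_{1t}\in(0,1)$ and $\xi_{2t}=(D/a)G^{-t}$ solving \eqref{eq:H1}$=0$ is an equilibrium (consumptions are then automatically positive). (b) From $\xi_1^*=(\kappa-w)/(1+\kappa)$ in \eqref{eq:xi1} one gets $1-\xi_1^*=(1+w)/(1+\kappa)$ and $w+\xi_1^*=\kappa(1+w)/(1+\kappa)$, hence $1+w/\xi_1^*=(w+\xi_1^*)/\xi_1^*=\kappa(1+w)/(\kappa-w)$; thus the non-degeneracy $\gamma\neq1+w/\xi_1^*$ (equivalently $H_{1,\eta_1}\neq0$) under which the implicit function theorem applies is exactly $1/\gamma\neq(\kappa-w)/(\kappa(1+w))$. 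Note also $\lambda_1\notin\set{0,1}$ automatically (its numerator exceeds $1$), so $h$ is a local diffeomorphism near $\xi^*$.

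For part (i), existence: since $G>1$, $\lambda_2=1/G\in(0,1)$ is always a stable eigenvalue, so there is a local $C^1$ curve through $\xi^*$ that is a graph $\xi_1=\psi(\xi_2)$ for $\xi_2$ near $0$ with $\psi(0)=\xi_1^*$ along which forward $h$-orbits converge to $\xi^*$: the strong-stable manifold tangent to the $\lambda_2$-eigenvector when $\abs{\lambda_1}>1/G$ (that eigenvector has nonzero second component, so the manifold is a graph over $\xi_2$), and any local graph through $\xi^*$ when $\abs{\lambda_1}\le1/G<1$, in which case $\xi^*$ is a sink. To build an equilibrium for an arbitrary value of $D/a$, pick $T$ large enough that $\xi_{2T}=(D/a)G^{-T}$ lies in the domain of $\psi$, set $\xi_{1T}=\psi(\xi_{2T})$ so that the orbit from $T$ converges, and extend the solution backward to $t=0$ via the unique-root property of (a), which keeps every $\xi_{1t}$ in $(0,1)$. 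The resulting $\set{\xi_t}$ is an equilibrium with $P_t/(aG^t)=\xi_{1t}\to\xi_1^*>0$, hence asymptotically bubbly (and genuinely bubbly by Lemma \ref{lem:bubble}, since $P_t$ grows geometrically while $D_t\equiv D$; equivalently, invoke Theorem \ref{thm:necessity}).

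The core of part (ii) is the identity ``\eqref{eq:determinacy} $\iff\abs{\lambda_1}>1$'', which I would prove by splitting on whether $H_{1,\eta_1}=1-\gamma\xi_1^*/(w+\xi_1^*)$ is positive or negative (the numerator of $\lambda_1$ being positive). If $H_{1,\eta_1}>0$, then $\lambda_1>1$ automatically (since $\gamma\xi_1^*/(1-\xi_1^*)>-\gamma\xi_1^*/(w+\xi_1^*)$), and \eqref{eq:determinacy} also holds automatically: its right side equals $\frac{1}{2}(1-\kappa)\cdot(\kappa-w)/(\kappa(1+w))<(\kappa-w)/(\kappa(1+w))<1/\gamma$, the last step because $H_{1,\eta_1}>0\iff\gamma<\kappa(1+w)/(\kappa-w)$. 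If $H_{1,\eta_1}<0$, then $\lambda_1<0$ and $\abs{\lambda_1}>1$ is equivalent to $2>\gamma\xi_1^*\bigl(1/(w+\xi_1^*)-1/(1-\xi_1^*)\bigr)$; inserting the expressions for $\xi_1^*$, $1-\xi_1^*$, $w+\xi_1^*$ from (b) collapses the right side to $\gamma(\kappa-w)(1-\kappa)/(\kappa(1+w))$, so the inequality is precisely \eqref{eq:determinacy}. Granting $\abs{\lambda_1}>1$, $\xi^*$ is a hyperbolic saddle whose stable manifold is genuinely one-dimensional and is a graph over $\xi_2$; an equilibrium converging to $\xi^*$ must have its tail on this manifold, which—$\xi_{2t}$ being fixed—determines the tail of $\xi_{1t}$ and then, by the backward uniqueness in (a), all of $\xi_{1t}$. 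Equivalently, for two such equilibria $\delta_t\coloneqq\xi_{1t}-\xi'_{1t}$ satisfies $\delta_{t+1}=\partial_{\xi_1}h_1(\cdot)\,\delta_t$ with $\partial_{\xi_1}h_1(\cdot)\to\lambda_1$, so $\abs{\lambda_1}>1$ forces $\abs{\delta_t}\to\infty$ unless $\delta\equiv0$; uniqueness follows.

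I expect the algebraic identity ``\eqref{eq:determinacy} $\iff\abs{\lambda_1}>1$'', with its case split on the sign of $H_{1,\eta_1}$, to be the main obstacle: the expressions in \eqref{eq:Dh} are not symmetric, and one must see that both $2>\gamma\xi_1^*(1/(w+\xi_1^*)-1/(1-\xi_1^*))$ and the automatic case straighten out only after the substitutions $\xi_1^*=(\kappa-w)/(1+\kappa)$, $1-\xi_1^*=(1+w)/(1+\kappa)$, $w+\xi_1^*=\kappa(1+w)/(1+\kappa)$. The remaining care-points are minor: verifying that the backward-extended iterates in part (i) stay admissible, which is immediate from the unique-root property in (a), and noting that the non-hyperbolic borderline $\lambda_1=-1$ is harmless for part (i) because $\lambda_2=1/G<1$ is still a genuine contraction, so the convergent curve $\psi$—now a strong-stable manifold—still exists.
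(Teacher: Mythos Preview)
Your proposal is correct and follows essentially the same route as the paper: linearize at $\xi^*$ via the implicit function theorem, read off the eigenvalues \eqref{eq:Dh}, invoke Hartman--Grobman/stable-manifold arguments together with a backward extension (the paper cites Lemma~\ref{lem:backward_induction}, you spell out the unique-root property) for existence, and match the count of free initial conditions to unstable eigenvalues for uniqueness. Your case split on the sign of $H_{1,\eta_1}$ and the substitutions in~(b) reproduce exactly the paper's computation that \eqref{eq:determinacy} is equivalent to $\lvert\lambda_1\rvert>1$; if anything you are slightly more careful than the paper about the borderline $\lambda_1=-1$, which the paper's blanket ``eigenvalues not on the unit circle'' does not separately address.
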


\begin{proof}
By the implicit function theorem, the equilibrium dynamics can be expressed as $\xi_{t+1}=h(\xi_t)$ if $\xi_t$ is sufficiently close to $\xi_1^*$. To study the local stability, we apply the Hartman-Grobman theorem. Since $G>1$, one eigenvalue of $Dh(\xi^*)$ is $\lambda_2=1/G\in (0,1)$. If $1-\gamma\frac{\xi_1^*}{w+\xi_1^*}>0$, then clearly $\lambda_1>1$. If $1-\gamma\frac{\xi_1^*}{w+\xi_1^*}<0$, then
\begin{align*}
    \lambda_1=\frac{1+\gamma\frac{\xi_1^*}{1-\xi_1^*}}{1-\gamma\frac{\xi_1^*}{w+\xi_1^*}}<-1&\iff 1+\gamma\frac{\xi_1^*}{1-\xi_1^*}>-1+\gamma\frac{\xi_1^*}{w+\xi_1^*}\\
    &\iff \frac{1}{\gamma}>\frac{(\kappa-w)(1-\kappa)}{2\kappa(1+w)}
\end{align*}
using the definition of $\kappa$ and $\xi_1^*$. Furthermore, we have
\begin{equation*}
    1-\gamma\frac{\xi_1^*}{w+\xi_1^*}=0\iff \frac{1}{\gamma}=\frac{\kappa-w}{\kappa(1+w)}.
\end{equation*}
Therefore if $\frac{1}{\gamma}\neq \frac{\kappa-w}{\kappa(1+w)}$, the eigenvalues of $Dh(\xi^*)$ are not on the unit circle, so the Hartman-Grobman theorem \citep[Theorem 4.6]{Chicone2006} implies that for sufficiently large $T$ (so that $\xi_{2T}=D/(aG^T)$ is sufficiently close to the steady state value 0), there exists an equilibrium path $\set{\xi_t}_{t=T}^\infty$ starting at $t=T$ converging to $\xi^*$. A backward induction argument similar to Lemma \ref{lem:backward_induction} implies the existence of an equilibrium path $\set{\xi_t}_{t=0}^\infty$ starting at $t=0$. Finally, if \eqref{eq:determinacy} holds, then $\abs{\lambda_1}>1>\lambda_2>0$, so the number of free initial conditions (1, because $P_0$ is endogenous) agrees with the number of unstable eigenvalues (1, because $\abs{\lambda_1}>1$) and the equilibrium path is unique.
\end{proof}

\end{document}